\newtheorem{theorem}{Theorem}
\newtheorem{lemma}{Lemma}
\newtheorem{claim}{Claim}
\newtheorem{definition}{Definition}
\newtheorem{condition}{Condition}
\newcommand*{\Scale}[2][4]{\scalebox{#1}{$#2$}}%
\newcommand{\rv}[1]{{\color{black}#1}}
\newcommand{\rvv}[1]{{\color{black}#1}}
\newcommand*{\set}{\fontfamily{qag}\selectfont}
\DeclareTextFontCommand{\textset}{\set}
\newcommand{\CPX}{{\textset{CKT}(r,cn^s)}}
\newcommand{\intg}{{\textbf{int}}}
\newcommand{\Cin}{{\mathcal{C}_{\mathrm{in}}}}
\newcommand{\Cout}{{\mathcal{C}_{\mathrm{out}}}}
\newcommand{\Be}{{\mathcal{B}_{pn}(0)}}
\begin{document}


\setboolean{extend_v}{false} 
\setboolean{editor}{false} 

\title{Channel Capacity for Adversaries with Computationally Bounded Observations \\
\thanks{This work was supported in part by the Office of Naval Research under ONR Grant N00014-21-1-2472, by NSF Grants CCF-1618475, CCF-1816013, CCF-2008527, CNS-2107363, CIF-2309887 and also by National Spectrum Consortium (NSC) under grant W15QKN-15-9-1004. 

E. Ruzomberka is with the Department of Electrical and Computer Engineering, Princeton University, USA (email: er6214@princeton.edu). This work was done while E. Ruzomberka was with the Elmore Family School of Electrical and Computer Engineering, Purdue University, West Lafayette, USA. C.-C. Wang and D. J. Love are with the Elmore Family School of Electrical and Computer Engineering, Purdue University, West Lafayette, USA (email: {chihw,djlove}@purdue.edu). A preliminary version of the work was presented at the 2022 IEEE International Symposium on Information Theory \cite{Ruzomberka2022}.}
}

\author{\IEEEauthorblockN{Eric Ruzomberka, Chih-Chun Wang and David J. Love}
}


\maketitle


\begin{abstract}
We study reliable communication over point-to-point adversarial channels in which the adversary can observe the transmitted codeword via some function that takes the $n$-bit codeword as input and computes an $rn$-bit output for some given $r \in [0,1]$. We consider the scenario where the $rn$-bit observation is \textit{computationally bounded} -- the adversary is free to choose an arbitrary observation function as long as the function can be computed using a polynomial amount of computational resources. This observation-based restriction differs from conventional channel-based computational limitations, where in the later case, the resource limitation applies to the computation of the (adversarial) channel error/corruption. For all $r \in [0,1-H(p)]$ where $H(\cdot)$ is the binary entropy function and $p$ is the adversary's error budget, we characterize the capacity of the above channel and find that the capacity is identical to the completely oblivious setting ($r=0$). This result can be viewed as a generalization of known results on myopic adversaries and on channels with active eavesdroppers for which the observation process depends on a fixed distribution and fixed-linear structure, respectively, that cannot be chosen arbitrarily by the adversary.
\end{abstract}

\begin{IEEEkeywords}
Adversarial channels, capacity, arbitrarily varying channels
\end{IEEEkeywords}

  \ifthenelse{\boolean{editor}}{
  { \color{red} \textbf{Things to do before submitting.}

  }}{}
 
  \section{Introduction} \label{sec:intro}

  Beginning with Shannon's seminal paper \cite{Shannon1948ACommunication}, early channel coding research observed that fundamental coding limits are highly sensitive to channel modeling assumptions. This sensitivity is demonstrated by a gap in capacity between the two classical models: the \textit{Shannon model} in which channel errors are random and follow a known distribution and the \textit{Hamming model} in which error patterns are worst-case for some fixed number of bit errors. In the design of robust codes, the more conservative Hamming model is particularly attractive as it makes no assumptions about the channel distribution and thus any resulting conclusion is \textit{robust} against a wide variety of channel imperfections. The downside of the Hamming model, however, is that it admits a smaller capacity than the Shannon model. In many cases, the gap in capacity is large \cite{McEliece1977a}.
  
  \subsection{Closing the gap}
  
  Recent research efforts have made progress in closing this gap by considering settings in between the two classical models. Ideally, the following two properties hold for a good channel model:
  \begin{enumerate}
  \item[]\textbf{Property 1:} The channel is \textit{mild} in the sense that its capacity coincides with the Shannon model capacity. 
  \item[]\textbf{Property 2:} The channel inherits conservative aspects of the Hamming model. In particular, the channel may be altered in an arbitrary manner unknown to the communicating parties.
  \end{enumerate}
  
  In the following Section \ref{sec:intro_cmplx}, we focus on two different approaches which have had some success towards producing good channel models. These approaches are 1) to bound the channel's computing power (i.e., computational complexity) \cite{Lipton1994ATheory,Guruswami2016OptimalChannels} and 2) to bound the information known to the channel about the communication scheme \cite{Sarwate2010a,Dey2019a,Budkuley2020SymmetrizabilityAVCs,Chen2015AChannels,Dey2016AErasures,Suresh2021Stochastic-AdversarialSnooping,Csiszar1989CapacityChannels,Csiszar1988TheConstraints,Langberg2008ObliviousCapacity}.

  \subsection{Complexity Bounded Models vs Partially Oblivious Models} \label{sec:intro_cmplx}
  
  Consider a transmitter Alice who wishes to communicate a message $\rv{m_0}$ drawn randomly from a set of $M$ possible messages over a noisy channel to a receiver Bob. To protect the message from noise corruption, Alice encodes $m_0$ into an $n$-bit codeword $\bm{x}$ of rate $R = (1/n)\log M $ and transmits $\bm{x}$ over the channel. The channel adds an $n$-bit error vector $\bm{e}$ to $\bm{x}$, and Bob receives the $n$-bit word $\bm{y} = \bm{x} \rv{\oplus} \bm{e}$. The channel is controlled by an \textit{adversary} who chooses $\bm{e}$ to prevent reliable (unique) decoding by Bob. For an error budget $p \in (0,1/2)$, the adversary can induce at most $pn$ bit flips, i.e., the Hamming weight of $\bm{e}$ must be bounded above by $pn$. \rv{We focus on  \textit{deterministic codes} in which the codeword $\bm{x}$ is a deterministic function of the message $m_0$, and in turn, consider the \textit{average error criterion} in which decoding is permitted to fail over an arbitrarily small fraction of Alice's messages.\footnote{\rv{Alternatively, one may consider \textit{stochastic codes} in which $\bm{x}$ is a function of both $m_0$ and a private random key known only to Alice. Note that a deterministic code is a degenerate stochastic code where the set of private random keys is empty. Compared to the average error criterion, a stronger decoding criterion which is of interest but not considered here is the \textit{maximum error criterion} in which decoding is permitted to fail for an arbitrarily small fraction of Alice's keys.}}}

  \rv{We define the Shannon model capacity as $C_{\mathrm{Shannon}(p)} = 1 - H(p)$ for $p \in [0,1/2]$ where $H(\cdot)$ is the binary entropy function, which coincides with the capacity of a binary symmetric channel with crossover probability $p \in [0,1/2]$. In general, $C_{\mathrm{Shannon}}(p)$ is an upper bound of any rate achievable by any communication scheme used by Alice and Bob, but may be tight depending on additional assumptions made about the adversary's capabilities and limitations. A surprising result of Csisz\'{a}r and Narayan \cite{Csiszar1988TheConstraints} is that $C_{\mathrm{Shannon}}(p)$ is the channel capacity when the adversary must choose error vector $\bm{e}$ without knowledge of the codeword $\bm{x}$ or message $m_0$.}

  In the \textit{computationally bounded model} (first proposed by Lipton \cite{Lipton1994ATheory}), the adversary takes $\bm{x}$ as input and computes $\bm{e}$ using limited computational resources, e.g., via an algorithm that takes a bounded number of computational steps. This model has the appeal of sufficiently describing practical channels, including channels with memory and channels governed by natural, but unknown processes. However, the computationally bounded model can be \textit{severe} -- an impossibility result of Guruswami and Smith \cite{Guruswami2016OptimalChannels} is that the model's capacity can be less than the Shannon capacity, and can even be $0$ when the latter is positive. Thus, Property~1 does not hold for the computationally bounded model.\footnote{\rv{Specifically, a channel which uses logarithmic space to process the codeword $\bm{x}$ has a capacity of $0$ when $p \geq 1/4$. In light of this impossibility result, recent studies on the computationally bounded model study either unique decoding when $p \in (0,1/4)$ \cite{Shaltiel2021_b,Shaltiel2022} or relax the objective of unique decoding and instead consider list-decoding when $p \in (0,1/2)$ \cite{Kopparty2019,Shaltiel2021}. The works \cite{Guruswami2016OptimalChannels,Shaltiel2021_b,Shaltiel2022,Kopparty2019,Shaltiel2021} employ stochastic codes together with pseudorandom sequences to complicate the channels task of computing an effective error pattern $\bm{e}$. In contrast to the above works, we consider deterministic codes and unique decoding for all $p \in (0,1/2)$.}}

  Another existing approach is the \textit{partially oblivious model}, where the adversary chooses $\bm{e}$ based on incomplete side-information about the transmitted codeword $\bm{x}$. This model includes myopic channels, e.g., \cite{Sarwate2010a,Dey2019a,Budkuley2020SymmetrizabilityAVCs}, causal channels, e.g., \cite{Chen2015AChannels,Dey2016AErasures,Suresh2021Stochastic-AdversarialSnooping}, channels with active eavesdroppers, e.g., \cite{Wang2017OnChannel}, and some arbitrarily varying channels (AVCs), e.g., \cite{Csiszar1989CapacityChannels,Csiszar1988TheConstraints}. \rv{We focus on the following setting which captures a special case of the partially oblivious model}: for $r \in [0,1]$ and some observation function $f_n:\{0,1\}^n \rightarrow \{0,1\}^{rn}$, the adversary makes an $rn$-bit observation $f_n(\bm{x})$ of codeword $\bm{x}$, and in turn, chooses $\bm{e}$. \rv{We emphasize that the error vector can depend non-causally on the $rn$-bit observation and, thus, causal channels are not captured by our setting.} The special cases $r = 0$ and $r=1$ correspond to no information (i.e., completely oblivious) and perfect information (i.e., omniscient), respectively. 
  
  Property 1 can hold for the partially oblivious model when $r$ is sufficiently small.\footnote{This fact is an analog to a channel being \textit{sufficiently myopic} (see \cite{Dey2019a}).} However, Property 2 does not hold for many partially oblivious channels in the literature. \rv{While all partially oblivious channels allow error vector $\bm{e}$ to chosen in an arbitrarily manner unknown to Alice and Bob, the observation function $f_n$ is not always chosen arbitrarily. For example, a myopic channel in our setting corresponds to $f_n$ being drawn randomly from some distribution known to Alice and Bob.} \textit{For Property 2 to hold, however, we must allow $f_n$ to be chosen arbitrarily and require Alice and Bob to devise their communication scheme without knowledge of $f_n$}. This is equivalent to the adversary choosing a \textit{worst-case} $f_n$ for a fixed $r$, a model defined and studied by Langberg \cite{Langberg2008ObliviousCapacity} under the name of the \textit{$(1-r)$-oblivious channel}.\footnote{\rv{An alternative interpretation of Property 2 is that the adversary may choose a worst-case $f_n$ from some \textit{subset of functions} from $\{0,1\}^n$ to $\{0,1\}^{rn}$, and where the subset is known to Alice and Bob. In the $(1-r)$-oblivious channel, this subset is the improper subset of all functions. Another channel that satisfies Property 2 under this alternative interpretation is the adversarial wiretap channel of type II \cite{Wang2017OnChannel}, in which $f_n$ is chosen from the set of all linear mappings from $\{0,1\}^n$ to $\{0,1\}^{rn}$. Depending on the specific application for which the channel model serves, it may be unrealistic to assume that this subset contains only linear mappings.}} The capacity of the $(1-r)$-oblivious channel remains an open problem, where the best known lower bound is given by \cite{Langberg2008ObliviousCapacity}.

  \subsection{This Work} \label{sec:contribution}
  
  In this paper, we define and study a channel model that has qualities of both the computationally bounded model and the partially oblivious model. Roughly speaking, we define this model by requiring the adversary to observe $\bm{x}$ via an $rn$-bit observation function $f_n$ that is computationally bounded.
  
 Specifically, for fixed positive integers $c$ and $s$, the adversary chooses a sequence of observation functions $f_n(\cdot)$, $\forall n \geq 1$ that belongs to $\textset{CKT}(r,cn^s)$ -- the set of observation functions with $n$ input bits and $rn$ output bits that can be computed by a Boolean circuit with at most $cn^s$ gates. We allow the choice of $f_n$ to be unknown to Alice or Bob. On the other hand, the $f_n$ chosen by the adversary can depend on the codebook of Alice but cannot depend on the actual message being sent. Using the observation function $f_n$ of its choice, the adversary observes $f_n(\bm{x})$ and chooses $\bm{e}$ with no computational bound. \rv{Our model differs from the prior works \cite{Lipton1994ATheory,Guruswami2016OptimalChannels,Shaltiel2021_b,Shaltiel2022,Kopparty2019,Shaltiel2021}, where in the latter, the channel has a complete view of $\bm{x}$ but must choose $\bm{e}$ subject to a computational bound.} We refer to our adversary as a \textit{$\textset{CKT}(r,cn^s)$-oblivious adversary}. By construction, Property 2 holds for a channel controlled by a $\textset{CKT}(r,cn^s)$-oblivious adversary due to $f_n$ being unknown to Alice or Bob.
  
Our computational restriction is modeled after realistic adversarial channels. A channel controlled by a $\textset{CKT}(r,cn^s)$-oblivious adversary closely approximates a \textit{$(1-r)$-oblivious channel} \cite{Langberg2008ObliviousCapacity} (i.e., the definition therein is equivalent to the $\textset{CKT}(r,\infty)$-oblivious adversary) without weakening the power of the adversary too much. Indeed, the adversary is quite strong. To illustrate its strength, if for a sequence of functions $\{f_n\}_{n=1}^{\infty}$ satisfying  $\forall c,s \geq 1$ there exists a finite $n_0$ such that for all $n \geq n_0$ $f_n \not\in \CPX$, then the sequence is widely regarded to be an \textit{infeasible computation} \cite{Sipser2006IntroductionComputation}. The technical value of the computational constraint is to bound the number of observation functions that the adversary can choose from while still including a wide range of important observation functions in the problem formulation.

In this paper, for any \rv{fixed finite} integers $c,s$, and all $p \in (0,1/2)$ and $r \in [0,1-H(p))$, we study the channel controlled by a $\CPX$-oblivious adversary with error budget $p$ by characterizing the channel capacity $C(p,r,\rv{c,s})$. As our main result, we show that $C(p,r,\rv{c,s})$ is exactly $1-H(p)$, and thus the capacity is independent of parameters $c,s,r$ for the stated parameter regime. It follows that $C(r,p,\rv{c,s})$ coincides with the Shannon model capacity $C_{\textrm{Shannon}}(p)$ and thus Property 1 holds. \rv{Futhermore, in this regime, deterministic codes are optimal.}\footnote{\rv{For the parameter regime $r \geq 1-H(p)$ and $c\geq1$, $s \geq 1$, deterministic codes may not be optimal. We remark that our proof techniques, which involve a random coding argument over a set of deterministic codes, only work for the regime $r < 1- H(p)$.  For $r \geq 1 - H(p)$, the channel to the adversary is ``less noisy'' than the channel to Bob, such that when a deterministic code is used at rate less than $1-H(p)$, the adversary is likely to decode Alice's codeword (with high probability over the code selection) and thus the adversary is effectively omniscient (i.e., $r=1$). For omniscient adversaries, the GV bound of $1-H(2p) $\cite{Gilbert1952a,Varshamov1957a} is the best-known achievable rate. However, when a stochastic code is used, one may find achievable rates exceeding the GV bound for some values of $r \geq 1-H(p)$ and $p \in (0,1/2)$. In fact, in some channel models, stochastic codes are known to achieve rates significantly larger than the GV bound for certain parameters when the channel to the adversary is ``less noisy'' than the channel to Bob (see, e.g., \cite{Dey2019}).}} This main result was first presented at the International Symposium on Information Theory (ISIT) \cite{Ruzomberka2022}.

The remainder of this paper is organized as follows. In Section \ref{sec:model}, we present the precise channel model and main result. The main result is discussed in the context of related work on myopic channels, channels controlled by active eavesdroppers, and $(1-r)$-oblivious channels. In Section \ref{sec:outline_and_overview}, we present the overview of the proof of the main result and discuss our proof techniques in the context of related work. In Section \ref{sec:analysis}, we present the detailed proof of the main result.

  \section{Channel Model \& Results} \label{sec:model}
  \subsection{Notation}
  All vectors are in bold notation. Let $d(\bm{z},\bm{z}')$ denote the Hamming distance between two binary vectors $\bm{z}$ and $\bm{z}'$. For $t>0$ and $\bm{z} \in \{0,1\}^n$ we define $\mathcal{B}_{t}(\bm{z}) = \{\bm{z}' \in \{0,1\}^n: d(\bm{z},\bm{z}') \leq t \}$ to be the Hamming ball of radius $t$ centered around $\bm{z}$. The functions $\log(\cdot)$ and $\ln(\cdot)$ denote the base $2$ and base $e$ logarithms, respectively. For a number $K \geq 1$, let $[K]$ denote the set $\{1,\ldots, \lfloor K \rfloor \}$. For an integer blocklength $n \geq 1$ and rate $R \in (0,1]$, an $(n,Rn)$ codebook $\mathcal{C}_n$ is a function $\mathcal{C}_n: [2^{R n}] \rightarrow \{0,1\}^n$. When useful, we will think of $\mathcal{C}_n = \{\mathcal{C}_n(1), \ldots, \mathcal{C}_n(2^{R n})\}$ as a subset of $\{0,1\}^{n}$ and the $i$th codeword $\mathcal{C}_n(i)$ as a vector in $\{0,1\}^{n}$. For a number $\rho>0$ and a binary vector $\bm{a} =(a_1,\ldots,a_{\rho n}) \in \{0,1\}^{\rho n}$, we define the integer representation of $\bm{a}$ to be the integer $\intg(\bm{a}) = 1+\sum_{j=1}^{\rho n} \rv{a_j} 2^{j-1} \in [2^{\rho n}]$. For functions $g(n)$ and $h(n)$, we adopt standard ``little O'', ``big O'' and ``big Omega'' notation: $g = o(h(n))$ if $\lim_{n \rightarrow \infty} \frac{g(n)}{h(n)} = 0$, $g = O(h(n))$ if $\exists k$ s.t. for large enough $n$, $g(n) \leq k h(n)$, and $g = \Omega(h(n))$ if $\exists k$ s.t. for large enough $n$, $g(n) \geq k h(n)$.
  
  \subsection{Channel Model}
  
  \textbf{Alice's Encoding:} A transmitter Alice communicates over a noisy channel with a receiver Bob in the following manner. For a rate $R \in (0,1]$ and integer blocklength $n \geq 1$, Alice randomly draws a message $m_0$ uniformly from a message set $[2^{Rn}]$. For a $(n,Rn)$ codebook $\mathcal{C}_n$, Alice encodes $m_0$ into a codeword $\bm{x} \in \{0,1\}^\rv{n}$ by computing $\bm{x} = \mathcal{C}_n(m_0)$. Since $\bm{x} = \mathcal{C}_n(m_0)$ is a deterministic function of $m_0$, we say that Alice is using a deterministic code. After encoding, Alice transmits $\bm{x}$ into the channel.
  
  \textbf{Bob's Decoding:} At the channel output, Bob receives word $\bm{y} = \bm{x} \rv{\oplus} \bm{e}$ where $\bm{e} \in \{0,1\}^n$ is an error vector added by the channel \rv{and where the symbol `$\rv{\oplus}$' denotes the bit-wise XOR}. Bob outputs a message estimate $\hat{m}$ based on the received word $\bm{y}$. We say that a decoding error occurs if $\hat{m} \neq m_0$. 
  
  \textbf{Adversary:} The channel is controlled by an adversary who has side-information about Alice's and Bob's communication scheme but not exact knowledge of the transmitted message $m_0$. In particular, the adversary knows Alice's codebook $\mathcal{C}_n$ and is \textit{partially oblivious} to the transmitted codeword $\bm{x}$. By partially oblivious, we mean that for observation rate $r \in [0,1]$, the adversary randomly draws a function $f_n:\{0,1\}^n \rightarrow \{0,1\}^{rn}$ with probability $U_f(f_n)$ and observes a realization $\bm{\psi}$ of the random variable $\bm{\Psi} = \bm{\Psi}(m_0) = f_n(\mathcal{C}_n(m_0)) = f_n(\bm{x})$.\footnote{The fact that $\bm{\Psi}$ is a random variable follows from its dependency on the random variable $m_0$.} Using its knowledge of $\mathcal{C}_n$ but without knowledge of the realization of $m_0$, the adversary randomly draws $f_n$ with probability $U_f(f_n)$. Due to the adversary's computational bound, for positive integers $c,s$, $U_f(f_n)=0$ for all $f_n \not\in \textset{CKT}(r,cn^s)$ (we provide a rigorous definition of $\textset{CKT}(r,cn^s)$ in Section \ref{sec:CPX_def}). Neither the actual choice of $f_n$ nor the distribution $U_f(\cdot)$ is revealed to Alice or Bob. As a result, the model falls into the adversarial setting in which the adversary has full freedom of using any specific function (by choosing $U_f(\cdot)$ to be a delta distribution) or any random selection of functions (by choosing $U_f(\cdot)$ to be of general distribution). 
  
  Finally, using knowledge of the codebook $\mathcal{C}_n$ and observation function $f_n$, the adversary chooses the conditional probability $U_{\bm{e}|\bm{\psi}}(\bm{e}|\bm{\psi})$ that the error vector $\bm{e}$ is added to the channel given that it observes $\bm{\Psi}(m_0) = \bm{\psi}$. For $p\in (0,1/2)$, we impose an error budget constraint such that $\bm{e}$ has a Hamming weight bounded above by $pn$, i.e., $U_{\bm{e}|\bm{\psi}}(\bm{e}|\bm{\psi})=0$ for all $\bm{e} \not\in \Be$ and $\bm{\psi} \in \{0,1\}^{rn}$. We refer to the above adversary as the $\textset{CKT}(r,cn^s)$-oblivious adversary with error budget $p$. We note that the distribution $U_f(f_n)$ and $U_{\bm{e}|\bm{\psi}}(\bm{e}|\bm{\psi})$ are used so that some randomness can be embedded in the adversary's action. For simplicity, the reader may assume that the adversary chooses deterministically an observation function $f_n\in \CPX$, uses the observation function to observe $\bm{\Psi}=f_n(\mathcal{C}_n(m_0)) = \bm{\psi}$, and then chooses deterministically an error vector $\bm{e}$ under the given error budget $p$.

  \subsection{Adversary's Computational Bound} \label{sec:CPX_def}
 
  For observation rate $r \in [0,1]$, positive integers $c,s,n$, we define the set $\textset{CKT}(r,cn^s)$. Let $\mathcal{F}_{n,r}$ denote the set of all Boolean functions of the form $f_n:\{0,1\}^n \rightarrow \{0,1\}^{r n}$. To define $\textset{CKT}(r,cn^s)$, we first define the circuit complexity of a function $f_n \in \mathcal{F}_{n,r}$.
  
  A Boolean circuit $B_n$ is an acyclic directed graph where each node is either an input node (with in-degree 0) or a logic gate (with in-degree 2). All nodes in $B_n$ have out-degree $1$ with unbounded fan-out and each logic gate computes an arbitrary Boolean function from $\{0,1\}^2$ to $\{0,1\}$. The \textit{size} of $B_n$ is the total number of nodes in $B_n$ (input nodes and logic gates). Note that an observation function $f_n \in \mathcal{F}_{n,r}$ can be computed by some Boolean circuit that takes $n$ bits as input and produces $rn$ bits as output. The \textit{circuit (size) complexity} of an observation function $f_n \in \mathcal{F}_{n,r}$ is the size of the smallest size Boolean circuit $B_n$ that can compute $f_n$. We define $\textset{CKT}(r,cn^s)$ to be the set of all functions $f_n \in \mathcal{F}_{n,r}$ with a circuit complexity of at most $cn^s$. In modern complexity theory, the study of circuit complexity is a common approach for proving lower bounds on the complexity of certain problems \cite{Sipser2006IntroductionComputation}.
  
  
  \subsection{Capacity}
  
  For an $(n,Rn)$ codebook $\mathcal{C}_n$, the (average) probability of decoding error is defined as
  \begin{equation} \label{eq:prob_error_def}
  \begin{aligned}
  & \bar{P}_e(\mathcal{C}_n) = \\
  &\max_{f_n \in \CPX} \mathbb{E}_{\bm{\Psi}} \left[ \max_{\bm{e} \in \Be} \mathbb{P}_{m_0}(\hat{m}(\bm{e},m_0) \neq m_0 |\bm{\Psi} = \bm{\psi}) \right]
  \end{aligned}
  \end{equation}
  where the probability measure $\mathbb{P}_{m_0}(\cdot)$ is w.r.t. the distribution of $m_0$, and the expectation $\mathbb{E}_{\bm{\Psi}}[\cdot] = \sum_{\bm{\psi} \in \{0,1\}^{rn}} (\cdot) \mathbb{P}_{m_0}(\bm{\Psi}(m_0) = \bm{\psi})$. Given the above channel model, we can define achievable rate in the usual way.
  \begin{definition}[Achievable Rate] 
  For $p \in (0,1/2)$, $r \in [0,1]$, and positive integers $c,s$, a rate $R \in (0,1]$ is said to be ($c,s$)-achievable if for any $\epsilon_e >0$, there exists an $n_0$ such that for all $n \geq n_0$, there exists an $(n,Rn)$ codebook $\mathcal{C}_n$ such that $\bar{P}_e(\mathcal{C}_n) \leq \epsilon_e$.
  \end{definition}
  For $p \in (0,1/2)$, $r \in [0,1]$, and positive integers $c,s$, we define the capacity $C(p,r,\rv{c,s})$ of a channel controlled by a $\CPX$-oblivious adversary as the supremum of $(c,s)$-achievable rates. Let $C(p,r,\infty,\infty)$ denote the capacity of $(1-r)$-oblivious channel for which there is no constraint on the computational complexity when computing the $rn$-bit observation, see \cite{Langberg2008ObliviousCapacity}. 
  
  \subsection{Main Result} \label{sec:main_result}
  
  Under the above model, the Shannon capacity is $C_{\mathrm{Shannon}}(p) = 1-H(p)$ where $H(p) = -p \log p - (1-p) \log(1-p)$ is the binary entropy function \cite{Csiszar1988TheConstraints,Langberg2008ObliviousCapacity}. The following result shows that Property 1 holds for our model for a wide range of $r$.
  
  \begin{theorem} \label{thm:main_result}
  For $p \in (0,1/2)$, $r \in [0,C_{\mathrm{Shannon}}(p))$, and $c,s \geq 1$, $C(p,r,\rv{c,s}) = C(p,0,\rv{c,s}) = C(p,0,\infty,\infty) = C_{\mathrm{Shannon}}(p)$.
  \end{theorem}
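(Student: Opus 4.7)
The plan is to prove the converse and achievability separately. The converse $C(p,r,c,s) \leq 1-H(p)$ is immediate: a $\CPX$-oblivious adversary is at least as strong as the completely oblivious one ($r=0$), whose capacity is $1-H(p)$ by Csisz\'{a}r--Narayan, so $C(p,r,c,s) \leq C(p,0,c,s) \leq C_{\mathrm{Shannon}}(p)$. The substantive half is achievability: for every $R<1-H(p)$, I would exhibit a deterministic code of rate $R$ whose worst-case error probability against every $f_n \in \CPX$ and every $\bm{e}\in \Be$ vanishes with $n$.

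The approach is a random coding argument. Draw the $2^{Rn}$ codewords i.i.d.\ uniformly from $\{0,1\}^n$, and let Bob run minimum-distance decoding. Since $r<1-H(p)$, I choose $R$ with $\max(r,0)<R<1-H(p)$, together with small constants $\delta,\epsilon>0$ such that $R-r-\delta-\tfrac{\epsilon}{2}>0$ and $R<1-H(p)-\epsilon$. For a fixed $f_n\in\CPX$, partition the messages into preimage classes $S_{\bm{\psi}}=\{m: f_n(\mathcal{C}_n(m))=\bm{\psi}\}$ and call $\bm{\psi}$ \emph{small} if $|S_{\bm{\psi}}|<2^{(R-r-\delta)n}$ and \emph{large} otherwise. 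The total number of messages inside small classes is at most $2^{rn}\cdot 2^{(R-r-\delta)n}=2^{(R-\delta)n}$, a $2^{-\delta n}$ fraction of the codebook, so those messages contribute negligibly to the average error regardless of what the adversary does on them. For each large $\bm{\psi}$ and each $\bm{e}\in\Be$, the standard random-coding calculation $\Pr[\exists m'\neq m_0 : \mathcal{C}_n(m')\in\mathcal{B}_{pn}(\mathcal{C}_n(m_0)\oplus\bm{e})]\leq 2^{Rn}\cdot|\Be|/2^n\leq 2^{-\epsilon n}$ shows that each $m_0 \in S_{\bm{\psi}}$ is individually ``bad'' with probability at most $2^{-\epsilon n}$ over the code draw (conditioned on $\mathcal{C}_n(m_0)\in f_n^{-1}(\bm{\psi})$). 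Because $|S_{\bm{\psi}}|$ is exponential in $n$ for large $\bm{\psi}$, a Chernoff-type concentration on $\sum_{m\in S_{\bm{\psi}}}\mathbb{1}[m\text{ bad}]$ should upgrade this to a doubly-exponential tail of the form $\exp(-2^{(R-r-\delta-\epsilon/2)n})$ on the event that the bad fraction inside $S_{\bm{\psi}}$ exceeds $2^{-\epsilon n/2}$. A union bound over the triples $(f_n,\bm{\psi},\bm{e})$ then closes the argument: using the standard circuit-counting estimate $|\CPX|\leq 2^{O(n^s\log n)}$ (obtained by enumerating the two predecessors and gate type of each of the $cn^s$ gates, plus the choice of $rn$ output nodes), the enumeration size $2^{O(n^s\log n)+rn+H(p)n}$ is dwarfed by $\exp(2^{\Omega(n)})$, so with probability approaching one a random code is simultaneously good against every $f_n\in\CPX$; fixing any such realization yields a deterministic code of the desired rate, and the proof concludes by sending $R\to 1-H(p)$.

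The main obstacle I expect is making the concentration rigorous. The indicators $\mathbb{1}[m\text{ bad}]$ on $m \in S_{\bm{\psi}}$ are \emph{not} independent---resampling a single codeword $\mathcal{C}_n(m^\ast)$ can flip the bad status of every $m$ whose shifted codeword $\mathcal{C}_n(m)\oplus\bm{e}$ happens to lie within Hamming distance $pn$ of $\mathcal{C}_n(m^\ast)$---and the partition $\{S_{\bm{\psi}}\}$ itself is coupled to the code. I would handle this by first conditioning on the multiset of sizes $\{|S_{\bm{\psi}}|\}$, under which the codewords inside each class are i.i.d.\ uniform on $f_n^{-1}(\bm{\psi})$, and then bounding $|\{m \in S_{\bm{\psi}} : m \text{ bad}\}|$ by the count of ordered pairs $(m,m')\in S_{\bm{\psi}}\times[2^{Rn}]$ with $d(\mathcal{C}_n(m)\oplus\bm{e},\mathcal{C}_n(m'))\leq pn$, whose mean is $|S_{\bm{\psi}}|\cdot 2^{-\epsilon n}$. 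Since, for each fixed $m$, the indicators $\mathbb{1}[d(\mathcal{C}_n(m)\oplus\bm{e},\mathcal{C}_n(m'))\leq pn]$ across $m'$ are conditionally independent given $\mathcal{C}_n(m)$, a two-stage argument---a Chernoff bound row by row, followed by a method-of-bounded-differences step aggregating across rows---should deliver sub-Gaussian tails of the required magnitude. Once this concentration is in place, aggregating the small- and large-$\bm{\psi}$ contributions and extracting a deterministic code via Markov follow standard lines.
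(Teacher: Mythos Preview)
Your overall strategy is sound and in fact coincides with the \emph{alternative} proof the paper sketches in Appendix~\ref{sec:alt_proof} (attributed to the Dey--Jaggi--Langberg myopic-channel techniques): a plain i.i.d.\ random code, Hamming-ball decoding, a small/large split on the observation classes $S_{\bm{\psi}}$, a doubly-exponential concentration for the large classes, and a union bound over $|\CPX|\cdot 2^{rn}\cdot|\Be|=2^{\mathrm{poly}(n)}$ via the circuit-counting estimate (the paper's Lemma~\ref{thm:circuit_cmplx}). The paper's \emph{main} proof takes a genuinely different route: it uses a concatenated code $\mathcal{C}_n=\Cin\circ\Cout$ with list decoding of $\Cin$ followed by disambiguation via $\Cout$, rewrites the conditional error as a ratio $q=\Phi/|\mathcal{O}_{\bm{\psi}}\cap\mathcal{C}_n|$, replaces $q$ by a smoothed surrogate $q'$ whose denominator is clamped at $t_L$, and then proves concentration of $q'$ via the entropy method (a modified log-Sobolev inequality), the concatenation being engineered precisely so that $q'$ has an $O(1)$ \emph{global} variation coefficient. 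What the paper's route buys is a clean, self-contained concentration lemma (Lemma~\ref{thm:global_typical_conc}); what your route buys is a simpler code and decoder.

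The one place your sketch is genuinely incomplete is the bounded-differences step. As you note, resampling a single codeword $\mathcal{C}_n(m^\ast)$ can flip the bad status of every $m\in S_{\bm{\psi}}$ with $\mathcal{C}_n(m)\oplus\bm{e}\in\mathcal{B}_{pn}(\mathcal{C}_n(m^\ast))$; without further input this Lipschitz constant is only bounded by $|S_{\bm{\psi}}|$, which is far too large for McDiarmid to deliver $\exp(-2^{\Omega(n)})$ tails. Both proofs in the paper close this gap by first establishing, with probability $1-\exp(-2^{\Omega(n)})$, that the random code is $[\ell,p]$ list-decodable for some $\ell=\mathrm{poly}(n)$ (Lemma~\ref{thm:LD_lb} in the main proof; the $n^{s+4}$ version of \cite[Lemma~IV.3]{Dey2019a} in the appendix). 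On that event the Lipschitz constant drops to $O(\ell)$, and then bounded differences (or the paper's log-Sobolev route) gives the required doubly-exponential tail. Your ``row-by-row Chernoff'' is exactly the tool that proves this list-decodability, but it needs to be stated as a separate high-probability event and fed into the second stage, rather than treated as an internal step of a single two-stage concentration. A minor side point: you should condition on the full assignment of messages to classes $S_{\bm{\psi}}$, not merely on the multiset of sizes, to obtain the claimed conditional independence and uniformity on $f_n^{-1}(\bm{\psi})$.
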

  
  \begin{figure}[t]
    \includegraphics[width=\columnwidth]{./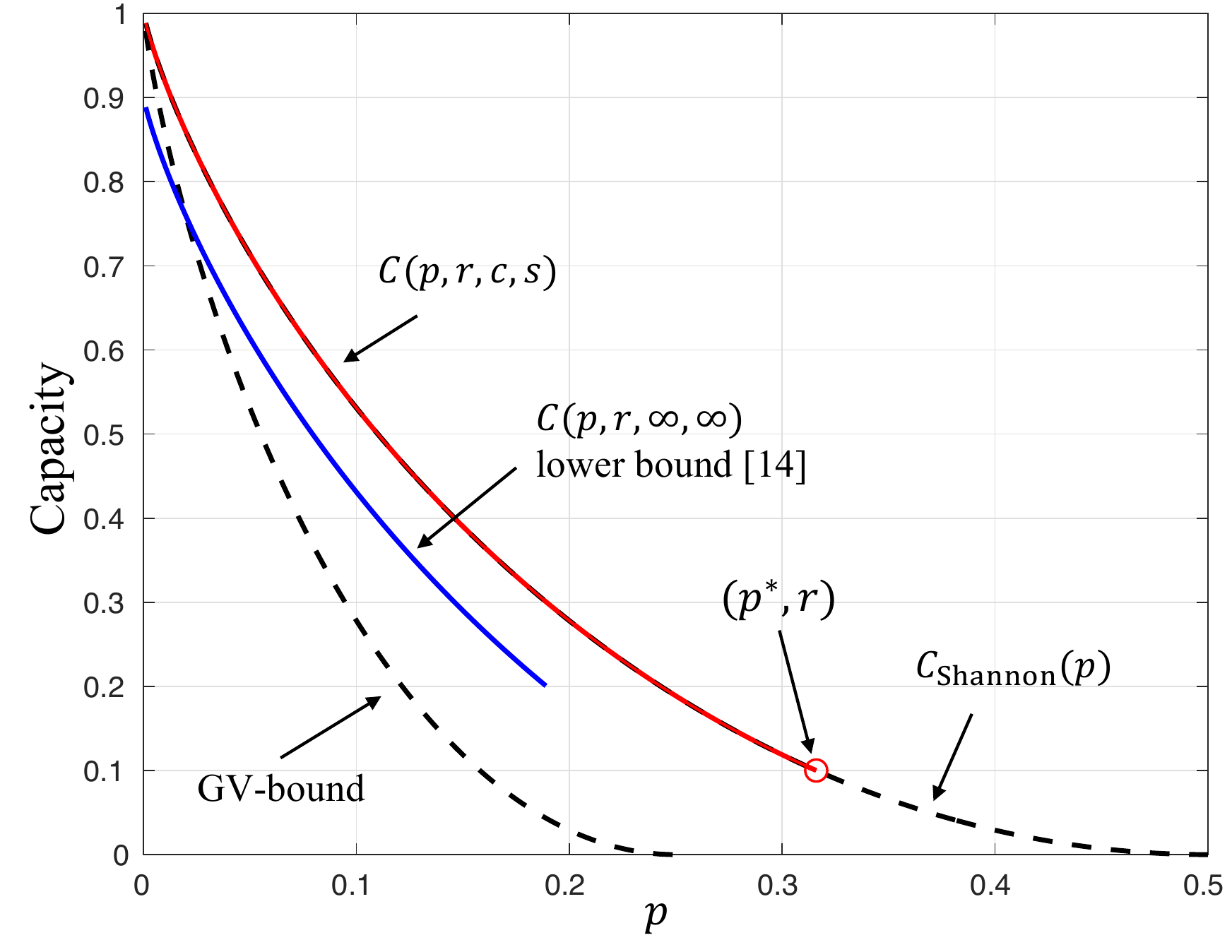}
    \centering
    \caption{Capacity when $r=0.1$ and $c,s$ are finite positive integers. Herein the value $p^*$ satisfies $C_{\mathrm{Shannon}}(p^*) = r = 0.1$.}
    \label{fig:capacity}
  \end{figure}
  
 We share a few remarks on the above theorem. When $r < C_{\mathrm{Shannon}}(p) = 1-H(p)$, Theorem \ref{thm:main_result} implies that the adversary can do no better than to \textit{ignore} its side-information $f_n(\bm{x})$ and choose $\bm{e}$ randomly from the set of all $n$-bit vectors with Hamming weight $pn$. Additionally, we note that the largest known lower bound on $C(p,r,\infty,\infty)$ is $1-H(p)-r$ for $r \in [0,\frac{1-H(p)}{3})$ \cite{Langberg2008ObliviousCapacity}. Since $C(p,r,\infty,\infty)$ is a lower bound to $C(p,r,\rv{c,s})$, Theorem 1 significantly sharpens the best known lower bound of $C(p,r,\rv{c,s})$ to an exactly tight characterization.\rv{\footnote{\rvv{One can show that $C(p,r,\infty,\infty)$ is strictly less than $C(p,r,c,s)$ for some values of $p \in (0,1/2)$ and $r < 1 - H(p)$. See Section \ref{sec:conc} for a proof sketch. The intuition behind this result follows from the fact that we have imposed a complexity bound of $f$ while allowing the codebook $\mathcal{C}_n$ to have unbounded complexity. Allowing encoding/decoding to use unlimited computation power while the adversary is $\textset{CKT}(r,cn^s)$-oblivious may give Alice and Bob an advantage compared the setting where both the codebook and observation function have similar complexity constraints.}}} For $r>C_{\mathrm{Shannon}}(p)$, an immediate lower bound of $C(p,r,\rv{c,s})$ is given by the Gilbert-Varshamov (GV) bound (i.e. $C(p,r,\infty,\infty) \geq 1-H(2p)$) \cite{Gilbert1952a,Varshamov1957a}.\footnote{\rvv{As discussed above, when $r>C_{\mathrm{Shannon}}(p)$ one may find achievable rates strictly greater than the GV bound when stochastic codes are considered. One such stochastic coding scheme is the following. Suppose that the encoder passes its clean codeword $\bm{u}$ through a BSC($q$) ($q$ to be determined) to obtain the transmitted codeword $\bm{x}$. If the effective mutual information between the clean codeword $\bm{u}$ and the adversary’s observation is less than the rate $R$ (the ``right'' notion of sufficient myopicity in this scenario), then the above stochastic coding scheme can be shown to achieve rate $R = 1-H(p')$ where $p' = q(1-p)+p(1-q)$. As can be verified numerically, there exists some values of $r$, $p$ and $q$ such that $r>1-H(p)$ and $R>1-H(2p)$ (the GV bound).}} All results discussed thus far are summarized in Fig. \ref{fig:capacity}.
 
 Theorem \ref{thm:main_result} generalizes a few results on myopic channels and on channels with active eavesdroppers. For \rv{$p \in [0,1/2]$} and $r < C_{\mathrm{Shannon}}(p)$, $C_{\mathrm{Shannon}}(p)$ is known to be the capacity of the \textit{\rv{binary-erasure bit-flip} myopic channel} where the adversary a) non-causally views $\bm{x}$ through a binary erasure channel with erasure probability $1-r$ (\rv{denoted as BEC($1-r$) in the literature}) then b) injects $pn$ bit errors \cite[Theorem~III.12]{Dey2019a}. It is clear that this result is generalized by Theorem \ref{thm:main_result} after observing that a $\textset{CKT}(r,cn^s)$-oblivious adversary can choose $f_n$ \textit{randomly} in a way that simulates a BEC($1-r$). Similarly, for $r<C_{\mathrm{Shannon}}(p)$, $C_{\mathrm{Shannon}}(p)$ is known to be the capacity of the \textit{adversarial wiretap channel of type II} where the adversary a) chooses $rn$ indices in $\{1,\ldots,n\}$ and observes $rn$-bits of $\bm{x}$ at the chosen indices then b) injects $pn$ bit errors \cite[Theorem~4.2]{Wang2017OnChannel}. It is clear that \cite[Theorem~4.2]{Wang2017OnChannel} is a special case of Theorem \ref{thm:main_result} after observing that a $\textset{CKT}(r,cn^s)$-oblivious adversary can choose $f_n(\bm{x})$ to output a subset of $rn$ bits of $\bm{x}$.

  
  
  \section{Proof Outline, Overview of Proof Technique} \label{sec:outline_and_overview}
  In this section, we outline the proof of Theorem 1 and discuss an overview of our proof technique. A detailed proof of Theorem 1 can be found in Section \ref{sec:analysis}.
  
  \subsection{Achievability Scheme} \label{sec:ach_scheme}
  
  For our proof of Theorem \ref{thm:main_result}, we construct a specific $\mathcal{C}_n$.  
  
  \textbf{Encoder Construction:} Alice's $(n,Rn)$ codebook $\mathcal{C}_n$ is constructed as follows. Let $\rho \in (R,C_{\mathrm{Shannon}}(p))$. Codebook $\mathcal{C}_n$ is a concatenation of two codebooks: an \textit{outer} $(\rho n,R n)$ codebook $\mathcal{C}_{\mathrm{out}}: [2^{R n}] \rightarrow \{0,1\}^{\rho n}$ and an \textit{inner} $(n, \rho n)$ codebook $\mathcal{C}_{\mathrm{in}}:\{0,1\}^{\rho n} \rightarrow \{0,1\}^{n}$. Encoding proceeds as follows. First, Alice encodes $m_0$ with $\mathcal{C}_{\mathrm{out}}$ where we denote the resulting codeword as $\mathcal{C}_{\mathrm{out}}(m_0)$. Subsequently, Alice encodes $\mathcal{C}_{\mathrm{out}}(m_0)$ with $\mathcal{C}_{\mathrm{in}}$ where we denote the resulting codeword as $\mathcal{C}_n(m_0) = \mathcal{C}_{\mathrm{in}}(\mathcal{C}_{\mathrm{out}}(m_0))$. After encoding, Alice transmits the codeword $\bm{x} = \mathcal{C}_n(m_0)$ over the channel. We denote the concatenated $(n,Rn)$ codebook as $\mathcal{C}_{n} = \mathcal{C}_{\mathrm{in}} \circ \mathcal{C}_{\mathrm{out}}$.
  
 \textbf{Decoder Construction:} Bob's list decoder is constructed as follows. Given the received word $\bm{y} = \mathcal{C}_n(m_0) \rv{\oplus} \bm{e}$, Bob first performs list decoding by forming a list $\mathcal{L}_{\mathrm{in}}(\bm{y},\mathcal{C}_{\mathrm{in}})$ of all words $\bm{w} \in \{0,1\}^{\rho n}$ such that $\mathcal{C}_{\mathrm{in}}(\bm{w})$ is contained in the ball $\mathcal{B}_{pn}(\bm{y})$. After list decoding, Bob refines the list (i.e., Bob performs disambiguation) by removing all words $\bm{w} \in \mathcal{L}_{\mathrm{in}}$ that are \textit{not consistent} with $\mathcal{C}_{\mathrm{out}}$: we say that a word $\bm{w}$ is consistent with $\mathcal{C}_{\mathrm{out}}$ if there exists an $m \in [2^{R n}]$ such that $\bm{w} = \mathcal{C}_{\mathrm{out}}(m)$.

 Denote the refined list as $\mathcal{L}_{\mathrm{out}}$ and note that $\mathcal{L}_{\mathrm{out}} \subseteq \mathcal{L}_{\mathrm{in}} \subseteq \{0,1\}^{\rho n}$. After $\mathcal{L}_{\mathrm{in}}$ is refined to $\mathcal{L}_{\mathrm{out}}$, a decoding decision is made according to the following rules. If $|\mathcal{L}_{\mathrm{out}}|=1$, then we have exactly one $m \in [2^{R n}]$ s.t. $C_{\mathrm{out}}(m) \in \mathcal{L}_{\mathrm{out}}$, and the decoder outputs $\hat{m} = m$. If $\mathcal{L}_{\mathrm{out}}$ is empty or if $|\mathcal{L}_{\mathrm{out}}| > 1$, then the decoder declares an error by setting $\hat{m}$ to an error symbol. We say that a decoding error occurs if $\hat{m} \neq m_0$. However, by the list decoding logic and the adversary error budget constraint $pn$, $\mathcal{C}_{\mathrm{out}}(m_0)$ is guaranteed to be in $\mathcal{L}_{\mathrm{out}}$, and so a decoding error occurs if and only if $|\mathcal{L}_{\mathrm{out}}|>1$.

  \textbf{Probability of Error:} For $i = 1, \ldots, 2^{\rho n}$, define
  \begin{align} 
  & \bm{w}_i(m_0,\bm{e},\mathcal{C}_{\mathrm{out}},\mathcal{C}_{\mathrm{in}}) = \arg \min_{\bm{w} \in \mathcal{W}_i(m_0,\bm{e},\mathcal{C}_{\mathrm{out}},\mathcal{C}_{\mathrm{in}})} \intg (\bm{w}) \label{eq:error_idx_ineq} \\ 
  & \text{such that} \nonumber \\
  & \mathcal{W}_i(m_0,\bm{e},\mathcal{C}_{\mathrm{out}},\mathcal{C}_{\mathrm{in}}) = \substack{\arg \min \\ \bm{w} \in \{0,1\}^{\rho n} \setminus \{\bm{w}_1, \ldots,\bm{w}_{i-1} \}} d(\bm{y},\mathcal{C}_{\mathrm{in}}(\bm{w})). \nonumber
  \end{align}
  That is, we sort the \textit{message/word vectors} $\bm{w}$ according to the distance between the observation $\bm{y}$ and the inner codeword $\Cin(\bm{w})$, where the term $\intg(\bm{w})$ in (\ref{eq:error_idx_ineq}) is used to break any tie and ensure that the $i$th closest codeword to $\bm{y}$ is uniquely defined. Note that $\bm{w}_i \in \mathcal{L}_{\mathrm{in}}$ iff $i \leq |\mathcal{L}_{\mathrm{in}}|$. Also define
  \begin{equation} \label{eq:def_incon_set}
  \mathcal{I}_{m_0} = \{ \mathcal{C}_{\mathrm{out}}(m'): m' \neq m_0\}
  \end{equation}
  to be the set of words in $\{0,1\}^{\rho n}$ that are consistent with $\mathcal{C}_{\mathrm{out}}$ but do not correspond to the true message $m_0$. Under the code construction of Section \ref{sec:ach_scheme}, the probability of decoding error can be written as  
  \begin{align} 
  & \bar{P}_{e}(\mathcal{C}_{\mathrm{out}},\mathcal{C}_{\mathrm{in}})  \nonumber \\
  & = \hspace{-1em} \max_{f_n \in \CPX} \mathbb{E}_{\bm{\Psi}} \left[ \max_{\bm{e} \in \Be}\mathbb{P}_{m_0}(|\mathcal{L}_{\mathrm{out}}|>1|\bm{\Psi}(m_0) = \bm{\psi}) \right] \nonumber \\
  & = \hspace{-1em} \max_{f_n \in \CPX} \mathbb{E}_{\bm{\Psi}} \left[ \max_{\bm{e} \in \Be} \mathbb{P}_{m_0}(\bigcup_{i=1}^{|\mathcal{L}_{\mathrm{in}}|} \{ \bm{w}_i \in \mathcal{I}_{m_0} \} |\bm{\Psi} = \bm{\psi}) \right]. \label{eq:prob_error_inside}
  \end{align}
  
  \subsection{Preliminaries}
  
  The following preliminary results characterize the list-decodability properties of a random codebook. Let $\rv{Q}(n, \rho n)$ be the distribution of $(n, \rho n)$ codebooks such that all codewords of $\mathcal{C}_{\mathrm{in}}$ are independently and uniformly distributed in $\{0,1\}^n$.
  
  \begin{definition}
  For $L>0$, an $(n, \rho n)$ codebook $\mathcal{C}_{\mathrm{in}}$ is said to be $[L,p]$ list decodable if $|\mathcal{C}_{\mathrm{in}} \cap \mathcal{B}_{pn}(\bm{y})| \leq L$ for every $\bm{y} \in \{0,1\}^{n}$.
  \end{definition}
  
  \begin{lemma} \label{thm:LD_lb}
  Let $\ell = \ell(n) > 0$ be $\omega(n)$ (i.e., $\lim_{n \rightarrow \infty} \ell(n)/n = \infty$). For large enough $n$, a codebook $\mathcal{C}_{\mathrm{in}}$ drawn from distribution $\rv{Q}(n, \rho n)$ is $[\ell,p]$ list decodable w.p. greater than $1 - 2^{-\ell(n)/4}$. Proof is in Appendix \ref{sec:proof_LD_lb}.
  \end{lemma}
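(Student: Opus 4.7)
The plan is to apply a two-stage union bound, exploiting that $\rho < C_{\mathrm{Shannon}}(p) = 1 - H(p)$ so that the expected number of random codewords in any Hamming ball $\mathcal{B}_{pn}(\bm{y})$ decays exponentially in $n$. Since the $2^{\rho n}$ codewords of $\mathcal{C}_{\mathrm{in}}$ are i.i.d.\ uniform on $\{0,1\}^n$, the dependence structure is trivial and the argument reduces to careful bookkeeping on the exponents.

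First, fix a ball center $\bm{y} \in \{0,1\}^n$. Using the standard volume bound $|\mathcal{B}_{pn}(\bm{y})| \leq 2^{n H(p)}$, each codeword independently lies in $\mathcal{B}_{pn}(\bm{y})$ with probability at most $2^{-n(1-H(p))}$. Hence, for any fixed set $S$ of $\ell+1$ message indices,
\begin{equation*}
\mathbb{P}\bigl(\mathcal{C}_{\mathrm{in}}(\bm{w}) \in \mathcal{B}_{pn}(\bm{y}) \text{ for all } \bm{w} \in S\bigr) \leq 2^{-n(1-H(p))(\ell+1)}.
\end{equation*}
Union-bounding over all $\binom{2^{\rho n}}{\ell+1} \leq 2^{\rho n (\ell+1)}$ such subsets yields
\begin{equation*}
\mathbb{P}\bigl(|\mathcal{C}_{\mathrm{in}} \cap \mathcal{B}_{pn}(\bm{y})| > \ell\bigr) \leq 2^{-(\ell+1)n\delta},
\end{equation*}
where $\delta := 1 - H(p) - \rho > 0$ is strictly positive because $\rho \in (R, C_{\mathrm{Shannon}}(p))$ by the construction in Section \ref{sec:ach_scheme}.

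Next, union-bound over the $2^n$ possible centers $\bm{y}$ to obtain
\begin{equation*}
\mathbb{P}\bigl(\mathcal{C}_{\mathrm{in}} \text{ is not } [\ell,p] \text{ list decodable}\bigr) \leq 2^{n - (\ell+1)n\delta}.
\end{equation*}
It then suffices to show $n - (\ell+1)n\delta \leq -\ell/4$ for all sufficiently large $n$. Rearranging, the condition becomes $\ell(n\delta - 1/4) \geq n(1 - \delta + \delta) - n\delta = n - n\delta(\ell+1) + \ell(\tfrac{1}{4})$, which simplifies to the requirement $\ell \gtrsim 1/\delta$ once $n \geq 2/\delta$. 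Since $\ell = \omega(n)$, this is satisfied for all sufficiently large $n$, completing the argument.

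The main obstacle is not conceptual but notational: one must verify that the slack provided by $\ell = \omega(n)$ is enough to absorb the additive $+n$ from the second union bound and still leave exponent $-\ell/4$. This is easy because $\ell/n \to \infty$ makes the term $(\ell+1)n\delta$ dominate $n$ by an arbitrarily large factor for large $n$, so the specific constant $1/4$ in the target bound plays no essential role — any constant in $(0,1)$ would work equally well.
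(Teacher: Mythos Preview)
Your proof is correct, though the displayed ``rearranging'' line is garbled --- the clean inequality you need is $\ell(n\delta - \tfrac14) \geq n(1-\delta)$, which for $n > 1/(4\delta)$ requires only $\ell \geq \tfrac{4n(1-\delta)}{4n\delta-1}$, a quantity that is bounded in $n$. Your final paragraph correctly identifies that $\ell = \omega(n)$ makes the exponent $-(\ell+1)n\delta + n \sim -\ell n\delta$ far more negative than the target $-\ell/4$, so the sloppy algebra does no damage.

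The paper takes a different route: instead of the elementary union bound over $(\ell{+}1)$-subsets, it views $g_{\bm{y}}(\mathcal{C}_{\mathrm{in}}) = |\mathcal{C}_{\mathrm{in}} \cap \mathcal{B}_{pn}(\bm{y})|$ as a $1$-Lipschitz, $1$-certifiable function and applies a Talagrand-type inequality (their Lemma~\ref{thm:Talagrand}). Since $\mathbb{E}[g_{\bm{y}}] \leq 2^{-(1-H(p)-\rho)n} \to 0$, the median is $0$ for large $n$, giving $\mathbb{P}(g_{\bm{y}} > \ell) \leq 2e^{-\ell/4}$ directly, after which the union bound over $\bm{y}$ and $\ell = \omega(n)$ finish as in your argument. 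Your approach is more elementary (no concentration machinery) and actually yields the stronger exponent $-\Theta(\ell n\delta)$ rather than $-\Theta(\ell)$; the paper's route has the minor aesthetic advantage that the tail bound $2e^{-\ell/4}$ is free of the gap parameter $\delta$ once the median is shown to vanish, and it reuses the same Talagrand lemma that they also invoke for Lemma~\ref{thm:size_AandC}.
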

  
  Similar results hold even if the list size is constant in $n$. 
  
  \begin{lemma}[{\cite[Claim~A.15]{Chen2015AChannels}}] \label{thm:LD_lb_bounded}
  Let $\epsilon_{\rho} \in (0,C_{\mathrm{Shannon}}(p))$ and set $\rho = C_{\mathrm{Shannon}}(p)-\epsilon_{\rho}$. For $L > \frac{1}{\epsilon_{\rho}}$ and for large enough $n$ (depending only on $\epsilon_{\rho}$), an $(n, \rho n)$ codebook $\mathcal{C}_{\mathrm{in}}$ drawn from distribution $\rv{Q}(n, \rho n)$ is $[L,p]$ list decodable w.p. greater than $1 - \frac{1}{n}$.
  \end{lemma}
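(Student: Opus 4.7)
The plan is to prove Lemma~\ref{thm:LD_lb_bounded} by a standard first-moment / union bound argument on the random codebook $\mathcal{C}_{\mathrm{in}} \sim Q(n,\rho n)$. The event that $\mathcal{C}_{\mathrm{in}}$ is \emph{not} $[L,p]$ list-decodable is exactly the event that there exist some center $\bm{y}\in\{0,1\}^n$ and some $(L+1)$-subset of codewords all lying inside $\mathcal{B}_{pn}(\bm{y})$. I will bound the probability of this bad event and show that, under the hypothesis $L>1/\epsilon_\rho$, it decays faster than any $1/n$ for all sufficiently large $n$ depending only on $\epsilon_\rho$.

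First I would fix $\bm{y}\in\{0,1\}^n$. Since the $2^{\rho n}$ codewords of $\mathcal{C}_{\mathrm{in}}$ are i.i.d.\ uniform on $\{0,1\}^n$, each codeword lands in $\mathcal{B}_{pn}(\bm{y})$ independently with probability $q := |\mathcal{B}_{pn}(\bm{y})|/2^n$. Using the standard volume bound $|\mathcal{B}_{pn}(\bm{y})|\leq 2^{nH(p)}$ and the definition $\rho = 1-H(p)-\epsilon_\rho$, we get $q\leq 2^{-n(1-H(p))}$ and hence the expected number of codewords in $\mathcal{B}_{pn}(\bm{y})$ is at most $2^{\rho n}q\leq 2^{-n\epsilon_\rho}$. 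Union-bounding over all $(L+1)$-subsets of codewords gives
\begin{equation*}
\Pr\!\left(|\mathcal{C}_{\mathrm{in}}\cap\mathcal{B}_{pn}(\bm{y})|\geq L+1\right)
\;\leq\;\binom{2^{\rho n}}{L+1}q^{L+1}
\;\leq\;\bigl(2^{\rho n}q\bigr)^{L+1}
\;\leq\;2^{-n\epsilon_\rho(L+1)}.
\end{equation*}

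Next I would union-bound over all $\bm{y}\in\{0,1\}^n$, obtaining that the probability $\mathcal{C}_{\mathrm{in}}$ fails to be $[L,p]$ list-decodable is at most $2^{n}\cdot 2^{-n\epsilon_\rho(L+1)} = 2^{-n(\epsilon_\rho(L+1)-1)}$. The hypothesis $L>1/\epsilon_\rho$ gives $\epsilon_\rho(L+1)-1 > \epsilon_\rho > 0$, so this failure probability is at most $2^{-n\epsilon_\rho}$, which is below $1/n$ for every $n$ large enough depending only on $\epsilon_\rho$. This completes the argument.

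The only mildly delicate point is the Hamming-ball volume bound: the sharp inequality $|\mathcal{B}_{pn}(\bm{y})|\leq 2^{nH(p)}$ holds for $p\leq 1/2$ without any $o(n)$ slack, so no further care is needed. If instead one preferred the form $|\mathcal{B}_{pn}(\bm{y})|\leq 2^{nH(p)+o(n)}$, the same calculation still gives a failure probability of $2^{-n\epsilon_\rho+o(n)}$, which remains $o(1/n)$; in either case the threshold $L>1/\epsilon_\rho$ is exactly what is needed to make the exponent strictly positive after the $2^n$-sized union bound, so this is the main place where the hypothesis is consumed and the step to watch most carefully.
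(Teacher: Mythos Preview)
Your proof is correct and is the standard first-moment/union-bound argument for list-decodability of random codes. Note that the paper itself does not supply a proof of this lemma: it is quoted directly from \cite[Claim~A.15]{Chen2015AChannels}, and your argument is essentially the classical proof one finds there (and in many other places), so there is nothing further to compare.
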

  
  \begin{lemma} \label{thm:size_AandC}
  Consider an arbitrary $1$-to-$1$ $(\rho n, R n)$ codebook $\mathcal{C}_{\mathrm{out}}$ and randomly draw an $(n, \rho n)$ codebook $\mathcal{C}_{\mathrm{in}}$ from distribution $\rv{Q}(n, \rho n)$. Recall that $\mathcal{C}_{n} = \mathcal{C}_{\mathrm{in}} \circ \mathcal{C}_{\mathrm{out}}$. For any subset $\mathcal{A} \subseteq \{0,1\}^n$, we have that $\mu = \mathbb{E}_{\mathcal{C}_{\mathrm{in}}}|\mathcal{A} \cap \mathcal{C}_n| = 2^{-(1-R)n}|\mathcal{A}|$, and for $t_L < \mu$ and $t_U > \mu$,
  \begin{equation} \nonumber
  \mathbb{P}_{\mathcal{C}_{\mathrm{in}}}\left(|\mathcal{A}\cap\mathcal{C}_n| < t_L \right) \leq 2 \exp\left\{ \frac{-(\mu-t_L)^2}{4 \mu} \right\}
  \end{equation}
  and 
  \begin{equation} \nonumber
  \mathbb{P}_{\mathcal{C}_{\mathrm{in}}}\left(|\mathcal{A}\cap\mathcal{C}_n| > t_U \right) \leq 2 \exp \left\{ \frac{-(t_U- \mu)^2}{4(t_U + \mu)} \right\}.
  \end{equation}
  Proof is in Appendix \ref{sec:proof_size_AandC}.
  \end{lemma}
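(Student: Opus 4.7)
The plan is to exploit the injectivity of $\mathcal{C}_{\mathrm{out}}$ to reduce the claim to a standard concentration bound for a sum of independent Bernoulli random variables. Because $\mathcal{C}_{\mathrm{out}}$ is $1$-to-$1$, the $2^{Rn}$ images $\mathcal{C}_{\mathrm{out}}(1),\ldots,\mathcal{C}_{\mathrm{out}}(2^{Rn})$ are distinct words in $\{0,1\}^{\rho n}$. By definition of $Q(n,\rho n)$, the inner codebook $\mathcal{C}_{\mathrm{in}}$ assigns an independent uniform random vector in $\{0,1\}^n$ to each input in $\{0,1\}^{\rho n}$. Evaluating $\mathcal{C}_{\mathrm{in}}$ only at the distinct inputs $\mathcal{C}_{\mathrm{out}}(m)$ therefore produces codewords $\mathcal{C}_n(m) = \mathcal{C}_{\mathrm{in}}(\mathcal{C}_{\mathrm{out}}(m))$ that are mutually independent and each uniformly distributed on $\{0,1\}^n$.

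Next I would write $|\mathcal{A} \cap \mathcal{C}_n| = \sum_{m=1}^{2^{Rn}} X_m$, where $X_m = \mathbb{1}\{\mathcal{C}_n(m) \in \mathcal{A}\}$. By the step above, the $X_m$ are i.i.d.\ Bernoulli with success probability $|\mathcal{A}|/2^n$, which immediately gives the mean formula
\begin{equation*}
\mu \;=\; \mathbb{E}_{\mathcal{C}_{\mathrm{in}}}|\mathcal{A}\cap\mathcal{C}_n| \;=\; 2^{Rn}\cdot\frac{|\mathcal{A}|}{2^n} \;=\; 2^{-(1-R)n}|\mathcal{A}|.
\end{equation*}
A subtle point is that the inner codebook may in principle revisit the same random vector for different messages, but since the $X_m$ are defined as indicators of the evaluated codewords, independence of the $X_m$ follows directly from the independence of the evaluations at distinct inputs, so no additional union-bound argument is needed.

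For the deviation inequalities, I would apply the standard multiplicative Chernoff bounds for independent $\{0,1\}$ random variables: for $\delta \in (0,1)$, $\mathbb{P}(X \le (1-\delta)\mu) \le \exp(-\delta^2\mu/2)$, and for $\delta > 0$, $\mathbb{P}(X \ge (1+\delta)\mu) \le \exp(-\delta^2\mu/(2+\delta))$. Substituting $\delta = (\mu - t_L)/\mu$ in the lower tail and $\delta = (t_U-\mu)/\mu$ in the upper tail, and then loosening the denominators to $4\mu$ and $4(t_U+\mu)$ respectively, produces the bounds in the stated form. The leading factor of $2$ is absorbed without loss for simplicity of statement (or alternatively recovered via a crude split-and-sum of the moment generating function bounds).

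I do not anticipate a real obstacle here; the only care required is in justifying independence of the $X_m$ (which hinges on the $1$-to-$1$ property of $\mathcal{C}_{\mathrm{out}}$) and in choosing a Chernoff formulation whose constants match those in the lemma statement.
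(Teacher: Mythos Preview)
Your argument is correct. Because $\mathcal{C}_{\mathrm{out}}$ is injective, the indicators $X_m=\mathds{1}\{\mathcal{C}_n(m)\in\mathcal{A}\}$ are i.i.d.\ Bernoulli with parameter $|\mathcal{A}|/2^n$, so $|\mathcal{A}\cap\mathcal{C}_n|$ is binomial and the multiplicative Chernoff bounds apply directly; after the parametrisations you indicate, they are in fact tighter than the stated inequalities, so the loosening step and the cosmetic factor $2$ go through.

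The paper, however, does not argue via Chernoff. It invokes the Talagrand-type inequality stated as Lemma~\ref{thm:Talagrand}: one checks that $g(\mathcal{C}_{\mathrm{in}})=|\mathcal{A}\cap\mathcal{C}_n|$ is $1$-Lipschitz and $1$-certifiable (with certificate $J(\mathcal{C}_{\mathrm{in}})=\{m:\mathcal{C}_n(m)\in\mathcal{A}\}$), uses that a binomial has its mean as a median to set $\mathbb{M}[g]=\mu$, and reads off the two tail bounds from Lemma~\ref{thm:Talagrand} with $b=c=1$. This yields exactly the constants $4\mu$ and $4(t_U+\mu)$ and the prefactor $2$ without any ad hoc loosening. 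Your route is more elementary and self-contained for this lemma; the paper's route has the advantage of reusing the same Talagrand machinery already needed for Lemma~\ref{thm:LD_lb}, and of producing the stated constants on the nose.
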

  
  \subsection{Overview of the proof of Theorem \ref{thm:main_result}} \label{sec:overview}
  
  For any error budget $p \in (0,1/2)$ and observation rate parameter $r\in(0,C_{\mathrm{Shannon}}(p))$, the goal of our proof of Theorem \ref{thm:main_result} is to prove that Alice and Bob can communicate at rate $R$ that is arbitrarily close to $C_{\mathrm{Shannon}}(p)$.  Our proof idea is to prove the following slightly different statement instead: for any $p \in (0,1/2)$ and for any $r\in(0,C_{\mathrm{Shannon}}(p))$, there exists an $R\in (r, C_{\mathrm{Shannon}}(p))$ such that Alice and Bob can communicate at rate $R$. Such a (seemingly weaker) statement implies Theorem~\ref{thm:main_result} immediately, since for any $r'>r$, the achievable rate under $r'$ is a lower bound of the achievable rate under $r$. We can then let $r'\rightarrow C_{\mathrm{Shannon}}(p)$ and use the (seemingly weaker) statement to derive Theorem 1.  We now present the setup of our proof. 
  
  \textbf{Setup:} The following setup will be used throughout the proof of Theorem \ref{thm:main_result}:
  \begin{enumerate}
  \item Fix any error budget $p \in(0,1/2)$ and observation rate $r \in (0,C_{\mathrm{Shannon}}(p))$, and fix observation complexity bound parameters $c,s$ to be positive integers.
  \item We can always find parameters $\delta_0, \delta_1, \epsilon_{\rho},\epsilon_R > 0$ such that the following two conditions hold:
    \begin{condition} \label{cond:small_parameters}
  $r < C_{\mathrm{Shannon}}(p) - \delta_0- \delta_1 - \epsilon_{\rho} - \epsilon_R$
  \end{condition}
 
  \begin{condition} \label{cond:small_R}
  $\epsilon_R \in (0,(\frac{5}{13} - \frac{1}{30}) \delta_0)$
  \end{condition}
  Set the inner-code rate $\rho = C_{\mathrm{Shannon}}(p) - \epsilon_{\rho}$ and the inner-outer concatenated code rate $R = \rho - \epsilon_R$. One can easily verify that the above choice of parameters guarantees that $r<R<\rho<C_{\mathrm{Shannon}}(p)$. 
  \item For blocklength $n = 1, 2, \ldots$, let the codebook $\mathcal{C}_n$ be the code construction described in Section \ref{sec:ach_scheme}.
  \item Fix the $(\rho n,R n)$ outer codebook $\mathcal{C}_{\mathrm{out}}$ to be any $1$-to-$1$ function from $\{0,1\}^{R n}$ to $\{0,1\}^{\rho n}$. Let the $(n, \rho n)$ inner codebook $\mathcal{C}_{\mathrm{in}}$ be drawn from distribution $\rv{Q}(n, \rho n)$. Note that $\mathcal{C}_n = \mathcal{C}_{\mathrm{in}} \circ \mathcal{C}_{\mathrm{out}}$ is Alice's $(n,Rn)$ codebook.
  \end{enumerate}

   We now show that the rate $R$ is $(c,s)$-achievable by using a \textit{random-coding argument}, i.e., we show that for any $\epsilon_e>0$ and for large enough $n$, $\mathbb{P}_{\mathcal{C}_{\mathrm{in}}}(\bar{P}_{\rv{e}}(\mathcal{C}_{\mathrm{out}},\mathcal{C}_{\mathrm{in}})> \epsilon_e)<1$ and thus there exists a sequence of $(\rho n, R n)$ codebooks $\mathcal{C}_{\mathrm{out}}$ and $(n, \rho n)$ codebooks $\mathcal{C}_{\mathrm{in}}$ such that $\bar{P}_{\rv{e}}(\mathcal{C}_{\mathrm{out}},\mathcal{C}_{\mathrm{in}}) \leq \epsilon_e$ for all $n$ large enough. 
  
  
   \textbf{Random-Coding:}  In the sequel, we drop the dependency on $\mathcal{C}_{\mathrm{out}}$ from all notation due to the outer codebook being fixed. We write $\bar{P}_e(\mathcal{C}_{\mathrm{in}})$ to denote the probability of decoding error evaluated at the $(n,Rn)$ codebook $\mathcal{C}_{n} = \mathcal{C}_{\mathrm{in}} \circ \mathcal{C}_{\mathrm{out}}$.
   
   For $f_n \in \CPX$, $\bm{\psi} \in \{0,1\}^{rn}$, $\bm{e} \in \Be$ and $i \in [2^{\rho n}]$, define
   \begin{equation}
   q_i(f_n,\bm{\psi},\bm{e},\mathcal{C}_{\mathrm{in}}) = \mathbb{P}_{m_0}(\bm{w}_i \in \mathcal{L}_{\mathrm{in}}, \bm{w}_i \in \mathcal{I}_{m_0}| \bm{\Psi}(m_0)=\bm{\psi})
   \end{equation}
   to be the probability that word $\bm{w}_i(m_0,\bm{e},\mathcal{C}_{\mathrm{in}})$ results in a decoding error given that the adversary observes $\bm{\Psi}(m_0)=\bm{\psi}$. To apply the random-coding argument, we first apply a simple union bound to $\bar{P}_e(\mathcal{C}_{\mathrm{in}})$ in (\ref{eq:prob_error_inside}) to bound the quantity above by 
   \begin{equation} \label{eq:prob_error_bd}
   \bar{P}^{\mathrm{ub}}_{e} (\mathcal{C}_{\mathrm{in}}) =\max_{f_n \in \CPX} \sum_{i=1}^{2^{\rho n}} \mathbb{E}_{\bm{\Psi}} \left[ \max_{\bm{e} \in \Be} q_i(f_n,\bm{\psi},\bm{e},\mathcal{C}_{\mathrm{in}}) \right].
   \end{equation}
   
  We now prepare to state a sufficient condition for the rate $R$ to be $(c,s)$-achievable. Recall that $\epsilon_\rho$, $\delta_0$, and $\delta_1$ are the parameters used to construct $\mathcal{C}_{\mathrm{in}}$ and $\mathcal{C}_{\mathrm{out}}$. For integer $L \in (0,2^{\rho n}]$, define the product set $\mathcal{P}(L) =  [L] \times \CPX \times \{0,1\}^{rn} \times \Be$. For $\epsilon_e>0$, we define the set $\mathcal{H}(L,\epsilon_e)$ to be the set of all $(n, \rho n)$ codebooks $\mathcal{C}_{\mathrm{in}}$ such that for all $(i,f_n,\bm{\psi},\bm{e}) \in \mathcal{P}(L)$, either $\mathbb{P}_{m_0}(\bm{\Psi}(m_0) = \bm{\psi}) < 2^{(\delta_0+ \delta_1 - R)n}$ or $q_i(f_n,\bm{\psi},\bm{e},\mathcal{C}_{\mathrm{in}}) \leq \frac{\epsilon_e}{2L}$.
  
  The intuition behind the definition of $\mathcal{H}(L,\epsilon_e)$ is as follows. For any observation-function/observation-pair $(f_n, \bm{\psi})$, we say that this pair is \textit{informative} if $\rv{\mathbb{P}}_{m_0}(\bm{\Psi}(m_0)=\bm{\psi})<2^{(\delta_0+\delta_1-R)n}$. Namely, if the adversary picks $f_n$ and observes $\bm{\Psi}(m_0) = \bm{\psi}$, then there are not many other messages $m \neq m_0$ such that $\bm{\Psi}(m) = \bm{\psi}$. As a result, the adversary knows that the true message $m_0$ must be in a very small set of possibilities, thus the name "informative". The set $\mathcal{H}(L,\epsilon_e)$ then considers the set of inner codebooks such that for any $(f_n,\psi)$ that is \textit{not} informative, no matter how the adversary designs the error vector $\bm{e}$, with high probability $1-\epsilon_e/(2L)$, each of the $L$ inner codewords that are closest to $\bm{y}=\bm{x} \rv{\oplus} \bm{e}$ is either outside the Hamming ball $B_{pn}(\bm{y})$ or can be ruled out by the outer codebook $\mathcal{C}_{\mathrm{out}}$.  

  Given the above intuition, we may consider any codebook in $\mathcal{H}(L,\epsilon_e)$ to be a good choice of $\mathcal{C}_{\mathrm{in}}$.  The reason is that when the pair $(f_n,\bm{\psi})$ is informative, the adversary knows very accurately which message is likely to be $m_0$ and thus it is hard to keep the error probability small. However, $\mathcal{H}(L,\epsilon_e)$ ensures that under a more favorable situation in which the $(f_n,\bm{\psi})$ is not informative, the inner codebook $\mathcal{C}_{\mathrm{in}}$ can take advantage of this ambiguity at the adversary and guarantee small error probability for the $L$ closest inner codewords (thus the enumerating index $i$) and regardless of how the adversary chooses the error vector $\bm{e}$.
  
  \begin{lemma}[Sufficient Condition for Achievability] \label{thm:overview_goal}
  Let $L > 1/\epsilon_{\rho}$ be a constant.  If for any $\epsilon_e>0$, there exists an $n_0$ such that for all $n \geq n_0$, the probability $\mathbb{P}_{\mathcal{C}_{\mathrm{in}}} (\mathcal{C}_{\mathrm{in}} \not\in \mathcal{H}(L,\epsilon_e)) < 1-1/n$, then the rate $R$ is $(c,s)$-achievable.
  \end{lemma}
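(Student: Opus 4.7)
The plan is to combine the lemma's hypothesis with the list-decodability guarantee of Lemma~\ref{thm:LD_lb_bounded} to extract a single inner codebook that simultaneously enjoys both properties, and then to analyze the upper bound (\ref{eq:prob_error_bd}) for such a codebook. Since $L>1/\epsilon_\rho$, Lemma~\ref{thm:LD_lb_bounded} asserts that a random $\mathcal{C}_{\mathrm{in}}\sim Q(n,\rho n)$ is $[L,p]$ list-decodable with probability exceeding $1-1/n$. Combining this with the lemma's hypothesis $\mathbb{P}_{\mathcal{C}_{\mathrm{in}}}(\mathcal{C}_{\mathrm{in}}\in\mathcal{H}(L,\epsilon_e))>1/n$ via a union bound on the complementary events, the probability that $\mathcal{C}_{\mathrm{in}}$ is simultaneously in $\mathcal{H}(L,\epsilon_e)$ and $[L,p]$ list-decodable exceeds $1-(1-1/n)-1/n=0$. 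Fix any such ``good'' codebook $\mathcal{C}_{\mathrm{in}}$ for the remainder of the argument.

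For this fixed $\mathcal{C}_{\mathrm{in}}$, list-decodability forces $|\mathcal{L}_{\mathrm{in}}|\leq L$ for every received word, so by the definition of $\bm{w}_i$ and $\mathcal{L}_{\mathrm{in}}$ we have $q_i(\cdot)\equiv 0$ for all $i>L$. The outer sum in (\ref{eq:prob_error_bd}) thus truncates to $i\in[L]$, and it suffices to uniformly (in $f_n\in\CPX$ and $i\in[L]$) control terms of the form $\mathbb{E}_{\bm{\Psi}}[\max_{\bm{e}\in\Be}q_i(f_n,\bm{\Psi},\bm{e},\mathcal{C}_{\mathrm{in}})]$.

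For each such term, I would partition the expectation over $\bm{\psi}$ according to whether $(f_n,\bm{\psi})$ is \emph{informative}, i.e.\ whether $\mathbb{P}_{m_0}(\bm{\Psi}(m_0)=\bm{\psi})<2^{(\delta_0+\delta_1-R)n}$. On the non-informative side, membership of $\mathcal{C}_{\mathrm{in}}$ in $\mathcal{H}(L,\epsilon_e)$ forces $q_i\leq\epsilon_e/(2L)$, contributing at most $\epsilon_e/(2L)$ to the expectation regardless of $\bm{e}$. On the informative side I use the trivial bound $q_i\leq 1$ and note that there are at most $2^{rn}$ distinct values of $\bm{\psi}$, each carrying probability mass less than $2^{(\delta_0+\delta_1-R)n}$; the total informative mass is thus at most $2^{(r+\delta_0+\delta_1-R)n}$. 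Condition~\ref{cond:small_parameters} forces $r<R-\delta_0-\delta_1$ (after substituting $R=C_{\mathrm{Shannon}}(p)-\epsilon_\rho-\epsilon_R$), so this exponent is strictly negative and the informative contribution eventually falls below $\epsilon_e/(2L)$.

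Summing over the $L$ nonzero indices yields $\bar{P}_e^{\mathrm{ub}}(\mathcal{C}_{\mathrm{in}})\leq\epsilon_e$ for all $n$ large enough, uniformly in $f_n$, establishing that $R$ is $(c,s)$-achievable. The delicate step is the joint-existence argument of the first paragraph: the two inputs supply only probabilities of magnitudes $1/n$ and $1-1/n$, so the intersection bound $1-(1-1/n)-1/n=0$ consumes all the slack they provide and relies essentially on the strictness of both inequalities. Everything else reduces to a careful parameter bookkeeping driven by Condition~\ref{cond:small_parameters}.
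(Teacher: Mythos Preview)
Your proposal is correct and follows essentially the same approach as the paper: combine Lemma~\ref{thm:LD_lb_bounded} with the hypothesis via a union bound to extract a single $\mathcal{C}_{\mathrm{in}}$ that is both $[L,p]$ list-decodable and in $\mathcal{H}(L,\epsilon_e)$, truncate the sum in (\ref{eq:prob_error_bd}) to $i\le L$, split the $\bm{\psi}$-expectation into informative and non-informative parts, and use Condition~\ref{cond:small_parameters} to kill the informative contribution. Your explicit remark that the existence argument hinges on the strictness of both probability bounds is exactly the point the paper relies on as well.
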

  
  \begin{proof}
  Let $L>1/\epsilon_{\rho}$ and let $\epsilon_e>0$. Consider the probability
  \begin{equation} \label{eq:prob_ov_goal}
  \mathbb{P}_{\mathcal{C}_{\mathrm{in}}}\bigg(\mathcal{C}_{\mathrm{in}} \text{ is not } [L,p] \text{ list decodable or }  \mathcal{C}_{\mathrm{in}} \not\in \mathcal{H}(L,\epsilon_e) \bigg).
  \end{equation}
  By a simple union bound, probability (\ref{eq:prob_ov_goal}) is bounded above by 
  \begin{equation} \label{eq:prob_ov_goal_ub}
  \mathbb{P}_{\mathcal{C}_{\mathrm{in}}}(\mathcal{C}_{\mathrm{in}} \text{ is not } [L,p] \text{ list dec.}) + \mathbb{P}_{\mathcal{C}_{\mathrm{in}}}(\mathcal{C}_{\mathrm{in}} \not\in \mathcal{H}(L,\epsilon_{\rho})).
  \end{equation}
  By Lemma \ref{thm:LD_lb_bounded}, there exists an $n_1$ such that for all $n\geq n_1$, the first term in equation (\ref{eq:prob_ov_goal_ub}) is bounded above by $1/n$. In turn, since for all $n \geq n_0$ the second term in equation (\ref{eq:prob_ov_goal_ub}) is strictly smaller than $1-1/n$, it follows that for all $n \geq \max \{n_0,n_1\}$  probability (\ref{eq:prob_ov_goal}) is strictly less than $1$. Thus, for each $n\geq \max \{n_0,n_1\}$, there exists an $(n, \rho n)$ codebook $\mathcal{C}^*_{\mathrm{in}}$ such that $\mathcal{C}_{\mathrm{in}}^*$ is $[L,p]$ list decodable and $\mathcal{C}^*_{\mathrm{in}} \in \mathcal{H}(L,\epsilon_e)$. 
  
  The above shows the existence of a special codebook $\mathcal{C}_{\mathrm{in}}^*$. In the following, we show that the error probability evaluated at $\mathcal{C}_{\mathrm{in}}^*$ can be upper bounded analytically. Specifically, since $\mathcal{C}_{\mathrm{in}}^*$ is $[L,p]$ list decodable, we have the identity $q_i(f_n,\bm{\psi},\bm{e},\mathcal{C}^*_{\mathrm{in}})=0$ for all $i > L \geq \max_{\bm{y} \in \{0,1\}^n} |\mathcal{L}_{\mathrm{in}}(\bm{y},\mathcal{C}_{\mathrm{in}})|$, and therefore, $\bar{P}^{\mathrm{ub}}_e(\mathcal{C}^*_{\mathrm{in}})$ in (\ref{eq:prob_error_bd}) is equal to
  \begin{equation} \nonumber
  \max_{f_n \in \CPX} \sum_{i=1}^L \mathbb{E}_{\bm{\Psi}} \left[ \max_{\bm{e} \in \Be} q_i(f_n,\bm{\psi},\bm{e},\mathcal{C}_{\mathrm{in}}^*) \right].
  \end{equation}
  For any fixed $f_n \in \CPX$ and fixed $i\in[1,L]$, we have
  \begin{align}
  & \mathbb{E}_{\bm{\Psi}} \left[\max_{\bm{e} \in \Be} q_i(f_n,\bm{\psi},\bm{e},\mathcal{C}_{\mathrm{in}}^*) \right] \nonumber \\
  & =\sum_{\substack{\bm{\psi}: (f_n, \bm{\psi})  \text{ is} \\ \text{not informative}}} \mathbb{P}_{m_0}(\bm{\Psi}(m_0)=\bm{\psi})\cdot \max_{\bm{e} \in \Be} q_i(f_n,\bm{\psi},\bm{e},\mathcal{C}_{\mathrm{in}}^*) \label{eq:suff_term1}\\
  & + \sum_{\substack{\bm{\psi}: (f_n, \bm{\psi})  \text{ is} \\ \text{ informative}}} \mathbb{P}_{m_0}(\bm{\Psi}(m_0)=\bm{\psi}) \cdot \max_{\bm{e} \in \Be} q_i(f_n,\bm{\psi},\bm{e},\mathcal{C}_{\mathrm{in}}^*) \label{eq:suff_term2}
  \end{align}
  for which we partition based on the events that $\bm{\psi}$ and the given $f_n$ are informative or not. Since $\mathcal{C}_{\mathrm{in}}^* \in  \mathcal{H}(L,\epsilon_e)$, by the definition of $\mathcal{H}(L,\epsilon_e)$, the first summation (\ref{eq:suff_term1}) can be upper bounded by 
  \begin{equation} \label{eq:suff_term1_ub}
  \sum_{\substack{\bm{\psi}: (f_n,\psi) \text{ is} \\ \text{not informative}}}  \mathbb{P}_{m_0}(\bm{\Psi}(m_0)=\bm{\psi}) \cdot \frac{\epsilon_e}{2L} \leq \frac{\epsilon_e}{2L}.
  \end{equation}
  Since $q(\cdot)$ is a probability, we have $\max_{ \bm{e} \in \Be} q_i(f_n,\bm{\psi},\bm{e},\mathcal{C}_{\mathrm{in}}^*)\leq 1$. By the definition of $\mathcal{H}(L,\epsilon_e)$ the second summation (\ref{eq:suff_term2}) can be upper bounded by 
  \begin{equation} \label{eq:suff_term2_ub}
  \sum_{ \bm{\psi}: (f_n, \psi) \text{ is informative}} \mathbb{P}_{m_0}(\bm{\Psi}(m_0)=\bm{\psi})\leq 2^{r} 2^{(\delta_0+\delta_1-R)n}.
  \end{equation}
  By (\ref{eq:suff_term1_ub}) and (\ref{eq:suff_term2_ub}) and by summing over $i=1, \ldots, L$, we have that $\bar{P}_e^{\mathrm{ub}}(\mathcal{C}_{\mathrm{in}}^*)$ is bounded above by \begin{equation} \label{eq:suff_total_ub}
   \max_{f_n\in \CPX } L \left( \frac{\epsilon_e}{2L}+2^{(r+\delta_0+\delta_1-R)n} \right).
  \end{equation}
  Following Condition \ref{cond:small_parameters}, the exponent $r + \delta_0 + \delta_1 - R$ is strictly negative, and thus for large enough $n$, the quantity (\ref{eq:suff_total_ub}) is bounded above by $\epsilon_e$. In conclusion, for large enough $n$, $\bar{P}_e(\mathcal{C}^*_{\mathrm{in}}) \leq \epsilon_e$.


  \end{proof}
  
  As a result, for $L>1/\epsilon_{\rho}$ and $\epsilon_e>0$, our strategy will be to lower bound the probability $\mathbb{P}_{\mathcal{C}_{\mathrm{in}}}(\mathcal{C}_{\mathrm{in}} \in \mathcal{H}(L,\epsilon_e))$ and apply Lemma \ref{thm:overview_goal}. In this strategy, a significant amount of work is needed to show that the following statement holds with probability greater than $1/n$ over the choice of $\Cin$: 
  \begin{equation} \label{eq:ovv_max_q}
  \max_{\substack{(i,f_n,\bm{\psi},\bm{e}) \in \mathcal{P}(L): \\ (f_n\bm{\psi}) \text{ is not informative}}} q_i(f_n,\bm{\psi},\bm{e},\Cin) \leq \frac{\epsilon_e}{2L}.
  \end{equation}
  The first step in this work is to show that each of the  $q_i(f_n,\bm{\psi},\bm{e},\Cin)$ terms in (\ref{eq:ovv_max_q}) has a small expectation (w.r.t $\Cin$), i.e.,
  \begin{equation} \label{eq:ovv_ex_lim}
  \lim_{n \rightarrow \infty}\max_{(i,f_n,\bm{\psi},\bm{e}) \in \mathcal{P}(L)} \mathbb{E}_{\Cin}[q_i(f_n,\bm{\psi},\bm{e},\Cin)] = 0.
  \end{equation} 
  We prove this result in Lemma \ref{thm:expected_t_new}. The next step is to show that each of the $q_i(f_n,\bm{\psi},\bm{e},\Cin)$ terms is \textit{concentrated} around its expectation $\mathbb{E}_{\Cin}[q_i(f_n,\bm{\psi},\bm{e},\Cin)]$, i.e., for any $\epsilon_e' \in (0,\frac{\epsilon_e}{2L})$, for large enough $n$ and with probability greater than $1/n$ over the choice of $\Cin$, the following inequality holds:
  \begin{equation} \label{eq:ovv_max_conc}
  \begin{aligned}
  & \max_{\substack{(i,f_n,\bm{\psi},\bm{e}) \in \mathcal{P}(L): \\ (f_n,\bm{\psi}) \text{ not informative}}} \left( q_i(f_n,\bm{\psi},\bm{e},\Cin) - \mathbb{E}_{\Cin}[q_i(f_n,\bm{\psi},\bm{e},\Cin)] \right) \\
  & \hspace{21em} \leq  \epsilon_e'.
  \end{aligned}
  \end{equation}
  The bulk of our proof is dedicated towards this step. Since $\epsilon'_e < \epsilon_e/2L$ with strict inequality, (\ref{eq:ovv_ex_lim}) and (\ref{eq:ovv_max_conc}) together imply that (\ref{eq:ovv_max_q}) holds with probability strictly greater than $1/n$. In the remainder of this overview, we outline our approach for studying the concentration of measure of $q_i(f_n,\bm{\psi},\bm{e},\Cin)$.

 
   \textbf{Concentration:}  When confusion can be avoided, we drop the notated dependencies and subscripts of $q_i(f_n,\bm{\psi},\bm{e},\mathcal{C}_{\mathrm{in}})$ and simply write $q(\mathcal{C}_{\mathrm{in}})$ to emphasize the dependency on $\mathcal{C}_{\mathrm{in}}$. For integer $L>1/\epsilon_{\rho}$, fixed $(i,f_n,\bm{\psi},\bm{e}) \in \mathcal{P}(L)$ such that $(f_n,\bm{\psi})$ is not informative, and for $n=1,2,3,\ldots$, we study the concentration of measure of $q(\mathcal{C}_{\mathrm{in}})$ around its expectation $\mathbb{E}_{\mathcal{C}_{\mathrm{in}}}[q]$ by deriving concentration inequalities from the logarithmic Sobolev inequalities, e.g., \cite{Boucheron2003ConcentrationMethod}. This method of deriving concentration inequalities is also known as the entropy method.

   At a high level, the concentration inequalities tell us that if a function $g$ from the set of $(n, \rho n)$ codebooks to the real numbers is \textit{smooth} for \textit{most} $(n, \rho n)$ codebooks, then $g$ is concentrated (around its expectation). To define ``most'', we define a subset $\mathcal{T}$ of $(n, \rho n)$ codebooks with the property that $\mathbb{P}_{\Cin}(\Cin \not\in \mathcal{T}) = \exp\{-2^{\Omega(n)} \}$ (Definition \ref{def:typ_param}). We refer to the set $\mathcal{T}$ as a typical set of $(n, \rho n)$ codebooks and say a codebook $\Cin$ is typical if $\Cin \in \mathcal{T}$. To define ``smooth'', define the variation of $g$ as
   \begin{equation} \nonumber
   V(\mathcal{C}_{\mathrm{in}}) = \sum_{j=1}^{2^{\rho n}} \mathbb{E}_{\bm{z}}|g(\mathcal{C}_{\mathrm{in}}) - g(\mathcal{C}_{\mathrm{in}}(j,\bm{z}))|^2
   \end{equation}
   where codebook $\Cin(j,\bm{z})$ is equal to $\Cin$ with the $j$th codeword replaced with the word $\bm{z}$ uniformly distributed in $\{0,1\}^n$. We say that a number $a_G>0$ is a \textit{global variation coefficient} of $g$ if for any $(n, \rho n)$ codebook $\Cin$, $V(\Cin) \leq a_G$. Similarly, we say that $a_T>0$ is a \textit{typical variation coefficient} of $g$ if for any $\Cin \in \mathcal{T}$, $V(\Cin) \leq a_T$. Finally, we say that $g$ is smooth for most codebooks if $g$ has typical and global variation coefficients that are both sufficiently small.
   
   Given the above definitions, the following statement summarizes our concentration inequalities: If $g$ has a typical variation coefficient $a_T = \exp \{ -2^{\Omega(n)}\}$ and a global variation coefficient $a_G = O(1)$, then\footnote{See Lemma \ref{thm:global_typical_conc} for additional conditions on $a_T$ and $a_G$.}
   \begin{equation} \nonumber
   \mathbb{P}_{\Cin}(g - \mathbb{E}_{\Cin}[g] > \epsilon_e') = \exp \{ -2^{\Omega(n)} \}.
   \end{equation}
   Three remarks are at hand. First, the double-exponential bound ensures that a union bound can be successfully applied to the probability that event (\ref{eq:ovv_max_conc}) occurs (more on this below). Second, the $O(1)$ global variation coefficient prevents the inequalities from blowing up over the set $\mathcal{T}^c$. Lastly, these inequalities cannot be directly applied in our setting to show that $q$ is concentrated. This last remark follows from the fact that while one can find a typical variation coefficient of $q$ that is $\exp\{-2^{\Omega(n)}\}$, it is difficult to find a global variation coefficient of $q$ that is $O(1)$. To circumvent this issue, we proceed with the following additional steps.

   \begin{itemize}
   \item For an $(n, \rho n)$ codebook $\Cin$, we define an approximation function $q'(\Cin)$ to approximate $q(\Cin)$ (Definition \ref{def:approx_q}). The approximation function has the following properties:
   \begin{itemize}
   \item For any $(n, \rho n)$ codebook $\Cin$, $q'(\Cin) \leq q(\Cin)$ which holds with equality if $\Cin \in \mathcal{T}$. Hence, the function $q'$ is a good approximation of $q$ in the sense that $\mathbb{P}_{\Cin}(q \neq q') \leq \mathbb{P}_{\Cin}(\Cin \in \mathcal{T})$.
   \item The function $q'$ has a global variation coefficient that is $O(1)$ (Lemma \ref{thm:global_coeff}). We remark that the concatenated structure of our code construction simplifies the proof of this bound. 
   \end{itemize}
   \item Given a global variation coefficient that is $O(1)$, we show that $q'$ is concentrated, i.e., $\mathbb{P}_{\Cin}(q' - \mathbb{E}_{\Cin}[q'] > \epsilon_e') = \exp\{-2^{\Omega(n)} \}$ (Lemma \ref{thm:global_typical_conc}).
   \item We show the concentration of $q$ by proving that our special construction of $q'$ satisfies the following approximation bound: 
   \begin{align}
   & \mathbb{P}_{\Cin}(q - \mathbb{E}_{\Cin}[q] > \epsilon_e') \nonumber \\
   &\leq \mathbb{P}_{\Cin}(q' - \mathbb{E}_{\Cin}[q'] > \epsilon_e') + \mathbb{P}_{\Cin}(q \neq q') \nonumber \\ 
   &\leq \mathbb{P}_{\Cin}(q' - \mathbb{E}_{\Cin}[q'] > \epsilon_e') + \mathbb{P}_{\Cin}(\Cin \not\in \mathcal{T}) \nonumber
   \end{align}
   (Lemma \ref{thm:approx_ineq}). It follows that $\mathbb{P}_{\Cin}(q - \mathbb{E}_{\Cin}[q] > \epsilon_e') = \exp \{-2^{\Omega(n)}\}$.
   \end{itemize}
   
   To complete the proof that inequality (\ref{eq:ovv_max_q}) holds with probability greater than $1/n$, we apply a simple union bound:
   \begin{align} \nonumber
   & \mathbb{P}_{\Cin} \left(\max_{\substack{(i,f_n,\bm{\psi},\bm{e}) \in \mathcal{P}(L): \\ (f_n,\bm{\psi}) \text{ not inform.}}} \left(q-\mathbb{E}[q] \right) > \epsilon_e' \right) \\
   & \hspace{4em} \leq  |\mathcal{P}(L)| \cdot \max_{\substack{(i,f_n,\bm{\psi},\bm{e}) \in \mathcal{P}(L): \\ (f_n,\bm{\psi}) \text{ not inform.}}} \mathbb{P}_{\Cin}(q - \mathbb{E}[q] > \epsilon_e') \nonumber \\
   & \hspace{4em} \leq |\mathcal{P}(L)| \exp \{-2^{\Omega(n)} \} \label{eq:plzbtl}
   \end{align}
   where $|\mathcal{P}(L)|$ denotes the number of elements in the product space $[L] \times \CPX \times \{0,1\}^{\rho n} \times \mathcal{B}_{pn}(0)$. The final step is to show that (\ref{eq:plzbtl}) is bounded above by $1-1/n$, which we show by verifying that $|\mathcal{P}(L)| = 2^{\mathrm{poly}(n)}$ and thus $|\mathcal{P}(L)|$ is growing much slower than the double exponential $\exp\{ 2^{\Omega(n)}\}$. The key idea in this final step is to use the adversary's computational bound and show that the number of functions in $\CPX$ is $2^{\mathrm{poly}(n)}$. We remark that the bounded observation complexity is critical to the proof since if we allow for unbounded circuit complexity, the set $\textset{CKT}(r,\infty)$ contains $\exp \{2^{\Omega(n)} \}$ functions.
   

   \textbf{Prior work:} The above approach is inspired by Langberg's framework \cite{Langberg2008ObliviousCapacity} to study concentration of measure when the function under analysis is smooth over a typical set $\mathcal{T}$ of codebooks. The main technical contribution of \cite{Langberg2008ObliviousCapacity} is to carefully define $\mathcal{T}$ based on the codebooks' list decodable properties in a way where one can then apply Vu's martingale-type concentration inequalities for non-smooth functions \cite{Vu2002ConcentrationApplications}. We follow Langberg's framework by also defining typicality in terms of list decodability. However, we use concentration inequalities derived via the entropy method.
   
   The major technical difference between our work and Langberg's \cite{Langberg2008ObliviousCapacity} lies at the definition of smoothness. Langberg adopts a Lipschitz criterion of smoothness which states that a function $g$ is smooth if $g$ has a sufficiently small \textit{typical Lipschitz coefficient} $K_T>0$; $K_T$ is said to be a typical Lipschitz coefficient if for any $\Cin \in \mathcal{T}$, the quantity
   \begin{equation} \label{eq:def_lipW} \nonumber
   W(\mathcal{C}_{\mathrm{in}}) = 2^{\rho n} \max_{j \in [2^{\rho n}], \bm{z} \in \{0,1\}^{n}} |g(\mathcal{C}_{\mathrm{in}}) - g(\mathcal{C}_{\mathrm{in}}(j,\bm{z}))|^2
   \end{equation}
   is bounded above by $K_T$. Similarly, a number $K_G$ is said to be a global Lipschitz coefficient if for any $(n, \rho n)$ codebook $\Cin$, $W(\Cin) \leq K_G$. Our work identifies and exploits two advantages of using the variation criterion over the Lipschitz criterion for characterizing smoothness in our setting. First, for a typical codebook $\Cin \in \mathcal{T}$, variation $V(\Cin)$ captures more information about the behavior of $g$ locally around codebook $\mathcal{C}_{\mathrm{in}}$ than $W(\Cin)$. We leverage this additional information to find typical variation coefficients for $q'$ that are smaller than any typical Lipschitz coefficient. Second, for a non-typical codebook $\Cin \not\in \mathcal{T}$, the best bound on $W(\Cin)$ is $O(2^{\rho n})$. Thus, a good global Lipschitz coefficient of $q'$ is much larger than the $O(1)$ global variation coefficient established in our proof.
   
   Similar to our work and the work of \cite{Langberg2008ObliviousCapacity}, other works adopt proof techniques that are combinatorial in nature. These include the studies by Csisz\'{a}r and Narayan \cite{Csiszar1988TheConstraints,Csiszar1989CapacityChannels} on ACVs with input and state constraints which adopt a method-of-types approach. We note that the channel controlled by a $\CPX$-oblivious adversary with error budget $p$ can be formulated as an AVC with state constraints. \rv{These works also include the study by Dey, Jaggi and Langberg \cite{Dey2019a} on myopic adversarial channels.\footnote{\rv{We note that the proof techniques of \cite{Dey2019a} can provide a simple alternative proof of Theorem \ref{thm:main_result}. We provide an outline of this alternative proof in Appendix \ref{sec:alt_proof}.}}}

   \rv{Lastly, we remark that our proof techniques and analysis allow for generalization to other channel models. For example, straightforward modifications of our techniques/analysis can allow the bit flip channel from Alice to Bob to be generalized to a $q$-ary error/erasure channel for $q \geq 2$ in which Alice sends symbols from a $q$-ary alphabet and the adversary can induce both symbol errors and symbol erasures.}

  \section{Proof of Theorem \ref{thm:main_result}} \label{sec:analysis}
   
  In the sequel, we use the setup described in Section \ref{sec:overview}.
 
  \subsection{Notation} \label{sec:up_pe}
  
  The following notation will assist in our proof of Theorem \ref{thm:main_result}. For an observation function $f_n \in \CPX$ and observation $\bm{\psi} \in \{0,1\}^{rn}$, we define the observation set $\mathcal{O}_{\bm{\psi}} = \{\bm{z} \in \{0,1\}^n: f_n(\bm{z}) = \bm{\psi} \}$. Note that observing $\bm{\Psi}(\rv{m_0}) \triangleq f_n(\mathcal{C}_n(m_0)) = \psi$ is equivalent to knowing that $\mathcal{C}_n(m_0) \in \mathcal{O}_{\bm{\psi}}$. Hence, the following two perspectives are equivalent: a) the adversary draws an observation function $f_n$ and b) the adversary draws a partition $\vec{\mathcal{O}} = (\mathcal{O}_1, \ldots, \mathcal{O}_{2^{rn}})$ of the space $\{0,1\}^n$ consisting of $2^{rn}$ non-empty observation sets. With an abuse of notation, for $f_n \in \CPX$, we write $\vec{\mathcal{O}} \in \CPX$ to denote the partition of observation sets corresponding to $f_n$ with circuit complexity upper bounded by $cn^s$. Along the same lines, for $(i,\vec{\mathcal{O}},\bm{\psi},\bm{e}) \in \mathcal{P}(2^{\rho n})$ and for an $(n, \rho n)$ codebook $\mathcal{C}_{\mathrm{in}}$, we write $q_i(\vec{\mathcal{O}},\bm{\psi},\bm{e},\mathcal{C}_{\mathrm{in}})$ to denote $q_i(f_n,\bm{\psi},\bm{e},\mathcal{C}_{\mathrm{in}})$.

  \subsection{Expectation of $q$} \label{sec:exp_q}
  
  For any $(i,\vec{\mathcal{O}},\bm{\psi},\bm{e}) \in \mathcal{P}(2^{\rho n})$, the following result characterizes the expectation of $q_i(\vec{\mathcal{O}},\bm{\psi},\bm{e},\mathcal{C}_{\mathrm{in}})$ when the $(n, \rho n)$ codebook $\mathcal{C}_{\mathrm{in}}$ is drawn from distribution $\rv{Q}(n, \rho n)$.
   
  \begin{lemma} \label{thm:expected_t_new}
  \begin{equation} \nonumber
  \lim_{n\rightarrow \infty}\max_{(i,\vec{\mathcal{O}},\bm{\psi},\bm{e}) \in \mathcal{P}(2^{\rho n})} \mathbb{E}_{\mathcal{C}_{\mathrm{in}}}[q_i(\vec{\mathcal{O}},\bm{\psi},\bm{e},\mathcal{C}_{\mathrm{in}})] =0.
  \end{equation}
  \end{lemma}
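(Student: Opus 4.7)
The plan is to bound $\mathbb{E}_{\mathcal{C}_{\mathrm{in}}}[q_i]$ uniformly over all tuples $(i,\vec{\mathcal{O}},\bm{\psi},\bm{e})$ by exploiting the i.i.d.\ uniform structure of the random inner codebook. Setting $\bm{x}_m \triangleq \mathcal{C}_{\mathrm{in}}(\mathcal{C}_{\mathrm{out}}(m))$, my first step is to use the equivalence $\bm{\Psi}(m_0)=\bm{\psi} \Leftrightarrow \bm{x}_{m_0} \in \mathcal{O}_{\bm{\psi}}$ together with the uniformity of $m_0$ to rewrite
\[
q_i = \frac{|\{m : \bm{x}_m \in \mathcal{O}_{\bm{\psi}},\; \bm{w}_i \in \mathcal{L}_{\mathrm{in}} \cap \mathcal{I}_m\}|}{K}, \quad K \triangleq |\mathcal{O}_{\bm{\psi}} \cap \mathcal{C}_n|,
\]
with the convention $q_i := 0$ when $K=0$. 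The key combinatorial reduction is that $\bm{w}_i \in \mathcal{L}_{\mathrm{in}} \cap \mathcal{I}_m$ forces the existence of some $m' \neq m$ with $\bm{x}_{m'} = \mathcal{C}_{\mathrm{in}}(\bm{w}_i) \in \mathcal{B}_{pn}(\bm{x}_m \oplus \bm{e})$, whence
\[
q_i \cdot K \;\leq\; Z \;\triangleq\; \sum_{m \neq m'} \mathbb{I}[\bm{x}_m \in \mathcal{O}_{\bm{\psi}}]\, \mathbb{I}[\bm{x}_{m'} \in \mathcal{B}_{pn}(\bm{x}_m \oplus \bm{e})].
\]

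Since $\mathcal{C}_{\mathrm{out}}$ is one-to-one and $\mathcal{C}_{\mathrm{in}}$ is drawn from $Q(n,\rho n)$, the $\bm{x}_m$'s are i.i.d.\ uniform on $\{0,1\}^n$. Combining this with the standard volume bound $|\mathcal{B}_{pn}| \leq 2^{H(p) n}$, linearity of expectation yields
\[
\mathbb{E}[Z] \;\leq\; 2^{Rn}(2^{Rn}-1)\cdot \frac{|\mathcal{O}_{\bm{\psi}}|}{2^n}\cdot \frac{|\mathcal{B}_{pn}|}{2^n} \;\leq\; \mu_K \cdot 2^{(R - C_{\mathrm{Shannon}}(p))n},
\]
where $\mu_K \triangleq \mathbb{E}[K] = 2^{Rn}|\mathcal{O}_{\bm{\psi}}|/2^n$. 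This is the only step that invokes $R < C_{\mathrm{Shannon}}(p)$ and it supplies the exponentially small factor driving the whole argument.

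The main obstacle is then passing from the bound on $\mathbb{E}[Z]$ to a bound on $\mathbb{E}[Z/K]$, which is delicate because $Z$ and $K$ are correlated and $\mu_K$ can range from $o(1)$ to $2^{Rn}$ depending on the cell size $|\mathcal{O}_{\bm{\psi}}|$. My plan is to combine two complementary estimates: \emph{(a)} using $1/K \leq 1$ whenever $K \geq 1$ gives $\mathbb{E}[q_i] \leq \mathbb{E}[Z] \leq \mu_K \cdot 2^{(R - C_{\mathrm{Shannon}}(p))n}$, which is small when $\mu_K$ is not too large; and \emph{(b)} splitting on whether $K \geq \mu_K/2$ and invoking the concentration of $|\mathcal{O}_{\bm{\psi}} \cap \mathcal{C}_n|$ supplied by Lemma~\ref{thm:size_AandC} yields
\[
\mathbb{E}[q_i] \;\leq\; \frac{2\mathbb{E}[Z]}{\mu_K} + \mathbb{P}[K < \mu_K/2] \;\leq\; 2\cdot 2^{(R - C_{\mathrm{Shannon}}(p))n} + 2\exp(-\mu_K/16),
\]
which is small when $\mu_K$ is large. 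Thresholding at $\mu_K = 2^{(C_{\mathrm{Shannon}}(p) - R) n/2}$, bound (a) handles the small-$\mu_K$ regime and bound (b) handles the large-$\mu_K$ regime; in either case the resulting estimate is $o(1)$ uniformly in $(i,\vec{\mathcal{O}},\bm{\psi},\bm{e})$, and taking the maximum followed by $n \to \infty$ gives the claim.
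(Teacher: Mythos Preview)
Your proof is correct and takes a genuinely different route from the paper's. The paper first upper-bounds $q_i$ by the indicator that \emph{some} outer codeword other than $\mathcal{C}_{\mathrm{out}}(m_0)$ lands in the decoding list, then reduces (via an exchange of expectations over $\mathcal{C}_{\mathrm{in}}$ and $m_0\mid\bm{\Psi}=\bm{\psi}$) to bounding $\max_m \mathbb{E}_{\mathcal{C}_{\mathrm{in}}}[\mathds{1}\{\mathcal{I}_m\cap\mathcal{L}_{\mathrm{in}}(\bm{y}_m,\mathcal{C}_{\mathrm{in}})\neq\emptyset\}]$; for fixed $m$ it conditions on $\mathcal{C}_{\mathrm{in}}$ being $[n^2{+}1,p]$ list-decodable (Lemma~\ref{thm:LD_lb}) so that each $m'\neq m$ lands in the list with probability at most $n^2/(2^{\rho n}-1)$, and a union bound over $m'$ gives a term of order $2^{(R-\rho)n}n^2$. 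You instead treat the random ratio $Z/K$ head-on: $\mathbb{E}[Z]$ is computed directly from the i.i.d.\ uniform structure and the volume bound (no list-decodability), and the random denominator is handled by the dichotomy on $\mu_K$ together with the concentration from Lemma~\ref{thm:size_AandC}. The paper's argument is shorter once the expectation swap is granted and exploits the inner gap $R<\rho$; your argument is more self-contained, makes the treatment of the random normalizer $K$ fully explicit, and uses only the coarser gap $R<C_{\mathrm{Shannon}}(p)$.
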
 
  
  \begin{proof}
  Our approach is to find a small upper bound of the expectation $\mathbb{E}_{\Cin}[q_i(\vec{\mathcal{O}},\bm{\psi},\bm{e},\Cin)]$ that is independent of the parameters $i$, $\vec{\mathcal{O}}$, $\bm{\psi}$ and $\bm{e}$. Let $\emptyset$ denote the empty set and let $(i,\vec{\mathcal{O}},\bm{\psi},\bm{e}) \in \mathcal{P}(2^{\rho n})$. Observe that for an $(n, \rho n)$ codebook $\mathcal{C}_{\mathrm{in}}$, $q_i(\vec{\mathcal{O}},\bm{\psi},\bm{e},\mathcal{C}_{\mathrm{in}})$ is bounded above by $\mathbb{P}_{m_0}(\mathcal{I}_{m_0} \cap \mathcal{L}_{\mathrm{in}}(\bm{y},\mathcal{C}_{\mathrm{in}}) \neq \emptyset|\bm{\Psi}(m_0) = \bm{\psi})$. Hence, for any $(i,\vec{\mathcal{O}},\bm{\psi},\bm{e})\in \mathcal{P}(2^{\rho n})$ we have
  \begin{align}
  & \mathbb{E}_{\mathcal{C}_{\mathrm{in}}}[q_i(\vec{\mathcal{O}},\bm{\psi},\bm{e},\mathcal{C}_{\mathrm{in}})] \nonumber \\
  & \leq \mathbb{E}_{\mathcal{C}_{\mathrm{in}}} \left[ \mathbb{P}_{m_0}(\mathcal{I}_{m_0} \cap \mathcal{L}_{\mathrm{in}}(m_0,\bm{e},\mathcal{C}_{\mathrm{in}}) \neq \emptyset|\bm{\Psi}(m_0) = \bm{\psi}) \right] \nonumber \\
  & = \mathbb{E}_{\mathcal{C}_{\mathrm{in}}} \left[ \mathbb{E}_{m_0|\bm{\Psi}=\bm{\psi}} \left[\mathds{1}\{ \mathcal{I}_{m_0} \cap \mathcal{L}_{\mathrm{in}}(\bm{y},\mathcal{C}_{\mathrm{in}}) \neq \emptyset\} \right] \right] \nonumber \\
  & = \mathbb{E}_{m_0|\bm{\Psi}=\bm{\psi}} \left[ \mathbb{E}_{\mathcal{C}_{\mathrm{in}}} \left[\mathds{1}\{ \mathcal{I}_{m_0} \cap \mathcal{L}_{\mathrm{in}}(\bm{y},\mathcal{C}_{\mathrm{in}}) \neq \emptyset\} \right] \right] \nonumber 
  \end{align} 
  Thus, to prove Lemma \ref{thm:expected_t_new}, it is sufficient to show that the quantity
  \begin{equation} \label{thm:expected_t_new_suff}
  \max_{\substack{(m,i,\vec{\mathcal{O}},\bm{\psi},\bm{e}) \\ \in [2^{Rn}] \times \mathcal{P}(2^{\rho n})}} \mathbb{E}_{\mathcal{C}_{\mathrm{in}}} \left[\mathds{1}\{ \mathcal{I}_{m} \cap \mathcal{L}_{\mathrm{in}}(\bm{y},\mathcal{C}_{\mathrm{in}}) \neq \emptyset\} | m_0 = m \right]
  \end{equation}
  is going to zero in the limit as $n \rightarrow \infty$. This sufficient condition simplifies our problem as the expectation inside the maximum of (\ref{thm:expected_t_new_suff}) only depends on parameters $m$, $i$ and $\bm{e}$, but not on $\vec{\mathcal{O}}$ or $\bm{\psi}$. In the next steps, we will further bound the expectation in (\ref{thm:expected_t_new_suff}) to relax its dependency on parameters $i$ and $\bm{e}$ by exploiting the list-decodability properties of the random codebook $\Cin$.
  
  Let $m \in [2^{Rn}]$, $(i,\vec{\mathcal{O}},\bm{\psi},\bm{e}) \in \mathcal{P}(2^{\rho n})$ and let $\bm{y}_m = \mathcal{C}_n (m) \rv{\oplus} \bm{e}$. By the definition of the set $\mathcal{I}_m$ and a simple union bound, the quantity $\mathbb{E}_{\mathcal{C}_{\mathrm{in}}} \left[\mathds{1}\{ \mathcal{I}_{m} \cap \mathcal{L}_{\mathrm{in}}(\bm{y},\mathcal{C}_{\mathrm{in}}) \neq \emptyset\} | m_0 =m \right]$ is bounded above by 
  \begin{equation} \nonumber
  \sum_{m' \in [2^{Rn}] \setminus \{m\}} \mathbb{P}_{\mathcal{C}_{\mathrm{in}}}(\mathcal{C}_{\mathrm{out}}(m') \in \mathcal{L}_{\mathrm{in}}(\bm{y}_m,\mathcal{C}_{\mathrm{in}}))
  \end{equation}
  which in turn, by letting $\mathcal{E}$ be the event that $\mathcal{C}_{\mathrm{in}}$ is $[n^2+1,p]$ list decodable and by the law of total probability, is bounded above by
  \begin{equation} \label{eq:exp_ub_final}
  \sum_{m' \in [2^{Rn}]\setminus \{m\}} \left( \mathbb{P}_{\mathcal{C}_{\mathrm{in}}}(\mathcal{C}_{\mathrm{out}}(m') \in \mathcal{L}_{\mathrm{in}}(\bm{y}_m,\mathcal{C}_{\mathrm{in}})| \mathcal{E}) + \mathbb{P}_{\mathcal{C}_{\mathrm{in}}}(\mathcal{E}^c) \right).
  \end{equation}
  Note that for $m' \in [2^{Rn}] \setminus \{m\}$, $\mathbb{P}_{\mathcal{C}_{\mathrm{in}}}(\mathcal{C}_{\mathrm{out}}(m') \in \mathcal{L}_{\mathrm{in}}(\bm{y}_m,\mathcal{C}_{\mathrm{in}})|\mathcal{E})$ is bounded above by $\frac{n^2}{2^{\rho n}-1}$ following that the codeword $\mathcal{C}_{\mathrm{in}} \circ \mathcal{C}_{\mathrm{out}}(m')$ can be one of at most $n^2$ codewords of $\mathcal{C}_{\mathrm{in}}$ randomly chosen from $2^{\rho n}-1$ codewords (nb. we can exclude codeword $\mathcal{C}_{\mathrm{in}} \circ \mathcal{C}_{\mathrm{out}}(m))$ contained in $\mathcal{L}_{\mathrm{in}}(\bm{y}_m,\mathcal{C}_{\mathrm{in}})$ by the list decodability properties of $\mathcal{C}_{\mathrm{in}}$. Also, by Lemma \ref{thm:LD_lb}, for large enough $n$, $\mathbb{P}_{\mathcal{C}_{\mathrm{in}}}(\mathcal{E}^c)$ is bounded above by $2^{-(n^2+1)/4}$. It follows that quantity (\ref{eq:exp_ub_final}) is bounded above by $\frac{2^{Rn}-1}{2^{\rho n}-1}n^2 + (2^{Rn}-1)2^{-(n^2+1)/4}$ which in turn is going to zero in the limit as $n \rightarrow \infty$ independent of $(m,i,\vec{\mathcal{O}},\bm{\psi},\bm{e}) \in [2^{Rn}] \times \mathcal{P}(2^{\rho n})$.
  \end{proof}

  \subsection{Approximation of $q$} \label{sec:approx_q}

  For $(i,\vec{\mathcal{O}},\bm{\psi},\bm{e}) \in \mathcal{P}(2^{\rho n})$, we will not directly show that the quantity $q(\vec{\mathcal{O}},\bm{\psi},\bm{e},\mathcal{C}_{\mathrm{in}})$ (as a function of $\mathcal{C}_{\mathrm{in}}$) is concentrated. Instead, we will approximate $q(\vec{\mathcal{O}},\bm{\psi},\bm{e},\mathcal{C}_{\mathrm{in}})$ with an approximation function $q'(\vec{\mathcal{O}},\bm{\psi},\bm{e},\mathcal{C}_{\mathrm{in}})$ and study the concentration of $q'(\vec{\mathcal{O}},\bm{\psi},\bm{e},\mathcal{C}_{\mathrm{in}})$. We carefully define the approximation function such that $q'(\vec{\mathcal{O}},\bm{\psi},\bm{e},\mathcal{C}_{\mathrm{in}})$ has good smoothness properties (and thus $q'(\vec{\mathcal{O}},\bm{\psi},\bm{e},\mathcal{C}_{\mathrm{in}})$ concentrates around its mean), and such that we can imply the concentration of $q(\vec{\mathcal{O}},\bm{\psi},\bm{e},\mathcal{C}_{\mathrm{in}})$ from the concentration of $q'(\vec{\mathcal{O}},\bm{\psi},\bm{e},\mathcal{C}_{\mathrm{in}})$.
   
  We first define typical codebooks. For the parameter $\delta_0 >0$, we define typical codebooks for all $\vec{\mathcal{O}} \in \CPX$ and $\bm{\psi} \in \{0,1\}^{rn}$ such that the observation set $\mathcal{O}_{\bm{\psi}}$ is larger than $2^{(1-R)n}2^{\delta_0 n}$. We remark that the condition $|\mathcal{O}_{\bm{\psi}}| \geq 2^{(1-R)n}2^{\delta_0 n}$ is related to the pair $(f_n,\bm{\psi})$ being \rv{not} informative (see definition of informative in the overview of Section \ref{sec:overview}). In our analysis of $\bar{P}_{e}^{\mathrm{ub}}$, we will let decoding fail for all $\mathcal{O}_{\bm{\psi}}$ that are smaller than $2^{(1-R)n}2^{\delta_0 n}$. Hence, there is no need to define typical codebooks for small observation sets. 
  
  \begin{definition}[Typical Codebooks] \label{def:typ_param}
  Suppose that $\vec{\mathcal{O}} \in \CPX$ and $\bm{\psi} \in \{0,1\}^{rn}$ such that $|\mathcal{O}_{\bm{\psi}}| 
  \geq 2^{(1-R)n} 2^{\delta_0 n}$. Set $\delta_0' = \delta_0'(\vec{\mathcal{O}},\bm{\psi},\epsilon_{\rho},\epsilon_R) \geq \delta_0$ to be the unique number such that $|\mathcal{O}_{\bm{\psi}}| = 2^{(1-R)n} 2^{\delta_0' n}$. Set  
  \begin{align}
  & \ell(\vec{\mathcal{O}},\bm{\psi},\epsilon_R,\epsilon_{\rho}) = 2^{\frac{4 \delta_0'}{13}n} \nonumber \\
  & t_L(\vec{\mathcal{O}},\bm{\psi},\epsilon_R,\epsilon_{\rho}) = 2^{-(1-R)n} |\mathcal{O}_{\bm{\psi}}| - 2^{\frac{3 \delta_0'}{4}n} = 2^{\delta_0' n} - 2^{\frac{3}{4}\delta_0' n} \nonumber
  \end{align}
  and 
  \begin{equation} \nonumber
  t_U(\vec{\mathcal{O}},\bm{\psi},\epsilon_R,\epsilon_{\rho}) = 2^{-(1-R)n} |\mathcal{O}_{\bm{\psi}}| + 2^{\frac{3 \delta_0'}{4}n} = 2^{\delta_0' n} + 2^{\frac{3}{4}\delta_0' n}.
  \end{equation}
  An $(n, \rho n)$ codebook $\mathcal{C}_{\mathrm{in}}$ is said to be typical w.r.t. the parameters $\vec{\mathcal{O}}$, $\bm{\psi}$, $\epsilon_R$, $\epsilon_{\rho}$ if $\mathcal{C}_{\mathrm{in}}$ is $[\ell,p]$ list decodable and $t_L \leq |\mathcal{O}_{\bm{\psi}} \cap \mathcal{C}_n| \leq t_U$ where $\mathcal{C}_n = \mathcal{C}_{\mathrm{in}} \circ \mathcal{C}_{\mathrm{out}}$. Define the typical set $\mathcal{T}_{\mathcal{O}_{\bm{\psi}}}$ as the set of all $(n, \rho n)$ codebooks that are typical w.r.t. $\vec{\mathcal{O}}$, $\bm{\psi}$, $\epsilon_R$, $\epsilon_{\rho}$.
  \end{definition}

  We now provide an equivalent expression of $q_i(\vec{\mathcal{O}},\bm{\psi},\bm{e},\mathcal{C}_{\mathrm{in}})$ that is convenient for defining our approximation function. For $m \in [2^{Rn}]$ and for $(i,\vec{\mathcal{O}},\bm{\psi},\bm{e}) \in \mathcal{P}(2^{\rho n})$, define
  \begin{equation} \nonumber
  \begin{aligned}
  &\phi_{i,m}(\vec{\mathcal{O}}, \bm{\psi},\bm{e},\mathcal{C}_{\mathrm{in}}) \\
  & = \mathds{1}\{\bm{w}_i(m,\bm{e},\mathcal{C}_{\mathrm{in}}) \in \mathcal{I}_{m} \cap \mathcal{L}_{\mathrm{in}}(\bm{y}_m,\mathcal{C}_{\mathrm{in}})\} \mathds{1}\{\mathcal{C}_n(m) \in \mathcal{O}_{\bm{\psi}}\}
  \end{aligned}
  \end{equation}
  and define 
  \begin{equation} \nonumber
  \Phi_i(\vec{\mathcal{O}}, \bm{\psi},\bm{e},\mathcal{C}_{\mathrm{in}}) = \sum_{m \in [2^{Rn}]} \phi_{i,m}(\vec{\mathcal{O}}, \bm{\psi}, \bm{e},\mathcal{C}_{\mathrm{in}}).
  \end{equation}
  For $(i,\vec{\mathcal{O}},\bm{\psi},\bm{e}) \in \mathcal{P}(2^{\rho n})$, note that
  \begin{equation} \label{eq:define_q}
  q_i(\vec{\mathcal{O}},\bm{\psi},\bm{e},\mathcal{C}_{\mathrm{in}}) = \frac{\Phi_{i}(\vec{\mathcal{O}},\bm{\psi},\bm{e},\mathcal{C}_{\mathrm{in}})}{|\mathcal{O}_{\bm{\psi}} \cap \mathcal{C}_n|}.
  \end{equation}

  \begin{definition}[Approximation function] \label{def:approx_q}
  Suppose that $\vec{\mathcal{O}} \in \CPX$ and $\bm{\psi} \in \{0,1\}^{rn}$ such that $|\mathcal{O}_{\bm{\psi}}| 
  \geq 2^{(1-R)n} 2^{\delta_0 n}$.  For $i \in [2^{\rho n}]$, $\bm{e} \in \Be$ and $(n, \rho n)$ codebook $\mathcal{C}_{\mathrm{in}}$, define the approximation function
  \begin{equation}
  q'_i(\vec{\mathcal{O}},\bm{\psi},\bm{e},\mathcal{C}_{\mathrm{in}}) = \frac{\Phi_i(\vec{\mathcal{O}}, \bm{\psi}, \bm{e},\mathcal{C}_{\mathrm{in}})}{t(\vec{\mathcal{O}},\bm{\psi}, \mathcal{C}_{\mathrm{in}})}
  \end{equation}
  where $t(\vec{\mathcal{O}},\bm{\psi}, \mathcal{C}_{\mathrm{in}}) = \max\{|\mathcal{O}_{\bm{\psi}}\cap \mathcal{C}_n|,t_L \}$. Notice that $q_i'(\vec{\mathcal{O}},\bm{\psi}, \bm{e},\mathcal{C}_{\mathrm{in}}) \leq q_i(\vec{\mathcal{O}},\bm{\psi},\bm{e},\mathcal{C}_{\mathrm{in}})$ with equality if $\mathcal{C}_{\mathrm{in}} \in \mathcal{T}_{\mathcal{O}_{\bm{\psi}}}$. Furthermore, define the variation of $q'_i(\vec{\mathcal{O}},\bm{\psi},\bm{e},\mathcal{C}_{\mathrm{in}})$ as
  \begin{equation}
    V_i'(\vec{\mathcal{O}},\bm{\psi}, \bm{e}, \mathcal{C}_{\mathrm{in}}) = \sum_{j=1}^{2^{\rho n}} \mathbb{E}_{\bm{z}} \bigg[ \hspace{0.3em} \Delta' (j,\bm{z},\mathcal{C}_{\mathrm{in}})^2 \hspace{0.2em} \bigg]
  \end{equation}
  where the bounded difference
  \begin{equation} \nonumber
  \Delta'(j,\bm{z},\mathcal{C}_{\mathrm{in}}) = |q_i'(\vec{\mathcal{O}},\bm{\psi}, \bm{e},\mathcal{C}_{\mathrm{in}}) - q_i'(\vec{\mathcal{O}},\bm{\psi}, \bm{e},\mathcal{C}_{\mathrm{in}}(j,\bm{z}))|
  \end{equation}
  and the expectation is taken over the random variable $\bm{z}$ uniformly distributed in $\{0,1\}^n$.
  \end{definition}
  
  The above definition of $q_i'(\vec{\mathcal{O}},\bm{\psi}, \bm{e},\mathcal{C}_{\mathrm{in}})$ is carefully set such that $V_i'(\vec{\mathcal{O}},\bm{\psi}, \bm{e},\mathcal{C}_{\mathrm{in}})$ is well behaved for all non-typical $\Cin$. This behavior is established in Section \ref{sec:char_V_prime}.
  
  \subsection{Combinatorial Preliminaries}
  
    In this section, we prove a few claims about the combinatorial properties of the quantities defined thus far. These claims will be used in the following section to characterize the smoothness properties of the approximation function $q'(\cdot)$.
    
    In the sequel, unless otherwise stated, we fix integer $L > 1/\epsilon_{\rho}$, fix $(i,\vec{\mathcal{O}},\bm{\psi},\bm{e}) \in \mathcal{P}(L)$, and allow only the $(n, \rho n)$ codebook $\mathcal{C}_{\mathrm{in}}$ to vary. We drop the fixed variables from our notation. We write $q(\mathcal{C}_{\mathrm{in}})$ to denote $q_i(\vec{\mathcal{O}},\bm{\psi},\bm{e},\mathcal{C}_{\mathrm{in}})$. Similarly, we write $\mathcal{T}$ to denote $\mathcal{T}_{\mathcal{O}_{\bm{\psi}}}$, $\Phi(\mathcal{C}_{\mathrm{in}})$ to denote $\Phi_i(\vec{\mathcal{O}},\bm{\psi},\bm{e},\mathcal{C}_{\mathrm{in}})$, $\phi_{m}(\mathcal{C}_{\mathrm{in}})$ to denote $\phi_{i,m}(\vec{\mathcal{O}},\bm{\psi},\bm{e},\mathcal{C}_{\mathrm{in}})$, $t(\mathcal{C}_{\mathrm{in}})$ to denote $t(\vec{\mathcal{O}},\bm{\psi},\mathcal{C}_{\mathrm{in}})$, $V(\mathcal{C}_{\mathrm{in}})$ to denote $V_i(\vec{\mathcal{O}},\bm{\psi},\bm{e},\mathcal{C}_{\mathrm{in}})$ and $V'(\mathcal{C}_{\mathrm{in}})$ to denote $V_i'(\vec{\mathcal{O}},\bm{\psi},\bm{e},\mathcal{C}_{\mathrm{in}})$.
  
    The following notation will be used throughout this section. For $m \in [2^{Rn}]$, let $\bm{y}_m = \mathcal{C}_{n}(m) \rv{\oplus} \bm{e}$. For $k = 1, \ldots, 2^{\rho n}$, let the notation $j_k$ denote the the index $\textbf{int}(\bm{w}_k(m,\bm{e},\mathcal{C}_{\mathrm{in}})) \in [2^{\rho n}]$.
    
    For $m \in [2^{Rn}]$ and $(n, \rho n)$ codebook $\Cin$, the first two claims characterize how the $i$th closest codeword in codebook $\Cin$ to received word $\bm{y}_m$ (i.e., $\Cin(\bm{w}_i(m,\bm{e},\Cin))$) changes when the $j_{k}^{\text{th}}$ codeword in $\Cin$ is replaced with a word $\bm{z} \not\in \mathcal{B}_{pn}(\bm{y}_m)$. The proof of these claims only require the definition (\ref{eq:error_idx_ineq}) of word $\bm{w}_i(m,\bm{e},\Cin)$.
  
  \begin{claim} \label{claim:ub_pn}
  Let $\mathcal{C}_{\mathrm{in}}$ be an $(n, \rho n)$ codebook, $m \in [2^{Rn}]$, $\bm{z} \not\in \mathcal{B}_{pn}(\bm{y}_m)$ and $k \in [2^{\rho n}]$. Let $C_{in}'$ denote the codebook $\mathcal{C}_{\mathrm{in}}(j_k,\bm{z})$. If $\mathcal{C}_{\mathrm{in}}(j_k) \not\in \mathcal{B}_{pn}(\bm{y}_m)$ and
  \begin{equation}  \label{eq:ub_pn_hyp} \nonumber
    \mathcal{C}_{\mathrm{in}}(\bm{w}_i(m,\bm{e},\mathcal{C}_{\mathrm{in}})) \not\in \mathcal{B}_{pn}(\bm{y}_m),
    \end{equation}
     then
    \begin{equation} \label{eq:ub_pn_result}
    \mathcal{C}_{\mathrm{in}}'(\bm{w}_i(m,\bm{e},\mathcal{C}_{\mathrm{in}}')) \not\in \mathcal{B}_{pn}(\bm{y}_m).
    \end{equation}
  \end{claim}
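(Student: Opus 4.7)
The plan is to recast the statement as a counting argument on the size of the list $\mathcal{L}_{\mathrm{in}}(\bm{y}_m,\mathcal{C}_{\mathrm{in}})$. By the sort-by-distance definition (\ref{eq:error_idx_ineq}) of $\bm{w}_i(m,\bm{e},\mathcal{C}_{\mathrm{in}})$ as the $i$-th closest codeword to $\bm{y}_m$ (with ties broken by $\intg(\cdot)$), the distances $d(\bm{y}_m,\mathcal{C}_{\mathrm{in}}(\bm{w}_k))$ are nondecreasing in $k$. Consequently $\mathcal{C}_{\mathrm{in}}(\bm{w}_i(m,\bm{e},\mathcal{C}_{\mathrm{in}})) \in \mathcal{B}_{pn}(\bm{y}_m)$ if and only if $|\mathcal{L}_{\mathrm{in}}(\bm{y}_m,\mathcal{C}_{\mathrm{in}})| \geq i$. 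The hypothesis that $\mathcal{C}_{\mathrm{in}}(\bm{w}_i(m,\bm{e},\mathcal{C}_{\mathrm{in}})) \notin \mathcal{B}_{pn}(\bm{y}_m)$ therefore reduces to the single inequality $|\mathcal{L}_{\mathrm{in}}(\bm{y}_m,\mathcal{C}_{\mathrm{in}})| < i$, and establishing the conclusion (\ref{eq:ub_pn_result}) is equivalent to showing $|\mathcal{L}_{\mathrm{in}}(\bm{y}_m,\mathcal{C}_{\mathrm{in}}')| < i$ for the swapped codebook $\mathcal{C}_{\mathrm{in}}' = \mathcal{C}_{\mathrm{in}}(j_k,\bm{z})$.

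Next, I would show that the swap leaves the list unchanged as a subset of $\{0,1\}^{\rho n}$, i.e., $\mathcal{L}_{\mathrm{in}}(\bm{y}_m,\mathcal{C}_{\mathrm{in}}') = \mathcal{L}_{\mathrm{in}}(\bm{y}_m,\mathcal{C}_{\mathrm{in}})$. Since $\mathcal{C}_{\mathrm{in}}'$ and $\mathcal{C}_{\mathrm{in}}$ agree at every domain element other than the one with integer representation $j_k$, every such element belongs to one list iff it belongs to the other. At the index $j_k$ itself the codeword switches from $\mathcal{C}_{\mathrm{in}}(j_k)$ to $\bm{z}$; by the two stated hypotheses ($\mathcal{C}_{\mathrm{in}}(j_k)\notin\mathcal{B}_{pn}(\bm{y}_m)$ and $\bm{z}\notin\mathcal{B}_{pn}(\bm{y}_m)$), neither of these two words lies in the Hamming ball, so the index $j_k$ appears in neither list. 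Hence the two lists coincide and, in particular, have equal size, giving $|\mathcal{L}_{\mathrm{in}}(\bm{y}_m,\mathcal{C}_{\mathrm{in}}')| = |\mathcal{L}_{\mathrm{in}}(\bm{y}_m,\mathcal{C}_{\mathrm{in}})| < i$, which is the conclusion.

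The argument is short and entirely combinatorial, using only the definitions of $\bm{w}_i$ and $\mathcal{L}_{\mathrm{in}}$ already in the paper; no probabilistic machinery or coding-theoretic lemmas are invoked, and the two hypotheses are used precisely once each in the bookkeeping of the swapped index $j_k$. The only step that requires a line of explicit justification is the equivalence ``$\mathcal{C}_{\mathrm{in}}(\bm{w}_i)\notin\mathcal{B}_{pn}(\bm{y}_m)\iff |\mathcal{L}_{\mathrm{in}}|<i$'', which follows at once from the monotonicity of distances under the sort in (\ref{eq:error_idx_ineq}). I do not anticipate any genuine obstacle; the claim is essentially a sanity check that later, more substantive variation/concentration estimates (for $V'$ and $q'$) will hinge on.
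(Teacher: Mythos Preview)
Your proposal is correct and is essentially the same argument the paper gives: both reduce the hypothesis to $|\mathcal{L}_{\mathrm{in}}(\bm{y}_m,\mathcal{C}_{\mathrm{in}})|\le i-1$ (the paper phrases it as $|\mathcal{C}_{\mathrm{in}}\cap\mathcal{B}_{pn}(\bm{y}_m)|\le i-1$), observe that swapping a codeword outside the ball for another word outside the ball leaves this count unchanged, and conclude. Your additional remark that the two lists are literally equal as subsets of $\{0,1\}^{\rho n}$ is a slight elaboration but not a different route.
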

  
  \begin{figure}[t]
  \includegraphics[width=\columnwidth]{./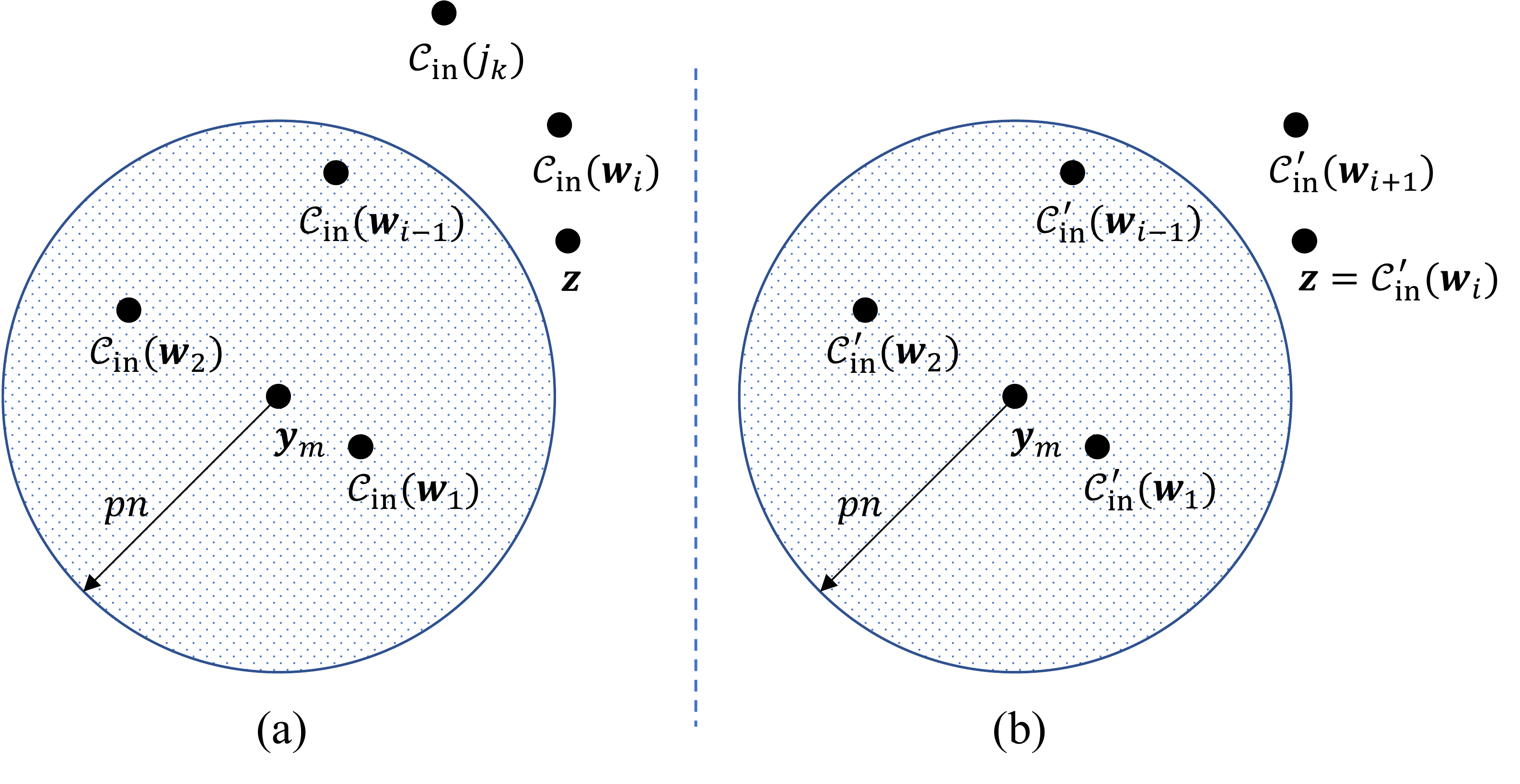}
  \caption{Location of the $i^{th}$ closest codeword to $\bm{y}_m$ (a) before replacing codeword $\mathcal{C}_{\mathrm{in}}(j_k)$ with word $\bm{z}$ and (b) after replacing codeword $\mathcal{C}_{\mathrm{in}}(j_k)$ with word $\bm{z}$. In this figure, the codebook $\mathcal{C}'_{\mathrm{in}}$ is equal to $\Cin(j_k,\bm{z})$.}
  \label{fig:w_change}
  \end{figure}
  
  \begin{proof}
  The location of codewords $\mathcal{C}_{\mathrm{in}}(\bm{w}_i(m,\bm{e},\Cin))$ and $\mathcal{C}_{\mathrm{in}}'(\bm{w}_i(m,\bm{e},\Cin'))$ around $\bm{y}_m$ are illustrated in Fig. \ref{fig:w_change}. We begin by observing that $\Cin(\bm{w}_i(m,\bm{e},\Cin)) \not\in \mathcal{B}_{pn}(\bm{y}_m)$ implies that $|\mathcal{C}_{\mathrm{in}} \cap \mathcal{B}_{pn}(\bm{y}_m)| \leq i-1$. Together with the fact that $\bm{z} \not\in \mathcal{B}_{pn}(\bm{y}_m)$ and $\mathcal{C}_{\mathrm{in}}(j_k) \not\in \mathcal{B}_{pn}(\bm{y}_m)$, it follows that $|\mathcal{C}_{\mathrm{in}}' \cap \mathcal{B}_{pn}(\bm{y}_m)|=|\mathcal{C}_{\mathrm{in}} \cap \mathcal{B}_{pn}(\bm{y}_m)| \leq i-1$. This implies equation (\ref{eq:ub_pn_result}).
  \end{proof}
  
  
  \begin{claim} \label{claim:split2}
    Let $\mathcal{C}_{\mathrm{in}}$ be an $(n, \rho n)$ codebook, $m \in [2^{Rn}]$, $\bm{z} \in \{0,1\}^n$ and $k \in \{i+1,\ldots,2^{\rho n}\}$ such that $\bm{w}_k(m,\bm{e},\mathcal{C}_{\mathrm{in}})\neq \mathcal{C}_{\mathrm{out}}(m)$. If 
    \begin{equation} \nonumber
    d(\bm{y}_m,\mathcal{C}_{\mathrm{in}}(\bm{w}_i(m,\bm{e},\mathcal{C}_{\mathrm{in}}))) < d(\bm{y}_m,\bm{z}),
    \end{equation}
    then $\bm{w}_i(m,\bm{e},\mathcal{C}_{\mathrm{in}}) = \bm{w}_i(m,\bm{e},\mathcal{C}_{\mathrm{in}}(j_k,\bm{z}))$.
  \end{claim}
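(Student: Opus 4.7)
The plan is to show that, under the hypotheses, the entire sorted prefix $\bm{w}_1,\ldots,\bm{w}_i$ of closest codewords (under the distance-then-$\intg$ tie-breaking in (\ref{eq:error_idx_ineq})) is identical for $\mathcal{C}_{\mathrm{in}}$ and $\mathcal{C}_{\mathrm{in}}' := \mathcal{C}_{\mathrm{in}}(j_k,\bm{z})$, which in particular yields the desired equality $\bm{w}_i(m,\bm{e},\mathcal{C}_{\mathrm{in}}) = \bm{w}_i(m,\bm{e},\mathcal{C}_{\mathrm{in}}')$.

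First I would use the hypothesis $\bm{w}_k(m,\bm{e},\mathcal{C}_{\mathrm{in}}) \neq \Cout(m)$ to argue that the swap at index $j_k$ does not touch the true message's inner codeword, i.e.\ $\mathcal{C}_{\mathrm{in}}'(\Cout(m)) = \mathcal{C}_{\mathrm{in}}(\Cout(m))$. Consequently the received word used when computing $\bm{w}_i(m,\bm{e},\mathcal{C}_{\mathrm{in}}')$ is the same $\bm{y}_m$ as for $\mathcal{C}_{\mathrm{in}}$, so both orderings are defined with respect to a common reference point. Since $k > i$, we also have $\bm{w}_\ell \neq \bm{w}_k$ for all $\ell \leq i$, hence $\mathcal{C}_{\mathrm{in}}'(\bm{w}_\ell) = \mathcal{C}_{\mathrm{in}}(\bm{w}_\ell)$ and the distances $d(\bm{y}_m,\mathcal{C}_{\mathrm{in}}'(\bm{w}_\ell))$ for $\ell \leq i$ are unchanged; in particular each is at most $d_i := d(\bm{y}_m,\mathcal{C}_{\mathrm{in}}(\bm{w}_i))$.

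Next I would verify that no other word $\bm{w}' \notin \{\bm{w}_1,\ldots,\bm{w}_i\}$ can displace $\bm{w}_1,\ldots,\bm{w}_i$ from the first $i$ positions of the (distance, $\intg$)-ordering for $\mathcal{C}_{\mathrm{in}}'$. If $\bm{w}' = \bm{w}_k$, then $d(\bm{y}_m,\mathcal{C}_{\mathrm{in}}'(\bm{w}')) = d(\bm{y}_m,\bm{z}) > d_i$ by the claim's hypothesis, so $\bm{w}_k$ is pushed strictly past position $i$. Otherwise $\bm{w}' \neq \bm{w}_k$, and $d(\bm{y}_m,\mathcal{C}_{\mathrm{in}}'(\bm{w}')) = d(\bm{y}_m,\mathcal{C}_{\mathrm{in}}(\bm{w}'))$; applying (\ref{eq:error_idx_ineq}) in $\mathcal{C}_{\mathrm{in}}$ to $\bm{w}'$ shows either $d(\bm{y}_m,\mathcal{C}_{\mathrm{in}}(\bm{w}')) > d_i$ or $d(\bm{y}_m,\mathcal{C}_{\mathrm{in}}(\bm{w}')) = d_i$ with $\intg(\bm{w}') > \intg(\bm{w}_i)$. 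In either case $\bm{w}'$ follows $\bm{w}_i$ in the new ordering as well. Combined, these two observations force the first $i$ entries of the ordering under $\mathcal{C}_{\mathrm{in}}'$ to coincide with $\bm{w}_1,\ldots,\bm{w}_i$, which is exactly the claim.

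I do not expect any step to be a genuine obstacle; the proof is a direct unpacking of the recursive minimizer definition (\ref{eq:error_idx_ineq}), exploiting that only a single codeword's distance is altered and that it moves strictly past position $i$ by the strict inequality in the hypothesis. The only points requiring care are tracking that $\bm{y}_m$ is invariant under the swap (handled by the $\bm{w}_k \neq \Cout(m)$ condition) and that ties at distance exactly $d_i$ are broken by $\intg$ in $\mathcal{C}_{\mathrm{in}}'$ precisely as they are in $\mathcal{C}_{\mathrm{in}}$, since $\intg$ depends on $\bm{w}'$ rather than on $\mathcal{C}_{\mathrm{in}}'(\bm{w}')$.
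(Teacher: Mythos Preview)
Your proposal is correct and follows essentially the same approach as the paper: both argue that $\bm{y}_m$ is unchanged (via $\bm{w}_k \neq \Cout(m)$) and that the first $i$ positions of the sorted list are unaffected because the only altered codeword moves strictly past position $i$. Your version is in fact slightly more careful than the paper's, since you explicitly handle the tie-breaking by $\intg$ at distance $d_i$, which the paper glosses over.
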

  
  \begin{proof}
  The condition $\bm{w}_k(m,\bm{e},\mathcal{C}_{\mathrm{in}}) \neq \mathcal{C}_{\mathrm{out}}(m)$ ensures that $\bm{y}_m = \Cin \circ \Cout (m) \rv{\oplus} \bm{e}$ is equal to $\Cin(j_k,\bm{z}) \circ \Cout(m) \rv{\oplus} \bm{e}$, and thus the center of the ball of radius $pn$ around the received word does not change when the $j_k^{\text{th}}$ codeword of $\Cin$ is replaced with word $\bm{z}$. For $t = 1,\ldots, \rv{i}$, note that the $t^{\text{th}}$ closest codeword in $\Cin$ to $\bm{y}_m$ (i.e., $\Cin(\bm{w}_t(m,\bm{e},\Cin))$) is at least as close to $\bm{y}_m$ as codeword $\Cin(j_k)$, and is closer to $\bm{y}_m$ than word $\bm{z}$. Therefore, by replacing the codeword $\mathcal{C}_{\mathrm{in}}(j_k)$ with the word $\bm{z}$, we do not change the position of the $t^{\text{th}}$ closest codeword in $\mathcal{C}_{\mathrm{in}}$ to word $\bm{y}_m$. Hence, $\bm{w}_t(m,\bm{e},\mathcal{C}_{\mathrm{in}})= \bm{w}_t(m,\bm{e},\mathcal{C}_{\mathrm{in}}(j_k,\bm{z}))$.
  \end{proof}

  For $m \in [2^{Rn}]$ and $(n, \rho n)$ codebook $\Cin$, the next two claims build upon the first two claims and characterize how the term $\phi_m(\Cin)$ changes (and in turn, how the approximation function $q'(\Cin)$ changes) when the $j_k^{\text{th}}$ codeword in $\Cin$ is replaced with a word $\bm{z} \not\in \mathcal{B}_{pn}(\bm{y}_m)$. In the following section, these claims will help us in bounding the bounded difference $\Delta'(j,\bm{z},\mathcal{C}_{\mathrm{in}})$ and variation $V'(\mathcal{C}_{\mathrm{in}})$.
  
  \begin{claim} \label{claim:eq_phi_2}
  Let $\mathcal{C}_{\mathrm{in}}$ be an $(n, \rho n)$ codebook, $m \in [2^{Rn}]$, $\bm{z} \notin \mathcal{B}_{pn}(\bm{y}_m)$ and $k \in [2^{\rho n}]$ such that $\bm{w}_k(m,\bm{e},\mathcal{C}_{\mathrm{in}})\neq \mathcal{C}_{\mathrm{out}}(m)$. If either $k>i$ or $\mathcal{C}_{\mathrm{in}}(j_k) \not\in \mathcal{B}_{pn}(\bm{y}_m)$, then
  \begin{equation} \label{eq:eq_phi_2}
  \begin{aligned}
  & \mathds{1}\{\bm{w}_i(m,\bm{e},\mathcal{C}_{\mathrm{in}}) \in \mathcal{I}_{m} \cap \mathcal{L}_{\mathrm{in}}(\bm{y}_m,\mathcal{C}_{\mathrm{in}})\} \\
  & = \mathds{1}\{\bm{w}_i(m,\bm{e},\mathcal{C}_{\mathrm{in}}(j_k,\bm{z})) \in \mathcal{I}_{m} \cap \mathcal{L}_{\mathrm{in}}(\bm{y}_m,\mathcal{C}_{\mathrm{in}}(j_k,\bm{z}))\}.
  \end{aligned}
  \end{equation}
  \end{claim}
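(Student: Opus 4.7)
The plan is to reduce the claim to the two earlier structural claims, Claim~\ref{claim:ub_pn} and Claim~\ref{claim:split2}, via a case analysis on whether the $i$th-closest inner codeword to $\bm{y}_m$ lies in the decoding ball $\mathcal{B}_{pn}(\bm{y}_m)$. Before splitting cases I would first observe a normalization: the hypothesis $\bm{w}_k(m,\bm{e},\mathcal{C}_{\mathrm{in}}) \neq \mathcal{C}_{\mathrm{out}}(m)$ means $j_k \neq \intg(\mathcal{C}_{\mathrm{out}}(m))$, so modifying the $j_k$th codeword leaves $\bm{y}_m = \mathcal{C}_n(m)\oplus \bm{e}$ unchanged. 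Write $\mathcal{C}_{\mathrm{in}}' = \mathcal{C}_{\mathrm{in}}(j_k,\bm{z})$ for brevity.

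In the first case, suppose $\mathcal{C}_{\mathrm{in}}(\bm{w}_i(m,\bm{e},\mathcal{C}_{\mathrm{in}})) \notin \mathcal{B}_{pn}(\bm{y}_m)$. Then $\bm{w}_i(m,\bm{e},\mathcal{C}_{\mathrm{in}}) \notin \mathcal{L}_{\mathrm{in}}(\bm{y}_m,\mathcal{C}_{\mathrm{in}})$, so the left-hand indicator of \eqref{eq:eq_phi_2} is $0$. To show the right-hand indicator is also $0$, it suffices by Claim~\ref{claim:ub_pn} to verify $\mathcal{C}_{\mathrm{in}}(j_k)\notin \mathcal{B}_{pn}(\bm{y}_m)$. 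If the hypothesis $\mathcal{C}_{\mathrm{in}}(j_k)\notin \mathcal{B}_{pn}(\bm{y}_m)$ holds directly we are done; otherwise the hypothesis gives $k > i$, and since the sort defining $\bm{w}_k$ is non-decreasing in Hamming distance to $\bm{y}_m$, we have $d(\bm{y}_m,\mathcal{C}_{\mathrm{in}}(\bm{w}_k)) \geq d(\bm{y}_m,\mathcal{C}_{\mathrm{in}}(\bm{w}_i)) > pn$, which forces $\mathcal{C}_{\mathrm{in}}(j_k)=\mathcal{C}_{\mathrm{in}}(\bm{w}_k)\notin \mathcal{B}_{pn}(\bm{y}_m)$. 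Claim~\ref{claim:ub_pn} then yields $\mathcal{C}_{\mathrm{in}}'(\bm{w}_i(m,\bm{e},\mathcal{C}_{\mathrm{in}}'))\notin \mathcal{B}_{pn}(\bm{y}_m)$, killing the RHS indicator.

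In the second case, suppose $\mathcal{C}_{\mathrm{in}}(\bm{w}_i(m,\bm{e},\mathcal{C}_{\mathrm{in}})) \in \mathcal{B}_{pn}(\bm{y}_m)$. The sub-hypothesis ``$\mathcal{C}_{\mathrm{in}}(j_k)\notin \mathcal{B}_{pn}(\bm{y}_m)$'' combined with ``$k\leq i$'' is impossible, because $k\leq i$ would yield $d(\bm{y}_m,\mathcal{C}_{\mathrm{in}}(j_k))\leq d(\bm{y}_m,\mathcal{C}_{\mathrm{in}}(\bm{w}_i))\leq pn$, contradicting $\mathcal{C}_{\mathrm{in}}(j_k)\notin \mathcal{B}_{pn}(\bm{y}_m)$. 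Thus the hypothesis forces $k > i$ in this case. Now $d(\bm{y}_m,\mathcal{C}_{\mathrm{in}}(\bm{w}_i))\leq pn < d(\bm{y}_m,\bm{z})$, so Claim~\ref{claim:split2} applies and gives $\bm{w}_i(m,\bm{e},\mathcal{C}_{\mathrm{in}}) = \bm{w}_i(m,\bm{e},\mathcal{C}_{\mathrm{in}}')$. Since $\intg(\bm{w}_i) = j_i \neq j_k$ (because $\bm{w}_i \neq \bm{w}_k$ as distinct entries of the sorted sequence), the $\intg(\bm{w}_i)$th codeword is untouched, so $\mathcal{C}_{\mathrm{in}}'(\bm{w}_i)=\mathcal{C}_{\mathrm{in}}(\bm{w}_i)\in \mathcal{B}_{pn}(\bm{y}_m)$, i.e., $\bm{w}_i\in \mathcal{L}_{\mathrm{in}}(\bm{y}_m,\mathcal{C}_{\mathrm{in}}')$. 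The membership in $\mathcal{I}_m$ depends only on $m$ and $\mathcal{C}_{\mathrm{out}}$ and on the value of $\bm{w}_i$, which is identical for the two codebooks, so both indicators reduce to $\mathds{1}\{\bm{w}_i\in \mathcal{I}_m\}$ and agree.

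The main obstacle I anticipate is not any calculation but the bookkeeping around the two conditions in the hypothesis (``$k>i$'' versus ``$\mathcal{C}_{\mathrm{in}}(j_k)\notin \mathcal{B}_{pn}(\bm{y}_m)$''): one must observe that these two conditions together with the placement of $\mathcal{C}_{\mathrm{in}}(\bm{w}_i)$ interact in such a way that whichever branch of the hypothesis actually holds, the prerequisite of the relevant earlier claim is satisfied. The indexing caveat that $j_i\neq j_k$ (so that replacing the $j_k$th codeword does not perturb $\mathcal{C}_{\mathrm{in}}'(\bm{w}_i)$) is also worth spelling out, since it is what allows the list-decodability check to be reused unchanged across $\mathcal{C}_{\mathrm{in}}$ and $\mathcal{C}_{\mathrm{in}}'$.
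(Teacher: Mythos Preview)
Your proof is correct and follows essentially the same approach as the paper: a two-case split on whether $\mathcal{C}_{\mathrm{in}}(\bm{w}_i(m,\bm{e},\mathcal{C}_{\mathrm{in}}))$ lies in $\mathcal{B}_{pn}(\bm{y}_m)$, invoking Claim~\ref{claim:ub_pn} in the first case and Claim~\ref{claim:split2} in the second. Your version is slightly more explicit in the bookkeeping (e.g., verifying $j_i\neq j_k$ and that $\mathcal{I}_m$ is independent of $\mathcal{C}_{\mathrm{in}}$), but the structure and logic match the paper's proof.
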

  
  \begin{proof}
  We consider 2 cases depending on the distance of codeword $\Cin(\bm{w}_i(m,\bm{e},\Cin))$ from received word $\bm{y}_m$. (Case 1): Suppose that $\mathcal{C}_{\mathrm{in}}(\bm{w}_i(m,\bm{e},\mathcal{C}_{\mathrm{in}})) \not\in \mathcal{B}_{pn}(\bm{y}_m)$ (i.e., $\bm{w}_i(m,\bm{e},\mathcal{C}_{\mathrm{in}}) \not\in \mathcal{L}_{\mathrm{in}}(\bm{y}_m,\mathcal{C}_{\mathrm{in}})$). Note that by hypothesis or by the condition that $k>i$, it follows that $\mathcal{C}_{\mathrm{in}}(j_k) \not\in \mathcal{B}_{pn}(\bm{y}_m)$, and in turn by Claim \ref{claim:ub_pn}, we have for $\mathcal{C}_{\mathrm{in}}'=\mathcal{C}_{\mathrm{in}}(j,\bm{z})$ that $\mathcal{C}_{\mathrm{in}}'(\bm{w}_i(m,\bm{e},\mathcal{C}_{\mathrm{in}}')) \not\in \mathcal{B}_{pn}(\bm{y}_m)$. Furthermore, $\bm{w}_i(m,\bm{e},\mathcal{C}_{\mathrm{in}}(j,\bm{z})) \not\in \mathcal{L}_{\mathrm{in}}(\bm{y}_m,\mathcal{C}_{\mathrm{in}}(j,\bm{z}))$. It follows that both sides of equation (\ref{eq:eq_phi_2}) are $0$, and thus, Claim \ref{claim:eq_phi_2} holds in this Case. (Case 2): Suppose that $\mathcal{C}_{\mathrm{in}}(\bm{w}_i(m,\bm{e},\mathcal{C}_{\mathrm{in}})) \in \mathcal{B}_{pn}(\bm{y}_m)$ (i.e., $\bm{w}_i(m,\bm{e},\mathcal{C}_{\mathrm{in}}) \in \mathcal{L}_{\mathrm{in}}(\bm{y}_m,\mathcal{C}_{\mathrm{in}})$). Then $k>i$ and $d(\bm{y}_m,\mathcal{C}_{\mathrm{in}}(\bm{w}_i(m,\bm{e},\mathcal{C}_{\mathrm{in}}))) < d(\bm{y}_m,\bm{z})$, and in turn, following Claim \ref{claim:split2}, we have that $\bm{w}_i(m,\bm{e},\mathcal{C}_{\mathrm{in}}) = \bm{w}_i(m,\bm{e},\mathcal{C}_{\mathrm{in}}(j,\bm{z}))$. Furthermore, since the received word $\mathcal{C}_{\mathrm{in}}(j_k,\bm{z}) \circ \mathcal{C}_{\mathrm{out}}(m)\rv{\oplus}\bm{e}$ is equal to $\bm{y}_m$, word $\bm{w}_i(m,\bm{e},\mathcal{C}_{\mathrm{in}}(j_k,\bm{z}))$ is in $\mathcal{L}_{\mathrm{in}}(\bm{y}_m,\mathcal{C}_{\mathrm{in}}(j_k,\bm{z}))$. Thus, equation (\ref{eq:eq_phi_2}) holds, and in turn, Claim \ref{claim:eq_phi_2} holds in this Case. 
  \end{proof}
 
  \begin{claim} \label{claim:eq_phi_3}
  Let $\mathcal{C}_{\mathrm{in}}$ be an $(n, \rho n)$ codebook, $m \in [2^{Rn}]$, $\bm{z} \notin \mathcal{B}_{pn}(\bm{y}_m)$ and $k \in [2^{\rho n}]$ such that $\bm{w}_k(m,\bm{e},\mathcal{C}_{\mathrm{in}})\neq \mathcal{C}_{\mathrm{out}}(m)$. If either $k>i$ or $\mathcal{C}_{\mathrm{in}}(j_k) \not\in \mathcal{B}_{pn}(\bm{y}_m)$, then $\phi_{m}(\mathcal{C}_{\mathrm{in}}) = \phi_{m}(\mathcal{C}_{\mathrm{in}}(j_k,\bm{z})).$
  \end{claim}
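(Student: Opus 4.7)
The proof plan is straightforward: unpack the definition of $\phi_m$, which is a \emph{product} of two indicator functions, and show that each factor is preserved under the codeword-replacement operation $\mathcal{C}_{\mathrm{in}} \mapsto \mathcal{C}_{\mathrm{in}}(j_k,\bm{z})$ under the stated hypotheses. The first factor, $\mathds{1}\{\bm{w}_i(m,\bm{e},\mathcal{C}_{\mathrm{in}}) \in \mathcal{I}_{m} \cap \mathcal{L}_{\mathrm{in}}(\bm{y}_m,\mathcal{C}_{\mathrm{in}})\}$, is exactly the quantity handled by the preceding Claim \ref{claim:eq_phi_2}: its hypotheses (namely $\bm{z}\notin \mathcal{B}_{pn}(\bm{y}_m)$, $\bm{w}_k\neq \mathcal{C}_{\mathrm{out}}(m)$, and either $k>i$ or $\mathcal{C}_{\mathrm{in}}(j_k)\notin \mathcal{B}_{pn}(\bm{y}_m)$) are identical to the hypotheses of the present claim, so invoking Claim \ref{claim:eq_phi_2} directly yields equality of the first factor.

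The second factor, $\mathds{1}\{\mathcal{C}_n(m) \in \mathcal{O}_{\bm{\psi}}\}$ where $\mathcal{C}_n(m) = \mathcal{C}_{\mathrm{in}}(\mathcal{C}_{\mathrm{out}}(m))$, is handled by a simple index-argument. The assumption $\bm{w}_k(m,\bm{e},\mathcal{C}_{\mathrm{in}}) \neq \mathcal{C}_{\mathrm{out}}(m)$ translates, via the definition $j_k = \textbf{int}(\bm{w}_k(m,\bm{e},\mathcal{C}_{\mathrm{in}}))$, into $j_k \neq \textbf{int}(\mathcal{C}_{\mathrm{out}}(m))$. Consequently, the operation that replaces the $j_k$-th codeword of $\mathcal{C}_{\mathrm{in}}$ with $\bm{z}$ leaves the codeword indexed by $\mathcal{C}_{\mathrm{out}}(m)$ untouched; that is, $\mathcal{C}_{\mathrm{in}}(j_k,\bm{z})(\mathcal{C}_{\mathrm{out}}(m)) = \mathcal{C}_{\mathrm{in}}(\mathcal{C}_{\mathrm{out}}(m))$. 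Thus the concatenated codeword $\mathcal{C}_n(m)$ is invariant under the replacement, and so is the indicator $\mathds{1}\{\mathcal{C}_n(m) \in \mathcal{O}_{\bm{\psi}}\}$.

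Putting these two invariances together, the product of indicators is preserved, yielding $\phi_m(\mathcal{C}_{\mathrm{in}}) = \phi_m(\mathcal{C}_{\mathrm{in}}(j_k,\bm{z}))$. There is no genuine obstacle here; the claim is essentially a bookkeeping statement that combines Claim \ref{claim:eq_phi_2} with the simple observation that the replacement operation changes only one codeword of $\mathcal{C}_{\mathrm{in}}$ and, by the hypothesis $\bm{w}_k \neq \mathcal{C}_{\mathrm{out}}(m)$, that codeword is not the one used to produce $\mathcal{C}_n(m)$. The only point worth being careful about is to use the precise definitions of $j_k$ and $\textbf{int}(\cdot)$ to justify the index inequality, but this is immediate.
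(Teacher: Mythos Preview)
Your proposal is correct and matches the paper's own proof essentially line for line: the paper also derives Claim~\ref{claim:eq_phi_3} directly from Claim~\ref{claim:eq_phi_2} together with the observation that $\bm{w}_k(m,\bm{e},\mathcal{C}_{\mathrm{in}})\neq \mathcal{C}_{\mathrm{out}}(m)$ implies $\mathds{1}\{\mathcal{C}_{\mathrm{in}}\circ\mathcal{C}_{\mathrm{out}}(m)\in\mathcal{O}_{\bm{\psi}}\}=\mathds{1}\{\mathcal{C}_{\mathrm{in}}(j_k,\bm{z})\circ\mathcal{C}_{\mathrm{out}}(m)\in\mathcal{O}_{\bm{\psi}}\}$.
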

  
  \begin{proof}
  Claim \ref{claim:eq_phi_3} follows from Claim \ref{claim:eq_phi_2} and the observation that since $k \in [2^{\rho n}]$ such that $\bm{w}_k(m,\bm{e},\mathcal{C}_{\mathrm{in}}) \neq \mathcal{C}_{\mathrm{out}}(m)$, we have that $\mathds{1}\{\mathcal{C}_{\mathrm{in}} \circ \mathcal{C}_{\mathrm{out}}(m) \in \mathcal{O}_{\bm{\psi}}\} = \mathds{1}\{\mathcal{C}_{\mathrm{in}}(j_k,\bm{z}) \circ \mathcal{C}_{\mathrm{out}}(m) \in \mathcal{O}_{\bm{\psi}}\}$.
  \end{proof}
  
  \subsection{Smoothness of $q'$} \label{sec:char_V_prime}
  
  The goal of this subsection is to establish two bounds on $V'$. We say that a number $a_T>0$ is a \textit{typical variation coefficient} of $q'$ if for any $\mathcal{C}_{\mathrm{in}} \in \mathcal{T}$, we have $V'(\mathcal{C}_{\mathrm{in}}) \leq a_T$. We say that a number $a_G>0$ is a \textit{global variation coefficient} of $q'$ if for any $(n, \rho n)$ codebook $\mathcal{C}_{\mathrm{in}}$, we have that $V'(\mathcal{C}_{\mathrm{in}}) \leq a_G$. This subsection characterizes the smoothness of $q'$ by finding small typical and global variation coefficients that will later prove useful in establishing the concentration of $q'$. We start by finding a small global variation coefficient.

  \begin{lemma}[Global Variation Coefficient] \label{thm:global_coeff}
  If $\mathcal{O}_{\bm{\psi}}$ is bounded in size such that for some $\delta_0' \geq \delta_0$ we have $|\mathcal{O}_{\bm{\psi}}| = 2^{(1-R)n} 2^{\delta_0' n}$, then for any $(n, \rho n)$ codebook $\mathcal{C}_{\mathrm{in}}$ and for large enough $n$ (that depends only on $\delta_0$ and $\epsilon_{\rho}$), $V'(\mathcal{C}_{\mathrm{in}}) \leq a_G = 5i+14$.
  \end{lemma}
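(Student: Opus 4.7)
The plan is to bound $V'(\mathcal{C}_{\mathrm{in}}) = \sum_{j=1}^{2^{\rho n}} \mathbb{E}_{\bm{z}}[\Delta'(j,\bm{z},\mathcal{C}_{\mathrm{in}})^2]$ by controlling $\Delta'$ pointwise using the combinatorial structure supplied by Claims 3 and 4, then summing. The approach rests on three ingredients: a triangle inequality for $\Delta'$, the observation that $t' \geq t/2$ for large $n$, and double counting arguments for the cases identified by the claims.

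First, I would bound $\Delta'$ pointwise. Writing $\Phi = \Phi_i(\vec{\mathcal{O}},\bm{\psi},\bm{e},\mathcal{C}_{\mathrm{in}})$ and $t = t(\vec{\mathcal{O}},\bm{\psi},\mathcal{C}_{\mathrm{in}})$, with primed versions for $\mathcal{C}_{\mathrm{in}}' = \mathcal{C}_{\mathrm{in}}(j,\bm{z})$, the identity $\Phi'/t' - \Phi/t = (\Phi' - \Phi)/t' + \Phi(1/t' - 1/t)$ combined with $\Phi \leq t$ yields
\[
\Delta' \leq \frac{d_\Phi + d_t}{t'},
\]
where $d_\Phi := |\Phi-\Phi'|$ and $d_t := |t - t'| \leq 1$. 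Because replacing one entry of $\mathcal{C}_{\mathrm{in}}$ shifts $|\mathcal{O}_{\bm{\psi}} \cap \mathcal{C}_n|$ by at most one, $t' \geq t-1$, and hence $t' \geq t/2$ whenever $t \geq 2$; this holds for large $n$ since $t \geq t_L \to \infty$. Combining with $\Delta' \leq 1$ yields $\Delta'^2 \leq 2(d_\Phi + d_t)/t$.

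Next, Claims 3 and 4 imply that for $m \neq m^*(j)$ (where $m^*(j)$ denotes the unique message with $\textbf{int}(\mathcal{C}_{\mathrm{out}}(m^*)) = j$, if it exists), the indicator $\phi_{i,m}$ changes only when (a) $\bm{z} \in \mathcal{B}_{pn}(\bm{y}_m)$, or (c) $\mathcal{C}_{\mathrm{in}}(j) \in \mathcal{B}_{pn}(\bm{y}_m)$ with $\mathcal{C}_{\mathrm{in}}(j)$ among the top-$i$ closest codewords to $\bm{y}_m$. For $m = m^*(j)$ (case (b)), the change in $\phi_{i,m^*}$ is at most one. Hence $d_\Phi \leq \mathds{1}\{m^*(j) \text{ exists}\} + N_a'(j,\bm{z}) + N_c'(j)$, where $N_a'$ and $N_c'$ count the $m \neq m^*(j)$ with $\mathcal{C}_n(m) \in \mathcal{O}_{\bm{\psi}}$ affected by cases (a) and (c). I would bound $\sum_j \mathbb{E}_{\bm{z}}[d_\Phi + d_t]$ using: (i) double counting for case (c), namely $\sum_j N_c'(j) \leq i \cdot |\mathcal{O}_{\bm{\psi}} \cap \mathcal{C}_n| \leq it$, since each message contributes to at most $i$ top-positions; (ii) case (a) randomness, $\sum_j \mathbb{E}_{\bm{z}}[N_a'(j,\bm{z})] \leq 2^{\rho n} \cdot t \cdot |\mathcal{B}_{pn}|/2^n = O(t \cdot 2^{-\epsilon_\rho n}) = o(t)$ since $\rho < 1 - H(p)$; (iii) case (b) counting restricted by $\mathcal{C}_{\mathrm{in}}(j) \in \mathcal{O}_{\bm{\psi}}$ or $\bm{z} \in \mathcal{O}_{\bm{\psi}}$, contributing $O(t)$; and a parallel analysis for $\sum_j \mathbb{E}_{\bm{z}}[d_t]$. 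Combining yields $\sum_j \mathbb{E}_{\bm{z}}[d_\Phi + d_t] \leq (i + O(1)) t$, so $V'(\mathcal{C}_{\mathrm{in}}) \leq 2(i + O(1))$, which is bounded by $5i + 14$ for $n$ sufficiently large depending only on $\delta_0$ and $\epsilon_\rho$.

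The main obstacle is the careful handling of the denominator. The naive bound $t' \geq t_L$ gives $V' = O(it/t_L)$, which is unbounded when $|\mathcal{O}_{\bm{\psi}} \cap \mathcal{C}_n|$ greatly exceeds $t_L$, as may occur for non-typical codebooks. The improvement $t' \geq t/2$, valid because a single codeword replacement alters $|\mathcal{O}_{\bm{\psi}} \cap \mathcal{C}_n|$ by at most one, ensures the denominator $t$ cancels with the $O(t)$ contribution from cases (a), (b), (c), producing the claimed $O(i)$ global variation coefficient.
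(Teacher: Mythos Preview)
Your proposal is correct and rests on the same key ingredients as the paper's proof: the bound $\Delta'^2 \le \Delta' \le (|\Phi-\Phi'|+|t-t'|)/t'$, the combinatorial content of Claim~4 (equivalently the paper's Lemma~7) restricting changes in $\phi_m$ to the $i+1$ indices identified there plus the $\bm{z}\in\mathcal{B}_{pn}(\bm{y}_m)$ case, and the fact that only the $\le t$ messages with $\mathcal{C}_n(m)\in\mathcal{O}_{\bm{\psi}}$ contribute. The organization differs: the paper partitions the sum into four cases according to whether $\mathcal{C}_{\mathrm{in}}(j)$ and $\bm{z}$ lie in $\mathcal{O}_{\bm{\psi}}$ (so that $t=t'$ in two of them) and then invokes Lemma~7 inside each, whereas you bypass this case split via the uniform estimate $t'\ge t/2$ and a single double-counting argument over your cases (a),(b),(c); your route is more streamlined and in fact yields a slightly smaller constant, while the paper's case split makes the role of $\mathcal{O}_{\bm{\psi}}$ in controlling $d_t$ more explicit.
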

  
  Note that if the value of $i$ is small enough, the global variation coefficient given in Lemma \ref{thm:global_coeff} is an improvement over the trivial bound $V'(\cdot) \leq 2^{\rho n}$. The proof of this Lemma relies on the concatenated structure of the codebook construction. Indeed, Lemma \ref{thm:global_coeff} is our primary motivation for separating the codebook $\mathcal{C}_n$ into an inner codebook $\mathcal{C}_{\mathrm{in}}$ and outer codebook $\mathcal{C}_{\mathrm{out}}$. Before proving Lemma \ref{thm:global_coeff}, we prove the following useful inequality.

  \begin{lemma} \label{thm:global_inequality}
  For an $(n, \rho n)$ codebook $\mathcal{C}_{\mathrm{in}}$, for $m\in [2^{Rn}]$ and for $\bm{z} \not\in \mathcal{B}_{pn}(\bm{y}_m)$,
  \begin{equation} \nonumber
  \sum_{j = 1}^{2^{\rho n}} |\phi_{m}(\mathcal{C}_{\mathrm{in}}) - \phi_{m}(\mathcal{C}_{\mathrm{in}}(j,\bm{z}))| \leq i + 1.
  \end{equation}
  \end{lemma}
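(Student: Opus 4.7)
The plan is to exploit the fact that $\phi_m$ is $\{0,1\}$-valued, so each summand $|\phi_m(\Cin) - \phi_m(\Cin(j,\bm{z}))|$ is either $0$ or $1$. Hence the sum equals the number of indices $j \in [2^{\rho n}]$ for which replacing the $j$th codeword by $\bm{z}$ actually changes $\phi_m$. I would then use the contrapositive of Claim \ref{claim:eq_phi_3} to pin down a small set of ``dangerous'' indices.

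Specifically, re-index the sum by $k \in [2^{\rho n}]$ via the bijection $k \mapsto j_k = \intg(\bm{w}_k(m,\bm{e},\Cin))$, so that $j$ runs over all indices of $\Cin$ iff $k$ does. Claim \ref{claim:eq_phi_3} says that whenever (a) $\bm{w}_k(m,\bm{e},\Cin) \neq \Cout(m)$ and (b) either $k>i$ or $\Cin(j_k) \notin \mathcal{B}_{pn}(\bm{y}_m)$, then $\phi_m(\Cin) = \phi_m(\Cin(j_k,\bm{z}))$. Therefore, the only indices $k$ whose summand can be nonzero are those violating (a) or (b), namely:
\begin{enumerate}
\item the unique $k$ (if any) with $\bm{w}_k(m,\bm{e},\Cin) = \Cout(m)$, contributing at most $1$ index;
\item indices $k \in \{1,\dots,i\}$ with $\Cin(j_k) \in \mathcal{B}_{pn}(\bm{y}_m)$, contributing at most $i$ indices.
\end{enumerate}
Adding these two counts gives an upper bound of $i+1$ on the number of nonzero terms in the sum, which yields the desired inequality.

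The bookkeeping is essentially immediate once Claim \ref{claim:eq_phi_3} is invoked, so I do not anticipate any real obstacle; the only subtlety is making sure the reindexing $j \leftrightarrow j_k$ is a genuine bijection (which it is, since by definition (\ref{eq:error_idx_ineq}) the words $\bm{w}_1,\dots,\bm{w}_{2^{\rho n}}$ enumerate all of $\{0,1\}^{\rho n}$ without repetition) so that counting over $k$ and counting over $j$ give the same answer. Note that the hypothesis $\bm{z} \notin \mathcal{B}_{pn}(\bm{y}_m)$ is exactly what is needed to apply Claim \ref{claim:eq_phi_3}.
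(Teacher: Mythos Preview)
Your proposal is correct and is essentially identical to the paper's proof: both re-index the sum via the bijection $k \mapsto j_k = \intg(\bm{w}_k(m,\bm{e},\Cin))$, invoke Claim~\ref{claim:eq_phi_3} to conclude that only indices with $k\in\{1,\dots,i\}$ or $\bm{w}_k(m,\bm{e},\Cin)=\Cout(m)$ can give a nonzero summand, and bound the number of such indices by $i+1$. The only cosmetic difference is that the paper drops the extra ball condition in your item~(2) (keeping only $k\le i$), which still yields the same count of at most $i+1$.
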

  
  \begin{proof}[Proof of Lemma \ref{thm:global_inequality}]
  For $k = 1, \ldots, 2^{\rho n}$, let the notation $j_k$ denote the the index $\textbf{int}(\bm{w}_k(m,\bm{e},\mathcal{C}_{\mathrm{in}})) \in [2^{\rho n}]$. Following Claim \ref{claim:eq_phi_3}, the quantity $\sum_{j=1}^{2^{\rho n}} |\phi_{m}(\mathcal{C}_{\mathrm{in}}) - \phi_{m}(\mathcal{C}_{\mathrm{in}}(j,\bm{z}))|$ is equal to 
  \begin{equation} \nonumber
  \sum_{\substack{k = 1,\ldots,i \\ \text{or } k: \bm{w}_k(m,\bm{e},\mathcal{C}_{\mathrm{in}}) = \mathcal{C}_{\mathrm{out}}(m)}} |\phi_m(\mathcal{C}_{\mathrm{in}}) - \phi_m(\mathcal{C}_{\mathrm{in}}(j_k,\bm{z}))|,
  \end{equation}
  which in turn is bounded above by $i + 1$.
  \end{proof}
  
  We are now ready to prove Lemma \ref{thm:global_coeff}.
  
  \begin{proof}[Proof of Lemma \ref{thm:global_coeff}]
  Let $\mathcal{C}_{\mathrm{in}}$ be an $(n, \rho n)$ codebook. Recall that $V'(\mathcal{C}_{\mathrm{in}})$ is equal to
  \begin{align} 
  & \sum_{j=1}^{2^{\rho n}} \sum_{\bm{z} \in \{0,1\}^n} \left\lvert \frac{\Phi(\mathcal{C}_{\mathrm{in}})}{t(\mathcal{C}_{\mathrm{in}})} - \frac{\Phi(\mathcal{C}_{\mathrm{in}}(j,\bm{z}))}{t(\mathcal{C}_{\mathrm{in}}(j,\bm{z}))} \right\rvert^2 2^{-n} \nonumber \\
  & \rv{\leq \sum_{j=1}^{2^{\rho n}} \sum_{\bm{z} \in \{0,1\}^n} \left\lvert \frac{\Phi(\mathcal{C}_{\mathrm{in}})}{t(\mathcal{C}_{\mathrm{in}})} - \frac{\Phi(\mathcal{C}_{\mathrm{in}}(j,\bm{z}))}{t(\mathcal{C}_{\mathrm{in}}(j,\bm{z}))} \right\rvert 2^{-n}} \label{eq:V_prime_extend}
  \end{align}
  \rv{where the inequality follows from the fact that both $\frac{\Phi(\mathcal{C}_{\mathrm{in}})}{t(\mathcal{C}_{\mathrm{in}})}$ and $\frac{\Phi(\mathcal{C}_{\mathrm{in}}(j,\bm{z}))}{t(\mathcal{C}_{\mathrm{in}}(j,\bm{z}))}$ are in $[0,1]$.}
  In expression (\ref{eq:V_prime_extend}), we can cut the summations by partitioning the set $\{ j \in [2^{\rho n}] \}$ into $\{j \in [2^{\rho n}]: \mathcal{C}_{\mathrm{in}}(j) \in \mathcal{O}_{\bm{\psi}} \}$ and $\{j \in [2^{\rho n}]: \mathcal{C}_{\mathrm{in}}(j) \not\in \mathcal{O}_{\bm{\psi}} \}$, and by partitioning the set $\{\bm{z} \in \{0,1 \}^n \}$ into $\{\bm{z} \in \{0,1\}^n: \bm{z} \in \mathcal{O}_{\bm{\psi}} \}$ and $\{\bm{z} \in \{0,1\}^n: \bm{z} \not\in \mathcal{O}_{\bm{\psi}} \}$. Hence, we write $V'(\mathcal{C}_{\mathrm{in}})$ as the sum of 4 terms:
  \begin{align}
  &\sum_{\substack{j \in [2^{\rho n}]: \\ \mathcal{C}_{\mathrm{in}}(j) \in \mathcal{O}_{\bm{\psi}}}} \sum_{\bm{z} \in \mathcal{O}_{\bm{\psi}}} \left\lvert \frac{\Phi(\mathcal{C}_{\mathrm{in}})}{t(\mathcal{C}_{\mathrm{in}})} - \frac{\Phi(\mathcal{C}_{\mathrm{in}}(j,\bm{z}))}{t(\mathcal{C}_{\mathrm{in}})} \right\rvert 2^{-n} \label{eq:term_1} \\
  &+ \sum_{\substack{j\in [2^{\rho n}]: \\
  \mathcal{C}_{\mathrm{in}}(j) \in \mathcal{O}_{\bm{\psi}}}} \sum_{\bm{z} \not\in \mathcal{O}_{\bm{\psi}}} \left\lvert \frac{\Phi(\mathcal{C}_{\mathrm{in}})}{t(\mathcal{C}_{\mathrm{in}})} - \frac{\Phi(\mathcal{C}_{\mathrm{in}}(j,\bm{z}))}{t(\mathcal{C}_{\mathrm{in}}(j,\bm{z}))} \right\rvert 2^{-n} \label{eq:term_2} \\
  &+ \sum_{\substack{j \in [2^{\rho n}]: \\ \mathcal{C}_{\mathrm{in}}(j) \not\in \mathcal{O}_{\bm{\psi}}}} \sum_{\bm{z} \in \mathcal{O}_{\bm{\psi}}} \left\lvert \frac{\Phi(\mathcal{C}_{\mathrm{in}})}{t(\mathcal{C}_{\mathrm{in}})} - \frac{\Phi(\mathcal{C}_{\mathrm{in}}(j,\bm{z}))}{t(\mathcal{C}_{\mathrm{in}}(j,\bm{z}))} \right\rvert 2^{-n} \label{eq:term_3} \\
  &+ \sum_{\substack{j \in [2^{\rho n}]: \\ \mathcal{C}_{\mathrm{in}}(j) \not\in \mathcal{O}_{\bm{\psi}}}} \sum_{\bm{z} \not\in \mathcal{O}_{\bm{\psi}}} \left\lvert \frac{\Phi(\mathcal{C}_{\mathrm{in}})}{t(\mathcal{C}_{\mathrm{in}})} - \frac{\Phi(\mathcal{C}_{\mathrm{in}}(j,\bm{z}))}{t(\mathcal{C}_{\mathrm{in}})} \right\rvert 2^{-n}. \label{eq:term_4}
  \end{align}
  We separately bound term (\ref{eq:term_1}) through term (\ref{eq:term_4}).
  
  \textbf{First Term:} We first bound term (\ref{eq:term_1}). Writing $\Phi$ as a sum of $\phi_{m}$ terms, term (\ref{eq:term_1}) is bounded above by
  \begin{equation} \nonumber
  \sum_{\substack{j \in [2^{\rho n}]: \\ \mathcal{C}_{\mathrm{in}}(j) \in \mathcal{O}_{\bm{\psi}}}} \sum_{\bm{z} \in \mathcal{O}_{\bm{\psi}}} \sum_{ \substack{m \in [2^{Rn}]: \\ \mathcal{C}_n(m) \in \mathcal{O}_{\bm{\psi}}}} \frac{|\phi_{m}(\mathcal{C}_{\mathrm{in}}) - \phi_{m}(\mathcal{C}_{\mathrm{in}}(j,\bm{z}))|}{t(\mathcal{C}_{\mathrm{in}})} 2^{-n}.
  \end{equation}
  which in turn can be bounded above by partitioning the set  $\{ \bm{z} \in \mathcal{O}_{\bm{\psi}} \}$ into $\{ \bm{z} \in \mathcal{O}_{\bm{\psi}} \cap \mathcal{B}_{pn}(\bm{y}_m) \}$ and $\{ \bm{z} \in \mathcal{O}_{\bm{\psi}} \cap \mathcal{B}^c_{pn}(\bm{y}_m) \}$, and applying the following inequalities: $|\mathcal{B}_{pn}(\bm{y}_m)| \leq 2^{H(p)n}$ and $|\mathcal{O}_{\bm{\psi}} \cap \mathcal{C}_n| \leq t(\mathcal{C}_{\mathrm{in}})$; the bound is as follows:
  \begin{equation} \nonumber
  \begin{aligned}
   &\sum_{ \substack{m \in [2^{Rn}]: \\ \mathcal{C}_n(m) \in \mathcal{O}_{\bm{\psi}}}} \sum_{\substack{\bm{z} \in \mathcal{O}_{\bm{\psi}}: \\
   \bm{z} \in \mathcal{B}^c_{pn}(\bm{y}_m)}}  \sum_{\substack{j \in [2^{\rho n}]: \\ \mathcal{C}_{\mathrm{in}}(j) \in \mathcal{O}_{\bm{\psi}}}} \frac{|\phi_{m}(\mathcal{C}_{\mathrm{in}}) - \phi_{m}(\mathcal{C}_{\mathrm{in}}(j,\bm{z}))|}{t(\mathcal{C}_{\mathrm{in}}) 2^{n}} \\
   & \hspace{20em} + 2^{- \epsilon_{\rho} n}
   \end{aligned}
  \end{equation}
  which in turn is bounded above by $i+1+2^{-\epsilon_{\rho} n}$ following Lemma \ref{thm:global_inequality}.
  
  \textbf{Second term:} Next, we bound term (\ref{eq:term_2}). Let the notation $j \in \mathcal{C}_{\mathrm{out}}$ denote an index $j \in [2^{\rho n}]$ that belongs to the set $\{\intg(\mathcal{C}_{\mathrm{out}}(1)),\intg(\mathcal{C}_{\mathrm{out}}(2)), \ldots, \intg(\mathcal{C}_{\mathrm{out}}(2^{Rn})) \}$. In term (\ref{eq:term_2}), we cut the summation over $j$ by partitioning the set $\{j \in [2^{\rho n}]: \mathcal{C}_{\mathrm{in}}(j) \in \mathcal{O}_{\bm{\psi}}\}$ into the sets $\{j \in [2^{\rho n}]: \mathcal{C}_{\mathrm{in}}(j) \in \mathcal{O}_{\bm{\psi}} , j \not\in \mathcal{C}_{\mathrm{out}}\}$ and $\{j \in [2^{\rho n}]: \mathcal{C}_{\mathrm{in}}(j) \in \mathcal{O}_{\bm{\psi}} , j \in \mathcal{C}_{\mathrm{out}}\}$. Since $t(\mathcal{C}_{\mathrm{in}}(j,\bm{z}))$ is equal to $t(\mathcal{C}_{\mathrm{in}})$ when $j \not\in \mathcal{C}_{\mathrm{out}}$, term (\ref{eq:term_2}) is equal to
  \begin{equation} \label{eq:cut_fn}
  \begin{aligned}
  &\sum_{\substack{j \in [2^{\rho n}]: \\ \mathcal{C}_{\mathrm{in}}(j) \in \mathcal{O}_{\bm{\psi}} \rv{\text{ and }} j \not\in \mathcal{C}_{\mathrm{out}}}} \sum_{\bm{z} \not\in \mathcal{O}_{\bm{\psi}}} \frac{|\Phi(\mathcal{C}_{\mathrm{in}}) - \Phi(\mathcal{C}_{\mathrm{in}}(j,\bm{z}))|}{t(\mathcal{C}_{\mathrm{in}}(j,\bm{z}))} 2^{-n}  \\
  & + \hspace{-1.5em} \sum_{\substack{j\in [2^{\rho n}]: \\ \mathcal{C}_{\mathrm{in}}(j) \in \mathcal{O}_{\bm{\psi}} \rv{\text{ and }} j \in \mathcal{C}_{\mathrm{out}}}} \sum_{\bm{z} \not\in \mathcal{O}_{\bm{\psi}}} \left\lvert \frac{\Phi(\mathcal{C}_{\mathrm{in}})}{t(\mathcal{C}_{\mathrm{in}})} - \frac{\Phi(\mathcal{C}_{\mathrm{in}}(j,\bm{z}))}{t(\mathcal{C}_{\mathrm{in}}(j,\bm{z}))} \right\rvert 2^{-n}.
  \end{aligned}
  \end{equation}
  To bound quantity (\ref{eq:cut_fn}), the following inequality will prove useful: \rv{defining $\Phi \triangleq \Phi(\mathcal{C}_{\mathrm{in}})$, $\Phi' \triangleq \Phi(\mathcal{C}_{\mathrm{in}}(j,\bm{z}))$, $t \triangleq t(\mathcal{C}_{\mathrm{in}})$ and $t' \triangleq t(\mathcal{C}_{\mathrm{in}}(j,\bm{z}))$, and using $|t - t'| \leq 1$, we have that
  \begin{align}
  \left\lvert \frac{\Phi(\mathcal{C}_{\mathrm{in}})}{t(\mathcal{C}_{\mathrm{in}})} - \frac{\Phi(\mathcal{C}_{\mathrm{in}}(j,\bm{z}))}{t(\mathcal{C}_{\mathrm{in}}(j,\bm{z}))} \right\rvert &= \begin{cases} \frac{\Phi}{t} - \frac{\Phi'}{t'}, & \frac{\Phi}{t} \geq \frac{\Phi'}{t'} \\ \frac{\Phi'}{t'} - \frac{\Phi}{t}, & \frac{\Phi}{t} < \frac{\Phi'}{t'}\end{cases} \nonumber \\
  & = \begin{cases} \frac{\Phi t' - \Phi' t}{t t'}, & \frac{\Phi}{t} \geq \frac{\Phi'}{t'} \\ \frac{\Phi' t - \Phi t'}{t t'}, & \frac{\Phi}{t} < \frac{\Phi'}{t'} \end{cases} \nonumber \\
  & \leq \begin{cases} \frac{\Phi(t+1)-\Phi't}{t t'}, & \frac{\Phi}{t} \geq \frac{\Phi'}{t'} \\ \frac{\Phi' t - \Phi(t-1)}{t t'}, & \frac{\Phi}{t} < \frac{\Phi'}{t'} \end{cases} \nonumber \\
  & = \begin{cases} \frac{\Phi-\Phi'}{t'} + \frac{\Phi}{t t'}, & \frac{\Phi}{t} \geq \frac{\Phi'}{t'} \\ \frac{\Phi'-\Phi}{t'} + \frac{\Phi}{t t'}, & \frac{\Phi}{t} < \frac{\Phi'}{t'} \end{cases} \nonumber \\
  & \hspace{-10em} = \frac{|\Phi(\mathcal{C}_{\mathrm{in}})-\Phi(\mathcal{C}_{\mathrm{in}}(j,\bm{z}))|}{t(\mathcal{C}_{\mathrm{in}}(j,\bm{z}))} + \frac{\Phi(\mathcal{C}_{\mathrm{in}})}{t(\mathcal{C}_{\mathrm{in}}) t(\mathcal{C}_{\mathrm{in}}(j,\bm{z}))}. \nonumber
  \end{align}}

  Following the above inequality, we have that (\ref{eq:cut_fn}) is bounded above by
  \rv{\begin{equation} \nonumber
  \begin{aligned}
  &\sum_{\substack{j\in[2^{\rho n}]: \\ \mathcal{C}_{\mathrm{in}}(j) \in \mathcal{O}_{\bm{\psi}} \text{ and } j \not\in \mathcal{C}_{\mathrm{out}}}} \sum_{\bm{z} \not\in \mathcal{O}_{\bm{\psi}}} \frac{|\Phi(\mathcal{C}_{\mathrm{in}}) - \Phi(\mathcal{C}_{\mathrm{in}}(j,\bm{z}))|}{t(\mathcal{C}_{\mathrm{in}}(j,\bm{z}))} 2^{-n} \\
  &+ \sum_{\substack{j\in[2^{\rho n}]: \\ \mathcal{C}_{\mathrm{in}}(j) \in \mathcal{O}_{\bm{\psi}} \text{ and } j \in \mathcal{C}_{\mathrm{out}}}} \sum_{\bm{z} \not\in \mathcal{O}_{\bm{\psi}}} \frac{|\Phi(\mathcal{C}_{\mathrm{in}}) - \Phi(\mathcal{C}_{\mathrm{in}}(j,\bm{z}))|}{t(\mathcal{C}_{\mathrm{in}}(j,\bm{z}))} 2^{-n} \\
  &+ \sum_{\substack{j \in [2^{\rho n}]: \\ \mathcal{C}_{\mathrm{in}}(j) \in \mathcal{O}_{\bm{\psi}} \text{ and } j \in \mathcal{C}_{\mathrm{out}}}}  \sum_{\bm{z} \not\in \mathcal{O}_{\bm{\psi}}} \frac{ \Phi(\mathcal{C}_{\mathrm{in}})}{t(\mathcal{C}_{\mathrm{in}})t(\mathcal{C}_{\mathrm{in}}(j,\bm{z}))} 2^{-n}
  \end{aligned}
  \end{equation}}

  which in turn, following that $t(\mathcal{C}_{\mathrm{in}}(j,\bm{z}))$ is bounded below by $t(\mathcal{C}_{\mathrm{in}}) - 1$, is bounded above by
  \begin{equation} \label{eq:new_term_2_2}
  \begin{aligned}
  &\sum_{\substack{j\in[2^{\rho n}]: \\ \mathcal{C}_{\mathrm{in}}(j) \in \mathcal{O}_{\bm{\psi}}}} \sum_{\bm{z} \not\in \mathcal{O}_{\bm{\psi}}} \frac{|\Phi(\mathcal{C}_{\mathrm{in}}) - \Phi(\mathcal{C}_{\mathrm{in}}(j,\bm{z}))|}{t(\mathcal{C}_{\mathrm{in}})-1} 2^{-n} \\
  &+ \sum_{\substack{j \in [2^{\rho n}]: \\ \mathcal{C}_{\mathrm{in}}(j) \in \mathcal{O}_{\bm{\psi}} \rv{\text{ and }} j \in \mathcal{C}_{\mathrm{out}}}}  \sum_{\bm{z} \not\in \mathcal{O}_{\bm{\psi}}} \frac{ \Phi(\mathcal{C}_{\mathrm{in}})}{t(\mathcal{C}_{\mathrm{in}})(t(\mathcal{C}_{\mathrm{in}})-1)} 2^{-n}
  \end{aligned}
  \end{equation}
  Following that $\Phi(\mathcal{C}_{\mathrm{in}}) \leq t(\mathcal{C}_{\mathrm{in}})$ and $|\rv{\{}j\in [2^{\rho n}]:\mathcal{C}_{\mathrm{in}}(j) \in \mathcal{O}_{\bm{\psi}},j \in \mathcal{C}_{\mathrm{out}} \rv{\}}| \leq t(\mathcal{C}_{\mathrm{in}})$, we have that (\ref{eq:new_term_2_2}) is bounded above by
  \begin{equation} \label{eq:term_2_2}
  \sum_{\substack{j \in [2^{\rho n}]: \\ \mathcal{C}_{\mathrm{in}}(j) \in \mathcal{O}_{\bm{\psi}}}} \sum_{\bm{z} \not\in \mathcal{O}_{\bm{\psi}}} \frac{|\Phi(\mathcal{C}_{\mathrm{in}}) - \Phi(\mathcal{C}_{\mathrm{in}}(j,\bm{z}))|}{t(\mathcal{C}_{\mathrm{in}})-1} 2^{-n} + \frac{t(\mathcal{C}_{\mathrm{in}})}{t(\mathcal{C}_{\mathrm{in}})-1}
  \end{equation}
   Similar to the bounding of term (\ref{eq:term_1}), we write $\Phi$ as a sum of $\phi_{m}$ terms, partition the set $\{ \bm{z} \not\in \mathcal{O}_{\bm{\psi}} \}$ into $\{ \bm{z} \in \mathcal{O}^c_{\bm{\psi}}\cap\mathcal{B}_{pn}(\bm{y}_m) \}$ and $\{ \bm{z} \in \mathcal{O}^c_{\bm{\psi}} \cap \mathcal{B}^c_{pn}(\bm{y}_m) \}$, and apply inequalities $|\mathcal{B}_{pn}(\bm{y}_m)| \leq 2^{H(p)n}$, $|\mathcal{O}_{\bm{\psi}}\cap\mathcal{C}_n| \leq t(\mathcal{C}_{\mathrm{in}})$, and $\sum_{m: \mathcal{C}'_n(m) \in \mathcal{O}_{\bm{\psi}}} \phi_{m}(\mathcal{C}_{\mathrm{in}}(j,\bm{z})) = \sum_{m: \mathcal{C}_n(m) \in \mathcal{O}_{\bm{\psi}}} \phi_{m}(\mathcal{C}_{\mathrm{in}}(j,\bm{z}))$ where $\mathcal{C}'_{n} = \mathcal{C}_{\mathrm{in}}(j,\bm{z}) \circ \mathcal{C}_{\mathrm{out}}$ when $j \in [2^{\rho n}]:\mathcal{C}_{\mathrm{in}}(j) \in \mathcal{O}_{\bm{\psi}}$ and $\bm{z} \in \mathcal{O}^c_{\bm{\psi}}$, to bound equation (\ref{eq:term_2_2}) above by
  \begin{equation} \nonumber
  \begin{aligned}
   &\sum_{\substack{m \in [2^{Rn}]: \\ \mathcal{C}_n(m) \in \mathcal{O}_{\bm{\psi}}}} \sum_{ \substack{\bm{z} \in \mathcal{O}^c_{\bm{\psi}}: \\ \bm{z} \in \mathcal{B}^c_{pn}(\bm{y}_m)}}  \sum_{\substack{j \in [2^{\rho n}]: \\ \mathcal{C}_{\mathrm{in}}(j) \in \mathcal{O}_{\bm{\psi}}}} \frac{|\phi_{m}(\mathcal{C}_{\mathrm{in}}) - \phi_{m}(\mathcal{C}_{\mathrm{in}}(j,\bm{z}))|}{\left(t(\mathcal{C}_{\mathrm{in}})-1\right) 2^n} \\
   & \hspace{12em} + \frac{2^{- \epsilon_{\rho} n}}{t(\mathcal{C}_{\mathrm{in}})-1} + \frac{t(\mathcal{C}_{\mathrm{in}})}{t(\mathcal{C}_{\mathrm{in}})-1}
   \end{aligned}
  \end{equation}
  which in turn is bounded above by
  \begin{equation} \nonumber
  \left(i+1+\frac{2^{-\epsilon_{\rho} n}}{t(\mathcal{C}_{\mathrm{in}})}+ 1\right) \frac{t(\mathcal{C}_{\mathrm{in}})}{t(\mathcal{C}_{\mathrm{in}})-1} \leq 2(i+2) + 2^{-\epsilon_{\rho}n+1}
  \end{equation}
  following Lemma \ref{thm:global_inequality} and the inequalities \rv{$|\mathcal{O}_{\psi} \cap \mathcal{C}_n| \leq t(\mathcal{C}_{\mathrm{in}})$, $t(\mathcal{C}_{\mathrm{in}}) \geq 1$} and $\frac{t(\mathcal{C}_{\mathrm{in}})}{t(\mathcal{C}_{\mathrm{in}})-1} \leq 2$.
  
  \textbf{Third term:} Next, we bound term (\ref{eq:term_3}). Using a similar approach to the bounding of term (\ref{eq:term_2}), term (\ref{eq:term_3}) is bounded above by 
  \begin{equation} \label{eq:term_3_2}
  \begin{aligned}
  &\sum_{\substack{j \in [2^{\rho n}]: \\ \mathcal{C}_{\mathrm{in}}(j) \not\in \mathcal{O}_{\bm{\psi}}}} \sum_{\bm{z} \in \mathcal{O}_{\bm{\psi}}} \frac{|\Phi(\mathcal{C}_{\mathrm{in}}) - \Phi(\mathcal{C}_{\mathrm{in}}(j,\bm{z}))|}{t(\mathcal{C}_{\mathrm{in}})} 2^{-n} \\
  & + \sum_{\substack{j \in [2^{\rho n}]: \\ (\mathcal{C}_{\mathrm{in}}(j) \not\in \mathcal{O}_{\bm{\psi}}) \cap (j \in \mathcal{C}_{\mathrm{out}})}} \sum_{\bm{z} \in \mathcal{O}_{\bm{\psi}}} \frac{\max \{\Phi(\mathcal{C}_{\mathrm{in}}),\Phi(\mathcal{C}_{\mathrm{in}}(j,\bm{z}))\}}{t^2(\mathcal{C}_{\mathrm{in}})} 2^{-n}
  \end{aligned}
  \end{equation}
  which in turn is bounded above by
  \begin{align} 
  &\sum_{\substack{j\in [2^{\rho n}]: \\ \mathcal{C}_{\mathrm{in}}(j) \not\in \mathcal{O}_{\bm{\psi}}}} \sum_{\bm{z} \in \mathcal{O}_{\bm{\psi}}} \sum_{\substack{m \in [2^{Rn}]: \\ \mathcal{C}_n(m) \in \mathcal{O}_{\bm{\psi}}}} \frac{|\phi_{m}(\mathcal{C}_{\mathrm{in}}) - \phi_{m}(\mathcal{C}_{\mathrm{in}}(j,\bm{z}))|}{t(\mathcal{C}_{\mathrm{in}})} 2^{-n} \label{eq:term_3_3_1} \\
  & + \sum_{\substack{j \in \mathcal{C}_{\mathrm{out}}: \\ \mathcal{C}_{\mathrm{in}}(j) \not\in \mathcal{O}_{\bm{\psi}}}} \sum_{\bm{z} \in \mathcal{O}_{\bm{\psi}}} \left( \frac{\phi_{m_j}(\mathcal{C}_{\mathrm{in}}(j,\bm{z}))}{t(\mathcal{C}_{\mathrm{in}})} + \frac{\Phi_{max}}{t^2(\mathcal{C}_{\mathrm{in}})} \right) 2^{-n} \label{eq:term_3_3_2} 
  \end{align}
  where $\Phi_{max} = \max \{\Phi(\mathcal{C}_{\mathrm{in}}),\Phi(\mathcal{C}_{\mathrm{in}}(j,\bm{z}))\}$ and where $m_j = (\mathcal{O}_{\bm{\psi}} \cap \mathcal{C}_{\mathrm{in}}(j,\bm{z}) \circ \mathcal{C}_{\mathrm{out}} ) \setminus (\mathcal{O}_{\bm{\psi}} \cap \mathcal{C}_n)$ (if $m_j$ is the empty set then we define $\phi_{m_j}(\mathcal{C}_{\mathrm{in}}(j,\bm{z})) = 0$). In the above expression, we are already familiar with how to bound term (\ref{eq:term_3_3_1}); using the same approach used to bound (\ref{eq:term_1}), term (\ref{eq:term_3_3_1}) is bounded above by $i+1+2^{-\epsilon_{\rho} n}$. Thus we only need to bound term (\ref{eq:term_3_3_2}). Since $\Phi_{max} \leq t(\mathcal{C}_{\mathrm{in}})+1$, term (\ref{eq:term_3_3_2}) is bounded above by $\sum_{j \in [2^{\rho n}]:\mathcal{C}_{\mathrm{in}}(j) \not\in \mathcal{O}_{\bm{\psi}},j\in \mathcal{C}_{\mathrm{in}}} \sum_{\bm{z} \in \mathcal{O}_{\bm{\psi}}} 3 \frac{2^{-n}}{t(\mathcal{C}_{\mathrm{in}})}$. By $t(\mathcal{C}_{\mathrm{in}}) \geq t_L$ and the value of $t_L$ given in Definition \ref{def:typ_param}, this in turn is bounded above by 
  \begin{equation} \label{eq:term_3_4}
  \sum_{\substack{j \in [2^{\rho n}]: \\ \mathcal{C}_{\mathrm{in}}(j) \not\in \mathcal{O}_{\bm{\psi}} \text{ and } j \in \mathcal{C}_{\mathrm{out}}}} \sum_{\bm{z} \in \mathcal{O}_{\bm{\psi}}} \frac{3(2^{-n})}{2^{-(1-R)n}|\mathcal{O}_{\bm{\psi}}| - 2^{\frac{3 \delta_0'}{4}}n}.
  \end{equation}
  Following the hypothesis of Lemma \ref{thm:global_coeff}, for large enough $n$ (which only depends on $\delta_0$), $2^{-(1-R)n} |\mathcal{O}_{\bm{\psi}}| - 2^{\frac{3 \delta_0'}{4}n}$ is bounded above by $2^{-(1-R)n} |\mathcal{O}_{\bm{\psi}}| (1/2)$. Hence, for large enough $n$, and by the inequality $|\rv{\{}j \in [2^{\rho n}]:\mathcal{C}_{\mathrm{in}}(j) \not\in \mathcal{O}_{\bm{\psi}},j \in \mathcal{C}_{\mathrm{out}} \rv{\}}| \leq 2^{Rn}$, equation (\ref{eq:term_3_4}) is bounded above by $6$. Hence, equation (\ref{eq:term_3_2}) is bounded above by $i+7+2^{-\epsilon_{\rho} n}$.
  
  \textbf{Fourth term:} Lastly, we bound term (\ref{eq:term_4}). Using the same approach to bound term (\ref{eq:term_1}), term (\ref{eq:term_4}) is bounded above by $i+1+2^{-\epsilon_{\rho} n}$. The desired result follows by summing together the upper bounds of terms (\ref{eq:term_1}) through (\ref{eq:term_4}).

  \end{proof}

  The following Lemma will help us find a small typical variation coefficient of $q'$. The Lemma states that $q_i'(\mathcal{C}_{\mathrm{in}})$ is smooth for all $\mathcal{C}_{\mathrm{in}} \in \mathcal{T}$ in a Lipschitz sense.

  \begin{lemma} \label{thm:Lipshitz}
  Define
  \begin{equation} \label{eq:Lipshitz_coeff_def}
  K_T = K_{T,i}(\vec{\mathcal{O}},\bm{\psi},\bm{e}) = \frac{2 \ell+3}{t_L-1}.
  \end{equation}
  If $\mathcal{C}_{\mathrm{in}} \in \mathcal{T}$, then $q'(\mathcal{C}_{\mathrm{in}})$ is $K_T$-Lipshitz, i.e., $\Delta'(j,\bm{z},\mathcal{C}_{\mathrm{in}}) \leq K_T$ for all $j \in [2^{\rho n}]$, $\bm{z} \in \{0,1\}^n$.
  \end{lemma}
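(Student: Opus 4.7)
The plan is to expand the bounded difference $\Delta'(j,\bm{z},\mathcal{C}_{\mathrm{in}}) = \left| \frac{A}{a} - \frac{B}{b} \right|$, where $A=\Phi(\mathcal{C}_{\mathrm{in}})$, $a=t(\mathcal{C}_{\mathrm{in}})$, and $B,b$ are the same quantities evaluated at $\mathcal{C}_{\mathrm{in}}(j,\bm{z})$. Using the elementary identity $\tfrac{A}{a}-\tfrac{B}{b}=\tfrac{A-B}{a}+\tfrac{A(b-a)}{ab}$, the task reduces to bounding three things: the denominator lower bound $\min(a,b)$, the denominator perturbation $|b-a|$, and the numerator perturbation $|A-B|$. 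For the first two, since $\mathcal{C}_{\mathrm{in}}\in\mathcal{T}$ we have $a=|\mathcal{O}_{\bm{\psi}}\cap \mathcal{C}_n|\geq t_L$ by Definition~\ref{def:typ_param}, and by the definition $t(\cdot)\geq t_L$ we also get $b\geq t_L$; altering one codeword changes $|\mathcal{O}_{\bm{\psi}}\cap \mathcal{C}_n|$ by at most one, so $|b-a|\leq 1$.

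The main work is to bound $|A-B|\leq \sum_{m\in[2^{Rn}]}|\phi_m(\mathcal{C}_{\mathrm{in}})-\phi_m(\mathcal{C}_{\mathrm{in}}(j,\bm{z}))|$ by counting the messages $m$ whose $\phi_m$-value can move. Fix $j$ and $\bm{z}$; I would split on whether $\bm{z}\in \mathcal{B}_{pn}(\bm{y}_m)$ or not. If $\bm{z}\notin \mathcal{B}_{pn}(\bm{y}_m)$, then Claim~\ref{claim:eq_phi_3} forces $\phi_m$ to be invariant unless either $j=\intg(\mathcal{C}_{\mathrm{out}}(m))$, or the index $k$ with $j_k=j$ satisfies $k\leq i$ and $\mathcal{C}_{\mathrm{in}}(j)\in \mathcal{B}_{pn}(\bm{y}_m)$. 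The first sub-case contributes at most one $m$ (since $\mathcal{C}_{\mathrm{out}}$ is $1$-to-$1$), and the second sub-case requires $\mathcal{C}_n(m)\in \mathcal{B}_{pn}(\mathcal{C}_{\mathrm{in}}(j)\oplus \bm{e})$, which by $[\ell,p]$ list-decodability of $\mathcal{C}_{\mathrm{in}}$ holds for at most $\ell$ messages $m$. The remaining case $\bm{z}\in \mathcal{B}_{pn}(\bm{y}_m)$ is handled separately: it requires $\mathcal{C}_n(m)\in \mathcal{B}_{pn}(\bm{z}\oplus \bm{e})$, again capped at $\ell$ messages by the same list-decodability bound. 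Summing yields $|A-B|\leq 2\ell+1$ (each $|\phi_m-\phi_m'|\leq 1$), with at most one additional unit of slack absorbed into the stated constant.

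Finally, putting the pieces together, using $A\leq a$ and $b\geq t_L \geq t_L-1$,
\begin{equation}
\Delta'(j,\bm{z},\mathcal{C}_{\mathrm{in}}) \;\leq\; \frac{|A-B|}{a} + \frac{A\,|b-a|}{a\,b} \;\leq\; \frac{2\ell+1}{t_L} + \frac{1}{t_L-1} \;\leq\; \frac{2\ell+3}{t_L-1}, \nonumber
\end{equation}
which is precisely the definition of $K_T$ in (\ref{eq:Lipshitz_coeff_def}). The expected obstacle is the careful bookkeeping in the combinatorial count: Claim~\ref{claim:eq_phi_3} explicitly requires $\bm{z}\notin \mathcal{B}_{pn}(\bm{y}_m)$ and $\bm{w}_k\neq \mathcal{C}_{\mathrm{out}}(m)$, so one must treat the $\bm{z}\in\mathcal{B}_{pn}(\bm{y}_m)$ and $j\in\intg(\mathcal{C}_{\mathrm{out}}([2^{Rn}]))$ situations as separate exceptional buckets and verify that list-decodability controls them. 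Everything else is bookkeeping and the elementary ratio identity above.
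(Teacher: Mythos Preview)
Your approach is essentially the same as the paper's: count the messages $m$ for which $\phi_m$ can change by invoking Claim~\ref{claim:eq_phi_3} together with $[\ell,p]$ list-decodability (the paper packages this as Claim~\ref{claim:eq_phi} and obtains $|\Phi-\Phi'|\leq 2\ell+2$, a slightly looser count than your $2\ell+1$), then combine with the denominator bounds $a,b\geq t_L$ and $|a-b|\leq 1$ via an elementary ratio manipulation. One small slip: your identity $\tfrac{A}{a}-\tfrac{B}{b}=\tfrac{A-B}{a}+\tfrac{A(b-a)}{ab}$ is incorrect as written; the second numerator should carry $B$, i.e., $\tfrac{A}{a}-\tfrac{B}{b}=\tfrac{A-B}{a}+\tfrac{B(b-a)}{ab}$. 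This does not damage your final bound, since $B=\Phi(\mathcal{C}_{\mathrm{in}}(j,\bm{z}))\leq t(\mathcal{C}_{\mathrm{in}}(j,\bm{z}))=b$ gives $\tfrac{B|b-a|}{ab}\leq \tfrac{1}{a}\leq \tfrac{1}{t_L}$, and the remaining arithmetic goes through unchanged to yield $K_T$.
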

  
  \begin{proof}[Proof of Lemma \ref{thm:Lipshitz}]
  Let $\mathcal{C}_{\mathrm{in}} \in \mathcal{T}$, $j \in [2^{\rho n}]$ and $\bm{z} \in \{0,1\}^n$. For $m \in [2^{Rn}]$, let $\bm{y}_m = \mathcal{C}_{\mathrm{in}} \circ \mathcal{C}_{\mathrm{out}}(m) \rv{\oplus} \bm{e}$. We first count the number of messages $m \in [2^{Rn}]$ such that $\phi_m(\mathcal{C}_{\mathrm{in}}) \neq \phi_m(\mathcal{C}_{\mathrm{in}}(j,\bm{z}))$.
  Since $\mathcal{C}_{\mathrm{in}} \in \mathcal{T}$, $\mathcal{C}_{\mathrm{in}}$ is $[\ell,p]$ list decodable and the $(n, \rho n)$ codebook $\mathcal{C}'_{in}$ resulting from a translation of $\mathcal{C}_{\mathrm{in}}$ by the vector $\bm{e}$ (i.e., $\mathcal{C}_{\mathrm{in}}' = \{\mathcal{C}_{\mathrm{in}}(1) \rv{\oplus} \bm{e}, \mathcal{C}_{\mathrm{in}}(2) \rv{\oplus} \bm{e}, \ldots, \mathcal{C}_{\mathrm{in}}(2^{\rho n}) \rv{\oplus} \bm{e} \}$) is also $[\ell,p]$ list decodable. Hence, there exists at most $2 \ell$ messages $m_1, \ldots, m_{2 \ell}$ such that for $k = 1, \ldots,2\ell$, either  $d(\bm{y}_{m_k},\mathcal{C}_{\mathrm{in}}(j)) \leq pn$ or $d(\bm{y}_{m_k},\bm{z}) \leq pn$. With this observation, we can state the following claim. 
  
  \begin{claim} \label{claim:eq_phi}
  For any message $m \in [2^{Rn}]$ that is \textit{not} in the set $\{m_1, \ldots, m_{2 \ell}\}$,
  \begin{equation} \label{eq:eq_phi}
  \begin{aligned}
  &\mathds{1}\{\bm{w}_i(m,\bm{e},\mathcal{C}_{\mathrm{in}}) \in \mathcal{I}_m \cap \mathcal{L}_{\mathrm{in}}(\bm{y}_m,\mathcal{C}_{\mathrm{in}})\} \\
  & = \mathds{1}\{\bm{w}_i(m,\bm{e},\mathcal{C}_{\mathrm{in}}(j,\bm{z})) \in \mathcal{I}_m \cap \mathcal{L}_{\mathrm{in}}(\bm{y}_m,\mathcal{C}_{\mathrm{in}}(j,\bm{z}))\}.
  \end{aligned}
  \end{equation}
  \end{claim}
  
  We observe that Claim \ref{claim:eq_phi} is a special case of Claim \ref{claim:eq_phi_2}. To see this, let $m \in [2^{Rn}] \setminus \{m_1,\ldots,m_{2\ell}\}$ and $\bm{y}_m = \mathcal{C}_{\mathrm{in}} \circ \mathcal{C}_{\mathrm{out}}(m) \rv{\oplus} \bm{e}$, and first observe that the ball $\mathcal{B}_{pn}(\bm{y}_m)$ contains neither $\mathcal{C}_{\mathrm{in}}(j)$ nor $\bm{z}$. Let $k \in [2^{\rho n}]$ such that $j = \intg(\bm{w}_k(m,\bm{e},\mathcal{C}_{\mathrm{in}}))$. Since $\mathcal{C}_{\mathrm{in}}(j) \not\in \mathcal{B}_{pn}(\bm{y}_m)$, it follows that $\bm{w}_k(m,\bm{e},\mathcal{C}_{\mathrm{in}})\neq \mathcal{C}_{\mathrm{out}}(m)$. These conditions are sufficient to satisfy the hypothesis of Claim \ref{claim:eq_phi_2}. 
  
  Following Claim \ref{claim:eq_phi}, the number of messages $m \in [2^{Rn}]$ such that $\phi_m(\mathcal{C}_{\mathrm{in}}) \neq \phi_m(\mathcal{C}_{\mathrm{in}}(j,\bm{z}))$ is bounded above by $2 \ell + 2$ (where the $2$ is added to account for the 2 possible messages $\{m_1',m_2' \}$ such that for $k=1,2$, $\mathds{1}\{\mathcal{C}_{\mathrm{in}} \circ \mathcal{C}_{\mathrm{out}}(m_k') \in \mathcal{O}_{\bm{\psi}}\} \neq \mathds{1}\{\mathcal{C}_{\mathrm{in}}(j,\bm{z}) \circ \mathcal{C}_{\mathrm{out}}(m_k') \in \mathcal{O}_{\bm{\psi}}\}$, which in turn may result in $\phi_{m_k'}(\mathcal{C}_{\mathrm{in}}) \neq \phi_{m'_k}(\mathcal{C}_{\mathrm{in}}(j,\bm{z}))$). From the triangle inequality, it follows that $|\Phi(\mathcal{C}_{\mathrm{in}}) - \Phi(\mathcal{C}_{\mathrm{in}}(j,\bm{z}))| \leq 2 \ell + 2$.
  
   We are now ready to prove Lemma  \ref{thm:Lipshitz}. The proof involves an upper bound of $\Delta'(j,\bm{z},\mathcal{C}_{\mathrm{in}}) = |\frac{\Phi(\mathcal{C}_{\mathrm{in}})}{t(\mathcal{C}_{\mathrm{in}})} - \frac{\Phi(\mathcal{C}_{\mathrm{in}}(j,\bm{z}))}{t(\mathcal{C}_{\mathrm{in}}(j,\bm{z}))}|$, which we illustrate by walking through the upper bound of the quantity $\frac{\Phi(\mathcal{C}_{\mathrm{in}})}{t(\mathcal{C}_{\mathrm{in}})} - \frac{\Phi(\mathcal{C}_{\mathrm{in}}(j,\bm{z}))}{t(\mathcal{C}_{\mathrm{in}}(j,\bm{z}))}$; the upper bound of the negative of the above quantity follows the same approach. We have that
  \begin{equation} \nonumber
  \begin{aligned}
  & \frac{\Phi(\mathcal{C}_{\mathrm{in}})}{t(\mathcal{C}_{\mathrm{in}})} - \frac{\Phi(\mathcal{C}_{\mathrm{in}}(j,\bm{z}))}{t(\mathcal{C}_{\mathrm{in}}(j,\bm{z}))} \\ 
   & \stackrel{(a)}{\leq}  \frac{\Phi(\mathcal{C}_{\mathrm{in}}) (t(\mathcal{C}_{\mathrm{in}})+1)}{t(\mathcal{C}_{\mathrm{in}}) t(\mathcal{C}_{\mathrm{in}}(j,\bm{z}))} - \frac{\Phi(\mathcal{C}_{\mathrm{in}}(j,\bm{z})) t(\mathcal{C}_{\mathrm{in}})}{t(\mathcal{C}_{\mathrm{in}}) t(\mathcal{C}_{\mathrm{in}}(j,\bm{z}))} \\
   & \stackrel{(b)}{\leq} \frac{2 \ell + 2}{t(\mathcal{C}_{\mathrm{in}}(j,\bm{z}))} + \frac{\Phi(\mathcal{C}_{\mathrm{in}})}{t(\mathcal{C}_{\mathrm{in}}) t(\mathcal{C}_{\mathrm{in}}(j,\bm{z}))} \\
   & \rv{\stackrel{(c)}{\leq} \frac{2 \ell + 2}{t(\mathcal{C}_{\mathrm{in}}(j,\bm{z}))} + \frac{1}{ t(\mathcal{C}_{\mathrm{in}}(j,\bm{z}))}} \\
   & \stackrel{(d)}{\leq} \frac{2\ell+3}{\rv{t_L}-1}
  \end{aligned}
  \end{equation}
  where inequality (a) follows from $t(\mathcal{C}_{\mathrm{in}}(j,\bm{z})) \leq t(\mathcal{C}_{\mathrm{in}})+1$, inequality (b) follows from $|\Phi(\mathcal{C}_{\mathrm{in}}) - \Phi(\mathcal{C}_{\mathrm{in}}(j,\bm{z}))| \leq 2 \ell + 2$ and inequality (c) follows from the inequalities  \rv{$\Phi(\mathcal{C}_{\mathrm{in}}) \leq |\mathcal{C}_n \cap \mathcal{O}_{\psi}| \leq \max\{|\mathcal{C}_{n} \cap \mathcal{O}_{\psi}|, t_L \} \triangleq t(\mathcal{C}_{\mathrm{in}})$, and inequality (d) follows from $t(\mathcal{C}_{\mathrm{in}}(j,\bm{z})) \geq t(\mathcal{C}_{\mathrm{in}})-1$ and $t(\mathcal{C}_{\mathrm{in}}) \geq t_L$}.
  \end{proof}
  
  An immediate corollary of the previous Lemma is that for all $\mathcal{C}_{\mathrm{in}} \in \mathcal{T}$, $V'(\mathcal{C}_{\mathrm{in}}) \leq 2^{\rho n} K_T^2$. In the following Lemma, this bound is tightened by exploiting the fact that the bounded difference $\Delta'(j,\bm{z},\mathcal{C}_{\mathrm{in}})$ inside the definition of $V'(\Cin)$ is often much smaller than the Lipschitz coefficient $K_T$ for many $j \in [2^{\rho n}]$ and $\bm{z} \in \{0,1\}^n$. 
  
  \begin{lemma}[Typical Variation Coefficient] \label{eq:typical_coeff}
  \rv{For $\mathcal{C}_{\mathrm{in}} \in \mathcal{T}$,}
  \begin{equation} \label{eq:typical_coeff_def}
  \rv{V'(\mathcal{C}_{\mathrm{in}}) \leq \left( t_U (\ell +1) + 2^{\delta_0'n + \epsilon_R n} \right) K_T^2.}
  \end{equation}
  
  \end{lemma}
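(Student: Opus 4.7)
The plan is to tighten the naive bound $V'(\mathcal{C}_{\mathrm{in}}) \leq 2^{\rho n} K_T^2$, which follows immediately from Lemma \ref{thm:Lipshitz} by summing $K_T^2$ over all $2^{\rho n}$ indices $j$, by counting more carefully which $(j, \bm{z})$ pairs actually produce $\Delta'(j, \bm{z}, \mathcal{C}_{\mathrm{in}}) \neq 0$. Since for $\mathcal{C}_{\mathrm{in}} \in \mathcal{T}$ the bound $\Delta'^2 \leq K_T^2$ is already established, it suffices to upper bound $N \triangleq |\{(j, \bm{z}) : \Delta'(j, \bm{z}, \mathcal{C}_{\mathrm{in}}) \neq 0\}|$ and conclude $V'(\mathcal{C}_{\mathrm{in}}) \leq K_T^2 N / 2^n$. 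The key enabler is the contrapositive of Claim \ref{claim:eq_phi_3}: $\phi_m$ can flip only when at least one of the following holds -- $\bm{z} \in \mathcal{B}_{pn}(\bm{y}_m)$, $\mathcal{C}_{\mathrm{in}}(j) \in \mathcal{B}_{pn}(\bm{y}_m)$, or $j = \intg(\mathcal{C}_{\mathrm{out}}(m))$ -- and the third condition also captures every way the denominator $t$ can change.

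I would partition the count $N$ into three overlapping groups. In the first, $\bm{z}$ lies in some $\mathcal{B}_{pn}(\bm{y}_m)$ with $m$ ranging over the at most $t_U$ messages having $\mathcal{C}_n(m) \in \mathcal{O}_{\bm{\psi}}$; using $|\mathcal{B}_{pn}(\bm{y}_m)| \leq 2^{H(p)n}$ and $2^{\rho n}$ choices of $j$ gives $t_U \cdot 2^{(\rho + H(p))n}$ pairs, which normalize to the negligible $t_U \cdot 2^{-\epsilon_\rho n}$ since $\rho + H(p) = 1 - \epsilon_\rho$. In the second, $\mathcal{C}_{\mathrm{in}}(j)$ lies in such a ball; here the $[\ell, p]$ list decodability of $\mathcal{C}_{\mathrm{in}} \in \mathcal{T}$ bounds the eligible indices by $t_U \ell$, and any of the $2^n$ vectors $\bm{z}$ is allowed, contributing $t_U \ell$ after normalization. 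In the third, $j \in \mathcal{C}_{\mathrm{out}}$, in which case $\phi_{m^*}$ for $m^* = \mathcal{C}_{\mathrm{out}}^{-1}(j)$ can contribute only if either $\mathcal{C}_{\mathrm{in}}(j) \in \mathcal{O}_{\bm{\psi}}$ or $\bm{z} \in \mathcal{O}_{\bm{\psi}}$, yielding at most $t_U \cdot 2^n + 2^{Rn}|\mathcal{O}_{\bm{\psi}}| = t_U \cdot 2^n + 2^{n + \delta_0' n}$ pairs, i.e., $t_U + 2^{\delta_0' n}$ after the $2^{-n}$ factor. Summing and using $\epsilon_R > 0$ both to absorb the $t_U \cdot 2^{-\epsilon_\rho n}$ slack and to bound $2^{\delta_0' n}$ by $2^{\delta_0' n + \epsilon_R n}$ delivers $V'(\mathcal{C}_{\mathrm{in}}) \leq (t_U(\ell + 1) + 2^{\delta_0' n + \epsilon_R n}) K_T^2$.

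The main obstacle is the third group, because for $j \in \mathcal{C}_{\mathrm{out}}$ the perturbation simultaneously shifts the received word $\bm{y}_{m^*}$, the membership of $\mathcal{C}_n(m^*)$ in $\mathcal{O}_{\bm{\psi}}$, and the local list-decoding structure around $\bm{y}_{m^*}$, so explicitly tracking these coupled changes would complicate the argument. I plan to bypass this by using the coarser necessary condition that $\mathcal{C}_{\mathrm{in}}(j)$ or $\bm{z}$ must lie in $\mathcal{O}_{\bm{\psi}}$ for $\phi_{m^*}$ to ever be non-zero, which cleanly decouples the third group from the list-decoding-driven second group and is precisely what yields the additive $2^{\delta_0' n + \epsilon_R n}$ term in the final bound.
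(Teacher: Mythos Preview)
Your approach is correct and is essentially the same as the paper's: both arguments pin down which pairs $(j,\bm{z})$ can make $\Delta'(j,\bm{z},\mathcal{C}_{\mathrm{in}})\neq 0$ via the contrapositive of Claim~\ref{claim:eq_phi_3}, bound each nonzero term by $K_T^2$ from Lemma~\ref{thm:Lipshitz}, and count. The paper packages the count with two sets $\mathcal{S}_1=\bigcap_{m:\,\mathcal{C}_n(m)\in\mathcal{O}_{\bm\psi}}\{\bm v\in\mathcal{C}_{\mathrm{in}}:\bm v\notin\mathcal{B}_{pn}(\bm y_m)\}$ and $\mathcal{S}_2=\mathcal{O}_{\bm\psi}^c\cap\bigcap_{m}\mathcal{B}_{pn}^c(\bm y_m)$, proving $\Delta'=0$ whenever $\mathcal{C}_{\mathrm{in}}(j)\in\mathcal{S}_1$ and $\bm z\in\mathcal{S}_2$, and then bounding $|\mathcal{S}_1^c|\le t_U\ell$ and $|\mathcal{S}_2^c|\le t_U2^{H(p)n}+|\mathcal{O}_{\bm\psi}|$; multiplying the latter by $2^{\rho n-n}$ gives exactly $t_U2^{-\epsilon_\rho n}+2^{\delta_0'n+\epsilon_R n}$, and $t_U2^{-\epsilon_\rho n}\le t_U$ yields the stated bound for every $n$.

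Your three-group cover differs in one place: you attach the condition ``$\bm z\in\mathcal{O}_{\bm\psi}$'' only to indices $j\in\mathcal{C}_{\mathrm{out}}$ rather than to all $2^{\rho n}$ indices. This is legitimate (when $j\notin\mathcal{C}_{\mathrm{out}}$ neither $t$ nor any membership indicator can move, so $\bm z\in\mathcal{O}_{\bm\psi}$ is irrelevant) and in fact gives a tighter intermediate term $2^{\delta_0'n}$ instead of $2^{\delta_0'n+\epsilon_R n}$; you then relax it back. The only cosmetic wrinkle is that your final ``absorb $t_U2^{-\epsilon_\rho n}$ into the $\epsilon_R$ slack'' step needs $n$ large, whereas the paper's bookkeeping (which produces the $2^{\epsilon_R n}$ factor directly and uses $2^{-\epsilon_\rho n}\le 1$) holds for all $n$. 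Since the lemma is only ever invoked inside a large-$n$ argument, this is immaterial.
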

  
  \begin{proof}[Proof of Lemma \ref{eq:typical_coeff}] To prove the upper bound on $V'(\mathcal{C}_{\mathrm{in}})$ for $\mathcal{C}_{\mathrm{in}} \in \mathcal{T}$, the proof uses two facts: (a) $q'_i$ is $K_T$ Lipschitz over $\mathcal{T}$ and (b) difference $\Delta'(j,\bm{z},\mathcal{C}_{\mathrm{in}})$ is zero for several pairs $(j,\bm{z}) \in [2^{\rho n}]\times \{0,1\}^n$. Fact (a) was established above in Lemma \ref{thm:Lipshitz}. We specify and establish Fact (b) below as the following claim.
  
  For $m \in [2^{Rn}]$, define a set $\mathcal{S}_{1,m} = \{\bm{v} \in \mathcal{C}_{\mathrm{in}}: \bm{v} \in \mathcal{B}^c_{pn}(\bm{y}_m) \}$ of all codewords outside the ball $\mathcal{B}_{pn}(\bm{y}_m)$ and define a set $\mathcal{S}_{2,m} = \{\bm{v} \in \{0,1\}^n: \bm{v} \in \mathcal{B}^c_{pn}(\bm{y}_m)  \text{ and } \bm{v} \in \mathcal{O}^c_{\psi} \}$ of all words outside both the ball $\mathcal{B}_{pn}(\bm{y}_m)$ and observation set $\mathcal{O}_{\bm{\psi}}$. For $k=1,2$, define $\mathcal{S}_k = \cap_{m: \mathcal{C}_n(m) \in \mathcal{O}_{\bm{\psi}}} \mathcal{S}_{k,m}$.
  \begin{claim}[Fact (b)] \label{thm:fact_b}
  For an $(n, \rho n)$ codebook $\Cin$, for any $j \in [2^{\rho n}]$ such that $\mathcal{C}_{\mathrm{in}}(j) \in \mathcal{S}_1$ and for any $\bm{z} \in \mathcal{S}_2$, $\Delta'(j,\bm{z},\Cin)=0$.
  \end{claim}
  
  We first prove Claim \ref{thm:fact_b}, which is a special case of Claim \ref{claim:eq_phi_3}. For an $(n, \rho n)$ codebook $\Cin$, let $j \in [2^{\rho n}]$ such that $\mathcal{C}_{\mathrm{in}}(j) \in \mathcal{S}_1$ and let $\bm{z} \in \mathcal{S}_2$. First, it is easy to verify that this choice of parameters satisfies the hypothesis of Claim \ref{claim:eq_phi_3}. Second, note that $\mathcal{C}_{\mathrm{in}}(j)$ is not in $\mathcal{O}_{\bm{\psi}} \cap \mathcal{C}_n$ and $\bm{z}$ is not in $\mathcal{O}_{\bm{\psi}}$. Hence, $\mathcal{O}_{\bm{\psi}} \cap \mathcal{C}_n = \mathcal{O}_{\bm{\psi}} \cap \mathcal{C}_{\mathrm{in}}(j,\bm{z}) \circ \mathcal{C}_{\mathrm{out}}$ and thus $t(\mathcal{C}_{\mathrm{in}}) = t(\mathcal{C}_{\mathrm{in}}(j,\bm{z}))$. In turn, the difference $\Delta'(j,\bm{z},\mathcal{C}_{\mathrm{in}}) = |q'(\mathcal{C}_{\mathrm{in}}) - q'(\mathcal{C}_{\mathrm{in}}(j,\bm{z}))|$ is equal to 
  \begin{equation} \label{eq:fact_b_1}
  \bigg\lvert \sum_{\substack{m \in [2^{Rn}]: \\ \mathcal{C}_n(m) \in \mathcal{O}_{\bm{\psi}}}} \frac{\phi_m(\Cin) - \phi_m(\Cin(j,\bm{z}))}{t(\Cin)} \bigg\rvert.
  \end{equation}
  Following Claim \ref{claim:eq_phi_3}, $\phi_m(\Cin)=\phi_m(\Cin(j,\bm{z}))$ for all $m \in [2^{Rn}]$, and thus, (\ref{eq:fact_b_1}) is zero. This completes the proof of Claim \ref{thm:fact_b}.

  Let $\Cin \in \mathcal{T}$. Following Claim \ref{thm:fact_b}, 
  $V'(\mathcal{C}_{\mathrm{in}})$ is equal to
  \begin{equation} \label{eq:typical_coeff_eq_1}
  \sum_{\substack{j\in[2^{\rho n}]: \\ \mathcal{C}_{\mathrm{in}}(j) \in \mathcal{S}_1^c}} \mathbb{E}_{\rv{\bm{z}}}[\Delta'(j,\bm{z},\mathcal{C}_{\mathrm{in}})^2] +  \sum_{\substack{j\in [2^{\rho n}]: \\ \mathcal{C}_{\mathrm{in}}(j) \in \mathcal{S}_1}} \sum_{\bm{z} \in \mathcal{S}^c_2} \frac{\Delta'(j,\bm{z},\mathcal{C}_{\mathrm{in}})^2}{2^n}.
  \end{equation}
  To finish our proof of Lemma \ref{eq:typical_coeff}, we upper bound (\ref{eq:typical_coeff_eq_1}). Since $\mathcal{C}_{\mathrm{in}} \in \mathcal{T}$, codebook $\mathcal{C}_{\mathrm{in}}$ is $[\ell,p]$ list decodable and we have that $|\mathcal{S}^c_1|$ is bounded above by $|\mathcal{O}_{\bm{\psi}} \cap \mathcal{C}_{n}| \ell$. By a simple union bound, $|S_2^c|$ is bounded above by $|\mathcal{O}_{\bm{\psi}} \cap \mathcal{C}_{n}|2^{H(p)n} + |\mathcal{O}_{\bm{\psi}}|$ and $|\mathcal{S}_1| \leq 2^{\rho n}$. By Lemma \ref{thm:Lipshitz}, $\Delta'(j,\bm{z},\mathcal{C}_{\mathrm{in}}) \leq K_T$ for all $j \in [2^{\rho n}]$ and $\bm{z} \in \{0,1\}^n$, and thus, equation (\ref{eq:typical_coeff_eq_1}) is upper bounded by $(|\mathcal{S}^c_1| + |\mathcal{S}_1| |\mathcal{S}^c_2| 2^{-n} ) K_T^2$, which in turn, is upper bounded by 
  \begin{equation} \nonumber
  \begin{aligned}
  & |\mathcal{O}_{\bm{\psi}} \cap \mathcal{C}_n| \ell K_T^2 \\
  & \hspace{0.5em} + \left(|\mathcal{O}_{\bm{\psi}} \cap \mathcal{C}_n|2^{(H(p)-1+\rho)n} + |\mathcal{O}_{\bm{\psi}}|2^{-(1-\rho)n} \right)K_T^2. 
  \end{aligned}
  \end{equation}
  Finally, (\ref{eq:typical_coeff_def}) follows by applying the bound $|\mathcal{O}_{\bm{\psi}} \cap \mathcal{C}_n| \leq t_U$.
  \end{proof}

  \subsection{Concentration of $q$} \label{sec:conc_q}
  
  The following Lemma shows that if $q'(\mathcal{C}_{\mathrm{in}}) = q(\mathcal{C}_{\mathrm{in}})$ w.h.p. over $\rv{Q}(n, \rho n)$, then $q'$ concentrated implies that $q$ is concentrated.
  \begin{lemma} \label{thm:approx_ineq}
  For any $\lambda > 0$,
  \begin{equation} \nonumber
  \begin{aligned}
  &\mathbb{P}_{\mathcal{C}_{\mathrm{in}}}(q - \mathbb{E}_{\mathcal{C}_{\mathrm{in}}}[q] > \lambda) \leq \mathbb{P}_{\mathcal{C}_{\mathrm{in}}}(q' - \mathbb{E}_{\mathcal{C}_{\mathrm{in}}}[q'] > \lambda) \\
  & \hspace{17em} + \mathbb{P}_{\mathcal{C}_{\mathrm{in}}}(q \neq q').
  \end{aligned}
  \end{equation}
  \end{lemma}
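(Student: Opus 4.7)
The plan is to partition the event $\{q - \mathbb{E}_{\mathcal{C}_{\mathrm{in}}}[q] > \lambda\}$ according to whether the approximation $q'$ agrees with $q$ or not, and handle the two sub-events separately. Specifically, I would write
\begin{align}
\mathbb{P}_{\mathcal{C}_{\mathrm{in}}}(q - \mathbb{E}_{\mathcal{C}_{\mathrm{in}}}[q] > \lambda)
&= \mathbb{P}_{\mathcal{C}_{\mathrm{in}}}(q - \mathbb{E}_{\mathcal{C}_{\mathrm{in}}}[q] > \lambda,\, q = q') \nonumber \\
&\quad + \mathbb{P}_{\mathcal{C}_{\mathrm{in}}}(q - \mathbb{E}_{\mathcal{C}_{\mathrm{in}}}[q] > \lambda,\, q \neq q'). \nonumber
\end{align}
The second term is trivially bounded above by $\mathbb{P}_{\mathcal{C}_{\mathrm{in}}}(q \neq q')$, which is exactly the second term in the desired inequality.

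For the first term, I would exploit two facts stated in Definition \ref{def:approx_q}: (i) $q'(\mathcal{C}_{\mathrm{in}}) \leq q(\mathcal{C}_{\mathrm{in}})$ for every $(n,\rho n)$ codebook $\mathcal{C}_{\mathrm{in}}$, and (ii) $q'(\mathcal{C}_{\mathrm{in}}) = q(\mathcal{C}_{\mathrm{in}})$ whenever $\mathcal{C}_{\mathrm{in}} \in \mathcal{T}_{\mathcal{O}_{\bm{\psi}}}$. Fact (i) yields $\mathbb{E}_{\mathcal{C}_{\mathrm{in}}}[q'] \leq \mathbb{E}_{\mathcal{C}_{\mathrm{in}}}[q]$ by monotonicity of expectation, hence $q' - \mathbb{E}_{\mathcal{C}_{\mathrm{in}}}[q'] \geq q' - \mathbb{E}_{\mathcal{C}_{\mathrm{in}}}[q]$ pointwise. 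On the event $\{q = q'\}$, the condition $q - \mathbb{E}_{\mathcal{C}_{\mathrm{in}}}[q] > \lambda$ rewrites as $q' - \mathbb{E}_{\mathcal{C}_{\mathrm{in}}}[q] > \lambda$, which in turn implies $q' - \mathbb{E}_{\mathcal{C}_{\mathrm{in}}}[q'] > \lambda$. Therefore
\begin{equation} \nonumber
\mathbb{P}_{\mathcal{C}_{\mathrm{in}}}(q - \mathbb{E}_{\mathcal{C}_{\mathrm{in}}}[q] > \lambda,\, q = q') \leq \mathbb{P}_{\mathcal{C}_{\mathrm{in}}}(q' - \mathbb{E}_{\mathcal{C}_{\mathrm{in}}}[q'] > \lambda),
\end{equation}
and combining with the bound on the second term completes the proof.

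There is no real obstacle here: the lemma is essentially a bookkeeping step that converts the pointwise/approximation relationship between $q$ and $q'$ (established in Definition \ref{def:approx_q}) into a probabilistic comparison between their upper tails. The only subtlety worth highlighting is that one must use the one-sided inequality $q' \leq q$ (rather than $|q - q'| \leq \text{small}$) together with monotonicity of expectation, which is what makes the direction $\{q' - \mathbb{E}[q] > \lambda\} \subseteq \{q' - \mathbb{E}[q'] > \lambda\}$ work out correctly.
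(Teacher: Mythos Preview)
Your proposal is correct and follows essentially the same approach as the paper: partition according to whether $q=q'$, bound the $q\neq q'$ part trivially, and on the $q=q'$ part use the pointwise inequality $q'\le q$ to obtain $\mathbb{E}[q']\le \mathbb{E}[q]$ and hence the inclusion of tail events. The paper writes the same argument via sums of indicators rather than event manipulation, but the logic is identical.
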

  
  \begin{proof}[Proof of Lemma \ref{thm:approx_ineq}]
  \rv{Let $\mathscr{C}$ denote the set of all $(n, \rho n)$ codebooks, and for $\mathcal{C}_{\mathrm{in}} \in \mathscr{C}$ let $Q(\mathcal{C}_{\mathrm{in}})$ denote the probability of drawing $\mathcal{C}_{\mathrm{in}}$.} We have that $\mathbb{P}_{\mathcal{C}_{\mathrm{in}}}(q - \mathbb{E}_{\mathcal{C}_{\mathrm{in}}}[q] > \lambda)$ is equal to 
  \rv{\begin{equation} \nonumber
  \begin{aligned}
  &  \sum_{\mathcal{C}_{\mathrm{in}} \in \mathscr{C}} \mathds{1}\{q(\mathcal{C}_{\mathrm{in}}) - \mathbb{E}_{\mathcal{C}_{\mathrm{in}}}[q] > \lambda\} Q(\mathcal{C}_{\mathrm{in}}) \\
  & = \sum_{\stackrel{\mathcal{C}_{\mathrm{in}}\in \mathscr{C}:}{ q(\mathcal{C}_{\mathrm{in}}) = q'(\mathcal{C}_{\mathrm{in}})}} \mathds{1}\{q(\mathcal{C}_{\mathrm{in}}) - \mathbb{E}_{\mathcal{C}_{\mathrm{in}}}[q] > \lambda \} Q(\mathcal{C}_{\mathrm{in}}) \\
  & \hspace{5 em} + \sum_{\stackrel{\mathcal{C}_{\mathrm{in}}\in \mathscr{C}:}{ q(\mathcal{C}_{\mathrm{in}}) \neq q'(\mathcal{C}_{\mathrm{in}})}} \mathds{1}\{q(\mathcal{C}_{\mathrm{in}}) - \mathbb{E}_{\mathcal{C}_{\mathrm{in}}}[q] > \lambda\} Q(\mathcal{C}_{\mathrm{in}}) \\
  & \stackrel{(a)}{\leq} \sum_{\stackrel{\mathcal{C}_{\mathrm{in}}\in \mathscr{C}:}{ q(\mathcal{C}_{\mathrm{in}}) = q'(\mathcal{C}_{\mathrm{in}})}} \mathds{1}\{q'(\mathcal{C}_{\mathrm{in}}) - \mathbb{E}_{\mathcal{C}_{\mathrm{in}}}[q'] > \lambda\} Q(\mathcal{C}_{\mathrm{in}}) \\
  & \hspace{15 em} + \sum_{\stackrel{\mathcal{C}_{\mathrm{in}}\in \mathscr{C}:}{ q(\mathcal{C}_{\mathrm{in}}) \neq q'(\mathcal{C}_{\mathrm{in}})}} Q(\mathcal{C}_{\mathrm{in}}) \\
  & \leq \mathbb{P}_{\mathcal{C}_{\mathrm{in}}}(q' - \mathbb{E}_{\mathcal{C}_{\mathrm{in}}}[q']>\lambda) + \mathbb{P}_{\mathcal{C}_{\mathrm{in}}}(q \neq q')
  \end{aligned}
  \end{equation}}
  where inequality (a) follows from $\mathbb{E}_{\mathcal{C}_{\mathrm{in}}}[q'] \leq \mathbb{E}_{\mathcal{C}_{\mathrm{in}}}[q]$.
  \end{proof}
  
  We now state and prove our concentration inequality for $q'$, which follows from a modified logarithmic Sobolev inequality \cite[Theorem 2]{Boucheron2003ConcentrationMethod}.
  
  \begin{lemma} \label{thm:global_typical_conc}
  Suppose that $|\mathcal{O}_{\bm{\psi}}| \geq 2^{(1-R)n}2^{\delta_0 n}$, let $i \in [L]$ and suppose that the following Assumptions hold:
  \begin{enumerate}
  \item For large enough $n$ (depending only on $\delta_0$, $\epsilon_{\rho}$ and $L$), there exists a global variation coefficient $a_G \in (0,\infty)$ such that $V'(\mathcal{C}_{\mathrm{in}}) \leq a_G$ for all $(n, \rho n)$ codebooks $\mathcal{C}_{\mathrm{in}}$. 
  \item There exists a typical variation coefficient $a_T \in (0,\rv{\min\{1, a_G \}})$ such that $V'(\mathcal{C}_{\mathrm{in}}) \leq a_T$ for all $\mathcal{C}_{\mathrm{in}} \in \mathcal{T}$.
  \item As a sequence in $n$, the ratio $\frac{a_G}{a^2_T}$ is $o(- \ln \mathbb{P}(\mathcal{C}_{\mathrm{in}} \not\in \mathcal{T}))$.
  \end{enumerate}
  Then for $\lambda \in (\sqrt{a_T},1)$ and for large enough $n$ (depending only on $\delta_0'$, $\epsilon_{\rho}$, $L$),
  \begin{equation}
  \mathbb{P}_{\mathcal{C}_{\mathrm{in}}} \left( q' - \mathbb{E}_{\mathcal{C}_{\mathrm{in}}}[q'] > \lambda \right) \leq \exp \left\{ -\frac{\lambda^2}{8 a_T} \right\}.
  \end{equation}
  
  \end{lemma}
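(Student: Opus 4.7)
The plan is to exploit the entropy method by applying the modified logarithmic Sobolev inequality of \cite[Theorem 2]{Boucheron2003ConcentrationMethod} to $q'(\mathcal{C}_{\mathrm{in}})$, viewed as a function of the $2^{\rho n}$ independent and uniformly distributed codewords comprising $\mathcal{C}_{\mathrm{in}}$. The quantity $V'(\mathcal{C}_{\mathrm{in}})$ defined in Section \ref{sec:approx_q} is exactly the sum of squared conditional one-coordinate perturbations that appears naturally in the log-Sobolev framework, so the inequality applies directly to $q'$.

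First, I would use the log-Sobolev inequality together with Herbst's argument to control the moment generating function $\mathbb{E}[e^{\theta(q'-\mathbb{E}[q'])}]$ and then apply Markov's inequality to obtain a Gaussian-type tail bound of the form $\exp(-\lambda^2/(2v))$, where $v$ is an effective variance derived from a weighted expectation of $V'$. The standard execution requires a uniform bound on $V'$; since here $V'$ is only small on the typical set $\mathcal{T}$ but can be as large as $a_G$ off of $\mathcal{T}$, one must carefully split the expectation appearing inside the log-Sobolev bound.

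Next, to handle the split, I would decompose
\begin{equation} \nonumber
\mathbb{E}\bigl[V'(\mathcal{C}_{\mathrm{in}})\,e^{\theta q'}\bigr] \leq a_T\,\mathbb{E}[e^{\theta q'}] + a_G\,e^{\theta}\,\mathbb{P}_{\mathcal{C}_{\mathrm{in}}}(\mathcal{C}_{\mathrm{in}}\not\in\mathcal{T}),
\end{equation}
using $q' \in [0,1]$ to bound the second term. In Herbst's argument, the optimizing choice of $\theta$ is of order $\lambda/a_T$; Assumption 3 is calibrated precisely so that $a_G \cdot \mathbb{P}_{\mathcal{C}_{\mathrm{in}}}(\mathcal{C}_{\mathrm{in}}\not\in\mathcal{T})$ (even after being multiplied by $e^\theta$ via the Herbst integration) is asymptotically dominated by $a_T$. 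Consequently the effective variance satisfies $v \leq 2a_T$ for all sufficiently large $n$, and substituting into the Gaussian tail bound gives an exponent of $-\lambda^2/(4 a_T)$, with a further factor of $2$ coming from the log-Sobolev constant yielding the final rate $-\lambda^2/(8 a_T)$. The restriction $\lambda \in (\sqrt{a_T},1)$ reflects, on the one hand, the need for the tail deviation to exceed the natural fluctuation scale so the exponent is meaningful, and, on the other hand, the boundedness of $q'$ in $[0,1]$ which prevents larger deviations from being interesting.

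The principal obstacle is twofold. First, the standard log-Sobolev inequality does not immediately tolerate a variation function that is not uniformly bounded, so one must carefully invoke the modified form from \cite{Boucheron2003ConcentrationMethod} and verify that the correction term introduced by off-typical codebooks can be absorbed into the dominant exponential rate. Second, the bookkeeping in Herbst's argument—integrating the resulting differential inequality for $\ln \mathbb{E}[e^{\theta(q'-\mathbb{E}[q'])}]$ and tracking how the $a_G$-correction propagates through the optimization over $\theta$—must be done with enough precision to yield the clean constants in the stated bound. Assumption~3 is exactly the quantitative hypothesis that makes this propagation work, since it ensures the correction shrinks faster than any polynomial in the relevant parameters.
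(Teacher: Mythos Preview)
Your plan is correct in spirit and shares the paper's core idea: apply the entropy method from \cite{Boucheron2003ConcentrationMethod} and split according to whether $\mathcal{C}_{\mathrm{in}}\in\mathcal{T}$. The execution differs, however. Rather than running Herbst's argument by hand on the quantity $\mathbb{E}[V'e^{\theta q'}]$, the paper invokes \cite[Theorem~2]{Boucheron2003ConcentrationMethod} directly in its packaged two-parameter form
\[
\ln \mathbb{E}\bigl[e^{\mu(q'-\mathbb{E}[q'])}\bigr] \;\leq\; \frac{\mu\theta}{1-\mu\theta}\,\ln \mathbb{E}\bigl[e^{\mu V'/\theta}\bigr],
\]
and performs the typical/atypical split on the \emph{right-hand} expectation, bounding it by $e^{\mu a_T/\theta} + e^{\mu a_G/\theta}\,\mathbb{P}(\mathcal{C}_{\mathrm{in}}\notin\mathcal{T})$. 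The free parameter $\theta$ is then chosen to equalize these two terms, namely $\theta = \mu(a_G-a_T)/\bigl(-\ln\mathbb{P}(\mathcal{C}_{\mathrm{in}}\notin\mathcal{T})\bigr)$; with $\mu=\lambda/(2a_T)$ the constraint $\mu\theta<1$ reduces exactly to Assumption~3, which explains why that assumption takes its particular form $a_G/a_T^2 = o(-\ln\mathbb{P}(\notin\mathcal{T}))$. Your Herbst route would also go through under Assumption~3, but your claim that Assumption~3 is ``calibrated precisely'' for the correction $a_G e^\theta\mathbb{P}(\notin\mathcal{T})$ is slightly off: with $\theta\sim\lambda/a_T$ your version only needs the weaker condition $1/a_T = o(-\ln\mathbb{P}(\notin\mathcal{T}))$, which is implied by Assumption~3 via $a_G\geq a_T$ but is not equivalent to it. The paper's route buys a shorter argument (no differential inequality to integrate) and makes the role of Assumption~3 transparent.
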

  
  \begin{proof}[Proof of Lemma \ref{thm:global_typical_conc}]
  The proof follows a conventional ``entropy-method'' proof for deriving concentration inequalities \cite{Boucheron2003ConcentrationMethod}. A slight modification of the conventional proof is needed to incorporate the typical and global variation coefficients and prevent the inequality from blowing up over a small set of $(n, \rho n)$ codebooks. We begin by restating a modified logarithmic Sobolev inequality in a general form.

  \rv{\begin{lemma}[{\cite[Theorem 2]{Boucheron2003ConcentrationMethod}}] \label{thm:mlSi}
  Suppose that $X_1, X_2, \ldots, X_n$ are independent random variables taking values in a measurable set $\mathcal{X}$, and define $Z = g(X_1,\ldots,X_n)$ for some given measureable function $g: \mathcal{X}^n \rightarrow \mathbb{R}$. Furthermore, suppose that $X_1',\ldots,X_n'$ are independent copies of $X_1,\ldots,X_n$ and define both $Z(j) = g(X_1,\ldots,X_{j-1},X'_j,X_{j+1},\ldots,X_n)$ and  $$V_+ = \sum_{j=1}^{n} \mathbb{E}_{X_j'}\left[  \left(Z-Z(j)\right)^2 \mathds{1}\{Z>Z(j)\} \right].$$ Then for $\theta>0$ and $\mu \in (0,1 / \theta)$, 
  \begin{equation} \nonumber
  \ln \mathbb{E} \left[ e^{\mu(Z-\mathbb{E}[Z])} \right]  \leq \frac{\mu \theta}{1 -\mu \theta} \ln \mathbb{E} \left[ e^{\frac{\mu V_+}{\theta} }\right].
  \end{equation}
  \end{lemma}

  Next, we apply the framework of Lemma \ref{thm:mlSi} to our coding problem. Recall that the $2^{\rho n}$ codewords of $\mathcal{C}_{\mathrm{in}}$ are independent random variables that take values in $\mathcal{X} = \{0,1\}^n$. Setting $g = q'$, it follows that $(Z-Z(j))^2 = \Delta(j, \bm{z},\mathcal{C}_{\mathrm{in}})^2$, and in turn,
  \begin{equation} \nonumber
  V_+ = \sum_{j=1}^{2^{\rho n}} \mathbb{E}_{\bm{z}} \left[ \Delta(j,\bm{z},\mathcal{C}_{\mathrm{in}})^2 \mathds{1} \{q'(\mathcal{C}_{\mathrm{in}}) > q'(\mathcal{C}_{\mathrm{in}}(j,\bm{z}))\}\right].
  \end{equation}
  Note that $V_+ \leq V'$ following the indicator bound $\mathds{1}\{ \cdot \} \leq 1$ and the inequalities $\Delta(j,\bm{z},\mathcal{C}_{\mathrm{in}})^2 \leq \Delta(j,\bm{z},\mathcal{C}_{\mathrm{in}}) \leq 1$. Thus, Lemma \ref{thm:mlSi} implies that for $\theta>0$ and $\mu \in (0,1/\theta)$

  \begin{equation} \label{eq:log_Sobolev}
  \ln \mathbb{E}_{\mathcal{C}_{\mathrm{in}}} \left[ e^{\mu(q'-\mathbb{E}_{\mathcal{C}_{\mathrm{in}}}[q'])} \right]  \leq \frac{\mu \theta}{1 -\mu \theta} \ln \mathbb{E}_{\mathcal{C}_{\mathrm{in}}} \left[ e^{\frac{\mu V'}{\theta} }\right].
  \end{equation} 
  
  By Bayes formula and Assumptions 1 and 2, the expectation in the RHS of (\ref{eq:log_Sobolev}) is bounded such that
  \begin{align}
  \mathbb{E}_{\mathcal{C}_{\mathrm{in}}} \left[ e^{\frac{\mu V'}{\theta}} \right] &\leq e^{ \frac{\mu a_T}{\theta}} \mathbb{P}_{\mathcal{C}_{\mathrm{in}}} (\mathcal{C}_{\mathrm{in}} \in \mathcal{T}) +  e^{ \frac{\mu a_G}{\theta}} \mathbb{P}_{\mathcal{C}_{\mathrm{in}}} (\mathcal{C}_{\mathrm{in}} \not\in \mathcal{T}) \nonumber \\
  & \leq e^{ \frac{\mu a_T}{\theta}}  +  e^{ \frac{\mu a_G}{\theta}} \mathbb{P}_{\mathcal{C}_{\mathrm{in}}} (\mathcal{C}_{\mathrm{in}} \not\in \mathcal{T}) \label{eq:new_ls}
  \end{align}
  We set $\theta$ such that the two terms in the sum of (\ref{eq:new_ls}) are equal, i.e.,
  set $\theta = \frac{(a_G-a_T)\mu}{-\ln \mathbb{P}_{\mathcal{C}_{\mathrm{in}}}(\mathcal{C}_{\mathrm{in}} \not\in \mathcal{T})}$, and note that $\theta>0$ following $a_G > a_T > 0$.} Given this choice of $\theta$, it follows from (\ref{eq:new_ls}) that $ \mathbb{E}_{\mathcal{C}_{\mathrm{in}}} \left[ \exp \{\frac{\mu V'}{\theta} \}\right] \leq \exp \{\mu a_T/\theta + \ln(2) \}$, and in turn,  following (\ref{eq:log_Sobolev}),
  \begin{equation} \nonumber
  \mathbb{E}_{\mathcal{C}_{\mathrm{in}}} \left[ e^{\mu(q'-\mathbb{E}_{\mathcal{C}_{\mathrm{in}}}[q'])} \right] \leq \exp \left\{ \frac{\mu^2 a_T}{1- \mu \theta} + \frac{\mu \theta \ln 2}{1- \mu \theta} \right\}.
  \end{equation}
  Applying Markov's inequality to the above inequality yeilds
  \begin{equation} \label{eq:markov}
  \mathbb{P}(q' - \mathbb{E}_{\mathcal{C}_{\mathrm{in}}}[q'] > \lambda) \leq \exp \left\{ \frac{\mu^2 a_T}{1- \mu \theta} + \frac{\mu \theta \ln 2}{1- \mu \theta} - \mu \lambda \right\}.
  \end{equation}

  To finish the proof, we choose some round numbers to simplify the RHS of (\ref{eq:markov}). \rv{We set $\mu = \lambda/(2 a_T)$, which is the value that minimizes the RHS of (\ref{eq:markov}) over the optimization variable $\mu \in (0,\infty)$ when $\theta$ is treated as a constant fixed at $0$. Note that this choice of $\mu$ satisfies for large enough $n$ the requirement of Lemma \ref{thm:mlSi} that $\mu \in (0,1/\theta)$ since the quantity $$\mu \theta=\frac{\lambda^2}{4 a_T^2} \left( \frac{a_G-a_T}{- \ln \mathbb{P}(\mathcal{C}_{\mathrm{in}} \not\in \mathcal{T})} \right) \rightarrow 0 \text{ as } n \rightarrow \infty$$
  where the limit follows from the inequalities $\lambda \in [0,1]$ and $a_T < a_G$, and Assumption 3 which states that $\frac{a_G}{-a_T^2 \ln \mathbb{P}_{\mathcal{C}_{\mathrm{in}}}(\mathcal{C}_{\mathrm{in}} \not\in \mathcal{T})}$ tends to $0$ as $n$ tends to $\infty$. Following a substitution of our choices of $\theta$ and $\mu$, the RHS of (\ref{eq:markov}) is equal to
  \begin{equation} \label{eq:markov_2}
  \Scale[1]{\exp \left\{ \frac{\lambda^2}{4a_T} \left( \frac{1}{1- o(1)} \right) + \frac{o(1)}{1-o(1)}  - \frac{\lambda^2}{2a_T} \right\}}
  \end{equation}
  which in turn, is bounded above by $\exp \left\{- \frac{\lambda^2}{8a_T} \right\}$ for large enough $n$ following the condition $\lambda > \sqrt{a_T}$. Together with (\ref{eq:markov}), this yields the desired inequality.} 
  \end{proof}

  The following concentration inequality for $q$ is bootstrapped from the above concentration inequality for $q'$ using Lemma \ref{thm:approx_ineq}.
  
  \begin{lemma} \label{thm:concentration_application}
  If $\epsilon_R$ satisfies Condition \ref{cond:small_R}, $|\mathcal{O}_{\bm{\psi}}| \geq 2^{(1-R)n} 2^{\delta_0 n}$ and $i \in [L]$, then for large enough $n$ (depending only on $\delta_0$, $L$, and $\epsilon_{\rho}$),
  \begin{equation} \nonumber
  \begin{aligned}
  & \mathbb{P}_{\mathcal{C}_{\mathrm{in}}} \left(q - \mathbb{E}_{\mathcal{C}_{\mathrm{in}}}[q] > 1/n \right) \leq 2\exp \left\{ - \frac{2^{\frac{\delta_0 n}{30}}}{8n^2} \right\}.
  \end{aligned}
  \end{equation}
  \end{lemma}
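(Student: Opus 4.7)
My plan is to combine the approximation inequality of Lemma \ref{thm:approx_ineq} (applied with $\lambda = 1/n$) with the log-Sobolev concentration inequality of Lemma \ref{thm:global_typical_conc} for the approximation function $q'$. This decomposes the task into two sub-problems: (i) bounding $\mathbb{P}_{\mathcal{C}_{\mathrm{in}}}(q \neq q')$ via atypicality of the inner codebook, and (ii) verifying the three Assumptions of Lemma \ref{thm:global_typical_conc} at $\lambda = 1/n$.

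For (i), Definition \ref{def:approx_q} already gives $q' = q$ whenever $\mathcal{C}_{\mathrm{in}} \in \mathcal{T}$, so $\mathbb{P}_{\mathcal{C}_{\mathrm{in}}}(q \neq q') \leq \mathbb{P}_{\mathcal{C}_{\mathrm{in}}}(\mathcal{C}_{\mathrm{in}} \notin \mathcal{T})$. The typical set (Definition \ref{def:typ_param}) is defined by two requirements: $[\ell,p]$-list-decodability with $\ell = 2^{4\delta_0' n/13} = \omega(n)$, and $|\mathcal{O}_{\bm{\psi}} \cap \mathcal{C}_n| \in [t_L, t_U]$ around mean $\mu = 2^{\delta_0' n}$ with slack $2^{3\delta_0' n/4}$. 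Lemma \ref{thm:LD_lb} handles the first event and Lemma \ref{thm:size_AandC} (with $\mathcal{A} = \mathcal{O}_{\bm{\psi}}$) handles the second; a union bound yields $\mathbb{P}_{\mathcal{C}_{\mathrm{in}}}(\mathcal{C}_{\mathrm{in}} \notin \mathcal{T}) \leq \exp\{-2^{\Omega(n)}\}$, which is double-exponentially small.

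For (ii), Lemma \ref{thm:global_coeff} supplies the global coefficient $a_G = 5i + 14 \leq 5L + 14 = O(1)$ (Assumption 1). Assumption 2 is obtained by substituting the values of $\ell, t_L, t_U$ and $K_T = O(\ell/t_L)$ into Lemma \ref{eq:typical_coeff}: a direct computation gives $K_T^2 \asymp 2^{-18\delta_0' n/13}$, so $t_U(\ell+1) K_T^2 \asymp 2^{-\delta_0' n/13}$ and $2^{(\delta_0' + \epsilon_R) n} K_T^2 \asymp 2^{(\epsilon_R - 5\delta_0'/13)n}$. Condition \ref{cond:small_R} ($\epsilon_R < (5/13 - 1/30)\delta_0$) together with $\delta_0' \geq \delta_0$ forces both exponents to be at most $-\delta_0 n/30$, giving $a_T \leq C_0 \cdot 2^{-\delta_0 n/30}$ for an absolute constant $C_0$. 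Assumption 3 then holds for free, since $a_G/a_T^2$ is only single-exponential in $n$ while $-\ln \mathbb{P}_{\mathcal{C}_{\mathrm{in}}}(\mathcal{C}_{\mathrm{in}} \notin \mathcal{T})$ is double-exponential; and $\lambda = 1/n > \sqrt{a_T}$ for large $n$ because $a_T$ decays exponentially.

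Invoking Lemma \ref{thm:global_typical_conc} yields $\mathbb{P}_{\mathcal{C}_{\mathrm{in}}}(q' - \mathbb{E}_{\mathcal{C}_{\mathrm{in}}}[q'] > 1/n) \leq \exp\{-2^{\delta_0 n/30}/(8 C_0 n^2)\}$, which for $n$ sufficiently large is at most $\exp\{-2^{\delta_0 n/30}/(8 n^2)\}$; adding the double-exponentially smaller atypicality bound from (i) is absorbed into the leading factor of $2$ in the claimed inequality. The main obstacle I anticipate is the exponent bookkeeping in the $a_T$ estimate: the constants $4/13$, $3/4$, $18/13$, and $1/30$ appearing throughout the setup are tuned precisely so that both competing terms in $V'$ decay at rate at least $2^{-\delta_0 n/30}$ under Condition \ref{cond:small_R}, and one must maintain this uniformly over $\mathcal{O}_{\bm{\psi}}$ via $\delta_0' \geq \delta_0$, which is the crux of the calculation.
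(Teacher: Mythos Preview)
Your proposal follows essentially the same route as the paper: invoke Lemma~\ref{thm:approx_ineq} with $\lambda=1/n$, bound $\mathbb{P}_{\mathcal{C}_{\mathrm{in}}}(\mathcal{C}_{\mathrm{in}}\notin\mathcal{T})$ via Lemmas~\ref{thm:LD_lb} and~\ref{thm:size_AandC}, supply $a_G$ from Lemma~\ref{thm:global_coeff}, compute $a_T$ by plugging the typical parameters into Lemma~\ref{eq:typical_coeff} and Condition~\ref{cond:small_R}, and finish with Lemma~\ref{thm:global_typical_conc}. The exponent arithmetic you sketch ($K_T^2\asymp 2^{-18\delta_0'n/13}$, the two competing terms $2^{-\delta_0'n/13}$ and $2^{(\epsilon_R-5\delta_0'/13)n}$, and the use of $\delta_0'\ge\delta_0$) matches the paper's computation.

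There is one small but genuine slip in your final step. From $a_T \le C_0\cdot 2^{-\delta_0 n/30}$ you obtain $\exp\{-2^{\delta_0 n/30}/(8C_0 n^2)\}$ and then assert that this is at most $\exp\{-2^{\delta_0 n/30}/(8n^2)\}$ for large $n$; if $C_0>1$ this inequality is false for \emph{every} $n$, so the constant cannot simply be absorbed. The fix is the one the paper uses: because Condition~\ref{cond:small_R} is a \emph{strict} inequality, both exponents you computed are strictly below $-\delta_0/30$, so for $n$ large enough the right-hand side of Lemma~\ref{eq:typical_coeff} is at most $2^{-\delta_0 n/30}$ with no extra constant, and you may set $a_T=2^{-\delta_0 n/30}$ exactly (the paper equivalently sets $a_T=2^{-\delta_0' n/30}$ and then uses $\delta_0'\ge\delta_0$ at the very end). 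With that adjustment your argument is complete and coincides with the paper's.
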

  
  \begin{proof}[Proof of Lemma \ref{thm:concentration_application}]
  The proof is a straightforward application of Lemma \ref{thm:global_coeff} through Lemma \ref{thm:global_typical_conc}, but requires a little accounting to ensure that the chosen parameters check out. Recall that $\delta_0' \geq \delta_0$ is the unique constant such that $|\mathcal{O}_{\bm{\psi}}| = 2^{(1-R)n} 2^{\delta_0' n}$.
  
  We first bound the probability that $\mathcal{C}_{\mathrm{in}}$ is not typical. Note that $\mathbb{E}_{\mathcal{C}_{\mathrm{in}}}|\mathcal{O}_{\bm{\psi}}\cap\mathcal{C}_n| = 2^{-(1-R)n} |\mathcal{O}_{\bm{\psi}}| = 2^{\delta_0' n}$. Recall from Definition \ref{def:typ_param} that $\ell = 2^{\frac{4}{13}\delta_0' n}$, $t_U = 2^{\delta_0' n} + 2^{\frac{3}{4}\delta_0'n}$ and $t_L = 2^{\delta_0' n} - 2^{\frac{3}{4}\delta_0' n}$. It follows from Lemma \ref{thm:LD_lb} and Lemma \ref{thm:size_AandC} that $\mathbb{P}(q' \neq q) \leq \mathbb{P}(\mathcal{C}_{\mathrm{in}} \not\in \mathcal{T}) \leq 2^{-2^{\frac{\delta_0' n}{13}}}$ and $\mathbb{P}(|\mathcal{O}_{\bm{\psi}}\cap\mathcal{C}_n|<t_L) \leq 2 \exp\{\frac{-2^{\delta_0' n /2}}{4} \}$ for large enough $n$ (depending only on $\delta_0$).
  
  \rv{Define $a_{T}^{LB}$ as the RHS of (\ref{eq:typical_coeff_def}).} Next, we bound $K_T$ (defined by equation (\ref{eq:Lipshitz_coeff_def}) and $a_T^{LB}$. Recall that $K_T$ is equal to $\frac{2 \ell +1 }{t_L-1}$. From substitution of the typical parameters, $K_T$ is equal to $2^{-\frac{9}{13} \delta_0' n}+2$ for large enough $n$ (depending only on $\delta_0$). Similarly, $a_T^{LB}$ is equal to $2^{-\delta_0'n/13+6} + 2^{-\frac{5}{13} \delta_0'n +\epsilon_R n+4}$ for large enough $n$ (depending only on $\delta_0$ and $\epsilon_R$). Hence, for large enough $n$ (depending only on $\delta_0$ and $\epsilon_R$) we can choose any $a_T$ such that
  \begin{equation} \label{eq:aT_bound_w_parameters}
  a_T \geq a_T^{LB} = 2^{-\delta_0'n/13+6} + 2^{-\frac{5}{13} \delta_0'n +\epsilon_R n+4}
  \end{equation}
  \rv{and have that $V(\mathcal{C}_{\mathrm{in}}) \leq a_T$ for all $\mathcal{C}_{\mathrm{in}} \in \mathcal{T}$.}
  
  Finally, we are ready to apply Lemma \ref{thm:global_typical_conc}. We first check that Assumption 3 of Lemma \ref{thm:global_typical_conc} holds. We have that 
  \begin{equation} \label{eq:ass3_bound}
  \frac{a_G}{a_T^2 (-\ln \mathbb{P}(\mathcal{C}_{\mathrm{in}} \in \mathcal{T}))} < \frac{a_G}{a_T^2 2^{\frac{\delta_0' n}{13}}}
  \end{equation}
  for large enough $n$ (depending only on $\delta_0$ and $\epsilon_{\rho}$). Set $a_T = 2^{-\frac{\delta_0' n}{30}}$; this choice of $a_T$ is possible under Condition \ref{cond:small_R} and satisfies equation (\ref{eq:aT_bound_w_parameters}). Following Lemma \ref{thm:global_coeff}, $a_G$ is bounded above by $5 L + 14$ for large enough $n$ (depending only on $\delta_0$, $\epsilon_{\rho}$, and $L$). In turn, using $\delta_0' \geq \delta_0$, the RHS of quantity (\ref{eq:ass3_bound}) is bounded above by  
  \begin{equation} \nonumber
  \frac{5L+14}{2^{\frac{2 \delta_0 n}{195}}}
  \end{equation}
  for large enough $n$ (depending on $\delta_0$, $\epsilon_{\rho}$ and $L$), and therefore, is $o(1)$ and Assumption 3 holds. 
  
  To complete the proof, we apply Lemma \ref{thm:approx_ineq} to bound the quantity $\mathbb{P}(q - \mathbb{E}_{\mathcal{C}_{\mathrm{in}}}[q] > 1/n)$ above by $\mathbb{P}(q' - \mathbb{E}_{\mathcal{C}_{\mathrm{in}}}[q'] > 1/n) + \mathbb{P}(q' \neq q)$. Lemma \ref{thm:concentration_application} follows after applying Lemma \ref{thm:global_typical_conc} to bound $\mathbb{P}(q' - \mathbb{E}_{\mathcal{C}_{\mathrm{in}}}[q'] > 1/n)$ above by $\exp\{ -2^{\delta_0 n/30} / (8n^2)\}$ for large enough $n$ (depending only on $\delta_0$, $\epsilon_{\rho}$ and $L$), and after observing that $\mathbb{P}(q' \neq q)$ is bounded above by $\mathbb{P}(|\mathcal{O}_{\bm{\psi}} \cap \mathcal{C}_n| < t_L) \leq \exp\{-2^{\delta_0 n / 30} / (8n^2)\}$.
  \end{proof}
  
  \subsection{Proof of Theorem \ref{thm:main_result}} \label{sec:subsec_pf_main_result}

  We are now ready to prove Theorem \ref{thm:main_result}. Let $L > 1/\epsilon_{\rho}$ be an integer. Our strategy is to apply the sufficient condition (Lemma \ref{thm:overview_goal}) for the rate $R$ to be $(c,s)$-achievable. Recall that for one to apply Lemma \ref{thm:overview_goal}, one must show that for any $\epsilon_e>0$ and for large enough $n$, $\mathbb{P}_{\Cin}(\Cin \notin \mathcal{H}(L,\epsilon_e)) < 1 - 1/n$. To show this, we construct a set $\mathcal{G}$ of \textit{good} $(n, \rho n)$ codebooks such that $\mathcal{G}$ is contained in the set $\mathcal{H}(L,\epsilon_e)$ and $\lim _{n \rightarrow \infty}\mathbb{P}_{\Cin}(\Cin \notin \mathcal{G}) = 0$, and conclude that for large enough $n$, $\mathbb{P}_{\Cin}(\Cin \notin \mathcal{H}(L,\epsilon_e)) \leq \mathbb{P}_{\Cin}(\Cin \not\in \mathcal{G}) < 1-1/n$.
  
  For integer $n \geq 1$, we define the set of good $(n, \rho n)$ codebooks using the following sets: Let $\mathcal{E}_1$ be the set of $(n, \rho n)$ codebooks $\mathcal{C}_{\mathrm{in}}$ where there exists some $(i,\vec{\mathcal{O}},\bm{\psi},\bm{e})\in \mathcal{P}(L): |\mathcal{O}_{\bm{\psi}}|\geq 2^{(1-R)n} 2^{\delta_0 n}$ such that $q_i(\vec{\mathcal{O}},\bm{\psi},\bm{e},\Cin)$ is \textit{not} concentrated, i.e., $q_i(\vec{\mathcal{O}},\bm{\psi},\bm{e},\mathcal{C}_{\mathrm{in}}) > \mathbb{E}_{\mathcal{C}_{\mathrm{in}}}[q_i(\vec{\mathcal{O}},\bm{\psi},\bm{e},\mathcal{C}_{\mathrm{in}})] + 1/n$. Let $\mathcal{E}_2$ be the set of $(n, \rho n)$ codebooks $\mathcal{C}_{\mathrm{in}}$ where some small observation set is \textit{not} typical, i.e.,  there exists some $(i,\vec{\mathcal{O}},\bm{\psi},\bm{e})\in \mathcal{P}(L): |\mathcal{O}_{\bm{\psi}}| < 2^{(1-R)n} 2^{\delta_0 n}$ such that $|\mathcal{O}_{\bm{\psi}} \cap \mathcal{C}_n| > 2^{(\delta_0 + \delta_1) n}$. Finally, let $\mathcal{G} = (\mathcal{E}_1 \cup \mathcal{E}_2)^c$ denote the set of good $(n, \rho n)$ codebooks. We say that an $(n, \rho n)$ codebook $\mathcal{C}_{\mathrm{in}}$ is not good if $\mathcal{C}_{\mathrm{in}}$ is not in $\mathcal{G}$. To see that $\mathcal{G} \subseteq \mathcal{H}(L,\epsilon_e)$ for large enough $n$, we first observe that by Lemma \ref{thm:expected_t_new} and for large enough $n$,
  \begin{equation} \label{eq:thm1_exp_q}
  \max_{(i,\vec{\mathcal{O}},\bm{\psi},\bm{e}) \in \mathcal{P}(L)} \mathbb{E}_{\Cin}[q_i(\vec{\mathcal{O}},\bm{\psi},\bm{e},\Cin)] +1/n \leq \frac{\epsilon_e}{2L}.
  \end{equation}
  Then for large enough $n$ such that (\ref{eq:thm1_exp_q}) holds, $\Cin \in \mathcal{G}$ implies that for all $ (i,\vec{O},\bm{\psi},\bm{e}) \in \mathcal{P}(L)$,
  \begin{equation} \label{eq:thm1_GcontH1}
  q_i(\vec{\mathcal{O}},\bm{\psi},\bm{e},\Cin) \leq \frac{\epsilon_e}{2L}, \text{ if } |\mathcal{O}_{\bm{\psi}}| \leq 2^{(1-R)n}2^{\delta_0 n}
  \end{equation} 
  and 
  \begin{equation} \label{eq:thm1_GcontH2}
  |\mathcal{O}_{\bm{\psi}} \cap \mathcal{C}_n|\leq 2^{(\delta_0+\delta_1)n}, \text{ if } |\mathcal{O}_{\bm{\psi}}| > 2^{(1-R)n}2^{\delta_0 n}.
  \end{equation}
  Following (\ref{eq:thm1_GcontH2}), $\Cin \in \mathcal{G}$ implies that for all $(i,\vec{\mathcal{O},\bm{\psi},\bm{e}}) \in \mathcal{P}(L)$,
  \begin{align}
  \mathbb{P}_{m_0}(\bm{\Psi}(m_0) = \bm{\psi}) &= \mathbb{P}_{m_0}(\mathcal{C}_n(m_0) \in \mathcal{O}_{\bm{\psi}}) \nonumber \\
  & = |\mathcal{O}_{\psi} \cap \mathcal{C}_n| 2^{-Rn} \nonumber \\
  &\leq 2^{(\delta_0+\delta_1-R)n}, \text{ if } |\mathcal{O}_{\bm{\psi}}| > 2^{(1-R)n}2^{\delta_0 n}. \label{eq:thm1_GcontH3}
  \end{align}
  Using the definition of set $\mathcal{H}(L,\epsilon_e)$, it is easy to verify that for any $(n, \rho n)$ codebook $\Cin$ such that both (\ref{eq:thm1_GcontH1}) and (\ref{eq:thm1_GcontH3}) hold, we have that $\Cin \in \mathcal{H}(L,\epsilon_e)$.

  We now bound the probability that $\mathcal{C}_{\mathrm{in}}$ is not good by bounding $\mathbb{P}(\mathcal{E}_1)$ and $\mathbb{P}(\mathcal{E}_2)$. The adversary's computational bound will help us to bound both $\mathbb{P}(\mathcal{E}_1)$ and $\mathbb{P}(\mathcal{E}_2)$. Let $S$ denote the number of unique observation sets in $\CPX$, i.e., $S = |\rv{\{}\mathcal{O} \subseteq \{0,1\}^n: \vec{\mathcal{O}} \in \CPX, \mathcal{O} = \mathcal{O}_{\bm{\psi}} \text{ for some } \bm{\psi} \in \{0,1\}^{rn} \rv{\}}|$. We can bound $S$ by counting the number of Boolean circuits with $cn^s$ logic gates. 
  \begin{lemma} \label{thm:circuit_cmplx}
   For large enough $n$ (depending only on $c$ and $s$), the number of functions in $\textset{CKT}(r,cn^s)$ is bounded above by $2^{n^{s+2}}$, and thus $S = S(r,cn^s) \leq 2^{n^{s+3}}$. Proof is in Appendix \ref{sec:proof_circuit_cmplx}.
  \end{lemma}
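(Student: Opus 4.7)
The plan is a direct circuit-enumeration argument: I will upper-bound the number of distinct Boolean circuits of size at most $cn^s$, and since each $f_n \in \CPX$ is realized by at least one such circuit, this immediately yields the desired bound on $|\CPX|$. To count circuits of size exactly $k \leq cn^s$, I label the $k$ nodes $1, \ldots, k$, fix nodes $1, \ldots, n$ as the input nodes, and for each of the remaining at most $k$ gate nodes I specify (i) an ordered pair of predecessor nodes drawn from the $k$ available nodes ($k^2$ choices) and (ii) which of the $16$ Boolean functions on two input bits the gate computes. I also specify, for each of the $rn$ output bits, which of the $k$ nodes produces it, contributing an additional factor of at most $k^{rn}$.

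Summing the resulting product bound over $k \leq cn^s$ will give
\begin{equation}
|\CPX| \;\leq\; cn^s \cdot \bigl(16\,c^2\,n^{2s}\bigr)^{cn^s} \cdot (cn^s)^{rn} \;=\; 2^{O(n^s \log n)},
\end{equation}
where I used $r \leq 1$ in the third factor. For $n$ large enough (depending only on $c$ and $s$), this is at most $2^{n^{s+2}}$, since $n^s \log n \leq n^{s+1} \leq n^{s+2}$ for $n \geq 2$ and the implied constants in the $O(\cdot)$ are absorbed by the extra factor of $n$ in the exponent. This establishes the first claim.

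For the second claim, I will use that each $f_n \in \CPX$ partitions $\{0,1\}^n$ into at most $2^{rn}$ nonempty observation sets $\mathcal{O}_{\bm{\psi}}$, one per $\bm{\psi} \in \{0,1\}^{rn}$. Thus the total number of distinct observation sets contributed across all $f_n \in \CPX$ is at most $|\CPX|\cdot 2^{rn} \leq 2^{n^{s+2}} \cdot 2^n \leq 2^{n^{s+3}}$ for $n$ sufficiently large, using $r \leq 1$ and $n^{s+2} + n \leq n^{s+3}$ for $n \geq 2$.

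There is no substantive obstacle here: this is a routine combinatorial count whose entire value derives from the polynomial circuit-size bound $cn^s$, which forces every ingredient in the enumeration (predecessor choices per gate, gate functionality, output-bit assignments, and the total number of gates itself) to be polynomial in $n$. Absent such a bound, the trivial count would be doubly exponential in $n$ and the argument would be vacuous; the polynomial bound is precisely what turns this into a $2^{\mathrm{poly}(n)}$ count, which is exactly the regime needed for the union bound of equation (\ref{eq:plzbtl}) in the proof overview to succeed.
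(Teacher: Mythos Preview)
Your proposal is correct and takes essentially the same circuit-enumeration approach as the paper: both arguments count the number of labeled circuits by bounding, per gate, the choices of predecessor nodes and the 16 possible two-bit Boolean functions, arriving at a $2^{O(n^s\log n)}$ bound and hence $2^{n^{s+2}}$ for large $n$. The only cosmetic difference is that the paper first bounds the number $W$ of single-output size-$cn^s$ circuits and then uses $|\CPX|\leq W^{rn}$, whereas you count multi-output circuits directly via the extra $k^{rn}$ factor for output-node assignments; both routes are standard and lead to the same conclusion.
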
 
  For large enough $n$ (depending only on $\delta_0$, $L$, $\epsilon_{\rho}$, $c$ and $s$),
  \begin{align}
  &\mathbb{P}(\mathcal{E}_1) = \mathbb{P}_{\Cin} \left( \bigcup_{\substack{(i,\vec{O},\bm{\psi},\bm{e}) \in \mathcal{P}(L): \\ |\mathcal{O}_{\bm{\psi}}| \geq 2^{(1-R+\delta_0)n}}} \left\{ q > \mathbb{E}_{\Cin}[q] + 1/n \right\} \right) \nonumber \\
  & \leq \sum_{\substack{(i,\vec{O},\bm{\psi},\bm{e}) \in \mathcal{P}(L): \\ |\mathcal{O}_{\bm{\psi}}| \geq 2^{(1-R+\delta_0)n}}} \mathbb{P}_{\Cin} (q > \mathbb{E}_{\Cin}[q] + 1/n) \label{eq:E1_union} \\
  & \leq S2^nL2\exp \left\{-\frac{2^{\delta_0 n/30}}{8n^2} \right\} \label{eq:E1_ub}
  \end{align}
  where (\ref{eq:E1_union}) follows from a simple union bound and (\ref{eq:E1_ub}) follows from Lemma \ref{thm:concentration_application}. Furthermore, for large enough $n$ (depending only on $\delta_0$, $\delta_1$, $c$ and $s$),
  \begin{equation} \nonumber
  \mathbb{P}(\mathcal{E}_2) \leq S 2 \exp \left\{ 2^{- \delta_0 n} \right\}
  \end{equation}
  which follows from a simple union bound and Lemma \ref{thm:size_AandC}. In turn, given the bound on $S$ established in Lemma \ref{thm:circuit_cmplx}, it is clear that both $\mathbb{P}(\mathcal{E}_1)$ and $\mathbb{P}(\mathcal{E}_2)$ are going to $0$ in the limit as $n\rightarrow \infty$. Hence, for $\epsilon_e>0$ and for large enough $n$, $\mathbb{P}_{\mathcal{C}_{\mathrm{in}}}(\mathcal{C}_{\mathrm{in}} \not\in \mathcal{H}(L,\epsilon_e)) \leq  \mathbb{P}_{\mathcal{C}_{\mathrm{in}}}(\mathcal{C}_{\mathrm{in}} \not\in \mathcal{G}) \leq \mathbb{P}(\mathcal{E}_1)+\mathbb{P}(\mathcal{E}_2) < 1 - 1/n$. This completes the proof of Theorem \ref{thm:main_result}.

  \section{Conclusion} \label{sec:conc}
  In this work, we define and study a binary channel controlled by a $\CPX$-oblivious adversary (an adversary that can observe a fraction $r$ of all bits in the transmitted codeword via some function $f \in \CPX$ of bounded complexity and flip a fraction $p$ of all bits). The capacity $C(p,r,\rv{c,s})$ of this channel is characterized for the parameter range $r < 1- H(p)$ (i.e., a sufficiently myopic adversary) under deterministic codes and average error criterion. \rvv{We give a proof of this result which is based on a new application logarithmic Sobolev inequalities.

  An alternative proof of the above result can be stated using the proof techniques for sufficiently myopic channels developed by Dey, Jaggi and Langberg \cite{Dey2019a}. The advantage of the alternative proof is that it uses a simpler random coding scheme, involves a simpler analysis, and can provide more general results than the proof of Section \ref{sec:analysis}. An outline of this alternative proof is provided in Appendix \ref{sec:alt_proof}.

  Lastly, we remark that a $\CPX$-oblivious adversary can be strictly less powerful than a $\textset{CKT}(r,\infty)$-oblivious adversary (i.e., an adversary with no complexity constraint), in the sense that $C(p,r,\infty,\infty)$ is strictly less than $C(p,r,c,s)$ for some values of $p \in (0,1/2)$ and $r < 1 - H(p)$. A proof sketch is as follows. If no complexity constraint is imposed, then the adversary can choose a function $f$ (dependent on the codebook $\mathcal{C}_n$) that does the following:
  \begin{enumerate}
  \item Take the transmitted codeword $\bm{x}$ as input. Compute the nearest codeword $\bm{x}'$ to $\bm{x}$ and, in turn, compute an ``auxiliary'' error vector $\bm{s} \in \{0,1\}^n$ such that $\bm{x} \oplus \bm{s}$ is equal Hamming distance to both $\bm{x}$ and $\bm{x}'$. 
  \item Let $w(\bm{s})$ denote the Hamming weight of $\bm{s}$. If $w(\bm{s})$ is small enough such that the total number of length-$n$ binary vectors of weight $w(\bm{s})$ or less (call this number $A_{w(\bm{s})}$) is at most $2^{rn}$, then ``compress'' $\bm{s}$ into an $rn$ bit vector and output this compressed vector. Otherwise, output an error.
  \end{enumerate}
  Let $\mathrm{LP}(\delta)$ denote the linear programming bound for binary codes with minimum distance $\delta n$, and let $\mathrm{LP}^{-1}$ denote its inverse. By the linear programming bound (i.e., MRRW bound) \cite{McEliece1977a}, $\bm{x}$ and $\bm{x}'$ are (with positive probability) within $n  \mathrm{LP}^{-1}(R)$ bits where $R$ is the rate, and thus, $w(\bm{s})$ is no more than about $\frac{n}{2}  \mathrm{LP}^{-1}(R)$ and $A_{w(\bm{s})}$ is no more than about $2^{nH(\frac{1}{2} \mathrm{LP}^{-1}(R))}$. Hence, if $r>H(\frac{1}{2} \mathrm{LP}^{-1}(R))$ and $p>\frac{1}{2} \mathrm{LP}^{-1}(R)$, with positive probability, the adversary can reconstruct $\bm{s}$ from the output of $f$ and, in turn, choose the true error vector $\bm{e} = \bm{s}$ to confuse Alice as to whether $\bm{x}$ or $\bm{x}'$ was transmitted. As can be verified numerically, the above adversarial strategy can be used to upper bound $C(p,r,\infty,\infty)$ and show that $C(p,r,\infty,\infty) < C(p,r,c,s)$ for a certain range of $r < 1 - H(p)$.
  }

  \appendices

  \rvv{\section{Alternative Proof of Theorem \ref{thm:main_result}} \label{sec:alt_proof}

  In this appendix, we present an outline of an alternative proof of Theorem \ref{thm:main_result}. The proof closely follows the achievability proof of Dey, Jaggi and Langberg \cite[Theorem III.1]{Dey2019a} for sufficiently myopic channels. To follow this alternative proof, we point the reader to this reference, and structure our outline to emphasize the difference between the alternative proof and the proof of \cite[Theorem III.1]{Dey2019a}. For encoding, we replace our concatenated code construction with a simple random code where the codewords of code $\mathcal{C}_{n}$ are i.i.d. uniform in $\{0,1\}^n$. For decoding, use the Hamming ball decoder as in \cite{Dey2019a}.
  \begin{itemize}
  \item Modify \cite[Lemma IV.2]{Dey2019a} such that for any function $f:\{0,1\}^{n}\rightarrow \{0,1\}^{rn}$ (not necessarily with polynomial circuit complexity) and any $rn$-bit observation vector $\bm{\psi}$, the probability (over random code design) that there are fewer than about $|\mathcal{O}_{\bm{\psi}}|2^{(R-1)n}$ codewords compatible with the output of the function $f$ is $\exp\{-2^{\Omega(n)}\}$.
  \item In \cite[Lemma IV.3]{Dey2019a}, instead of analyzing the event that the number of codewords in a set $\mathcal{V} \subseteq \{0,1\}^n$ exceeds $\Theta(n^2)$, show that this number exceeds $\Theta(n^{s+4})$ with probability $2^{-\Omega(n^{s+4})}$ over random code design.
  \item In \cite[Corollary IV.4]{Dey2019a}, the $n^2$ is replaced by $n^{s+4}$, and in \cite[Lemma IV.5]{Dey2019a} the $n^4$ is replaced by $n^{2(s+4)}$.
  \item The subsequent arguments in \cite[Lemma IV.6]{Dey2019a} are similarly modified, with $n^2$ being replaced by $n^{s+4}$.
  \item In each of these Lemmas, instead of union bounding over all error vectors (numbering $2^{O(n)}$), one union bounds
over all error vectors and circuits in $\CPX$ (numbering $O(n^{s+3})$ following Lemma \ref{thm:circuit_cmplx}).
  \item In the analysis, allow decoding to fail over small observation sets as described in Section \ref{sec:approx_q} of this paper. In the event that Alice's transmitted codeword belongs to a small set, this means that adversary has high certainty of Alice's codeword/message upon observing $\bm{\psi}$, and thus, may be able design $\bm{e}$ well-tailored for this codeword/message and induce a decoding error. However, we can ignore this event in the analysis, since such an event is unlikely and thus makes a negligible contribution to the probability of error.
  \end{itemize}}

  \section{A Talagrand-type Concentration Inequality}
  Let $g(\cdot)$ be a function mapping the set of $(n, \rho n)$ codebooks to $(-\infty,\infty)$. For $b > 0$, $g$ is said to be \textit{$b$-Lipshitz} if for any $(n, \rho n)$ codebooks $\mathcal{C}_{\mathrm{in}}$ and $\mathcal{C}_{\mathrm{in}}'$ differing by at most 1 codeword, then $|g(\mathcal{C}_{\mathrm{in}}) - g(\mathcal{C}_{\mathrm{in}}')|\leq b$. An index set $J(\cdot) \subseteq [2^{\rho n}]$ is said to be a \textit{certificate} of $g$ if for any $(n, \rho n)$ codebook $\mathcal{C}_{\mathrm{in}}$, $g(\mathcal{C}_{\mathrm{in}}) \geq |J(\mathcal{C}_{\mathrm{in}})|$ and $g(\mathcal{C}_{\mathrm{in}}') \geq g(\mathcal{C}_{\mathrm{in}})$ for any $\mathcal{C}_{\mathrm{in}}'$ that agrees with $\mathcal{C}_{\mathrm{in}}$ on the codewords indexed in $J(\mathcal{C}_{\mathrm{in}})$. Lastly, for $c>0$, $g$ is said to be \text{$c$-certifiable} if there exists a certificate $J$ of $g$ such that $|J(\mathcal{C}_{\mathrm{in}})| \leq c g(\mathcal{C}_{\mathrm{in}})$ for all $(n, \rho n)$ codebooks $\mathcal{C}_{\mathrm{in}}$.
  
  \begin{lemma}[{\cite[Theorem~11.3]{Dubhashi2009ConcentrationAlgorithms}}] \label{thm:Talagrand} Let $\mathbb{M}[g]$ denote a median of $g$. For any $t > 0$, 
  \begin{equation} \nonumber
  \mathbb{P}_{\mathcal{C}_{\mathrm{in}}}(g - \mathbb{M}[g] > t) \leq 2 \exp \left\{ \frac{-t^2}{4b^2 c (\mathbb{M}[g]+t)} \right\}
  \end{equation}
  and
  \begin{equation} \nonumber
  \mathbb{P}_{\mathcal{C}_{\mathrm{in}}}(g - \mathbb{M}[g] < -t) \leq 2 \exp \left\{ \frac{-t^2}{4b^2 c \mathbb{M}[g]} \right\}.
  \end{equation}
  
  \end{lemma}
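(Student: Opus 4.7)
The plan is to derive this as a corollary of Talagrand's convex distance inequality. Recall that for any subset $A$ of a product probability space, the convex distance $d_T(X,A)$ is defined via the minimax $d_T(X,A)=\sup_{\alpha:|\alpha|_2=1}\inf_{Y\in A}\sum_{j:X_j\neq Y_j}\alpha_j$, and Talagrand's inequality states $\mathbb{P}(A)\cdot\mathbb{E}[\exp(d_T(X,A)^2/4)]\leq 1$. I would apply this on the product space $(\{0,1\}^n)^{2^{\rho n}}$ with the uniform product measure, which coincides with the distribution of $\mathcal{C}_{\mathrm{in}}$ under $Q(n,\rho n)$.

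For the upper tail, I would take $A=\{\mathcal{C}_{\mathrm{in}}:g(\mathcal{C}_{\mathrm{in}})\leq \mathbb{M}[g]\}$, so $\mathbb{P}(A)\geq 1/2$. For any $\mathcal{C}_{\mathrm{in}}$ with $g(\mathcal{C}_{\mathrm{in}})>\mathbb{M}[g]+t$, I would use the certificate $J(\mathcal{C}_{\mathrm{in}})$ of $g$ combined with the $b$-Lipschitz property as follows: for any $\mathcal{C}_{\mathrm{in}}'\in A$, define a spliced codebook $\mathcal{C}_{\mathrm{in}}''$ that agrees with $\mathcal{C}_{\mathrm{in}}$ on the coordinates in $J(\mathcal{C}_{\mathrm{in}})$ and with $\mathcal{C}_{\mathrm{in}}'$ elsewhere. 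By the certificate property $g(\mathcal{C}_{\mathrm{in}}'')\geq g(\mathcal{C}_{\mathrm{in}})$, and by $b$-Lipschitzness $|g(\mathcal{C}_{\mathrm{in}}'')-g(\mathcal{C}_{\mathrm{in}}')|\leq b\cdot N$ where $N$ is the number of coordinates in $J(\mathcal{C}_{\mathrm{in}})$ on which $\mathcal{C}_{\mathrm{in}}$ and $\mathcal{C}_{\mathrm{in}}'$ disagree. Combining the two yields $N\geq (g(\mathcal{C}_{\mathrm{in}})-\mathbb{M}[g])/b>t/b$. Choosing $\alpha$ to be the normalized indicator of $J(\mathcal{C}_{\mathrm{in}})$ (so $|\alpha|_2=1$ and $|J(\mathcal{C}_{\mathrm{in}})|\leq c\,g(\mathcal{C}_{\mathrm{in}})\leq c(\mathbb{M}[g]+t)$ fails only for $Y$'s already far), I get
\begin{equation}\nonumber
d_T(\mathcal{C}_{\mathrm{in}},A)\geq \frac{t/b}{\sqrt{c\,g(\mathcal{C}_{\mathrm{in}})}}
\end{equation}
whenever $g(\mathcal{C}_{\mathrm{in}})>\mathbb{M}[g]+t$. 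Plugging this lower bound into Talagrand's inequality and applying Markov gives the first stated bound, with the denominator $4b^2c(\mathbb{M}[g]+t)$ coming from the worst-case $g(\mathcal{C}_{\mathrm{in}})$ on the event in question.

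For the lower tail, I would repeat the argument with $A=\{g\geq \mathbb{M}[g]\}$ (still of measure at least $1/2$) and target the event $g(\mathcal{C}_{\mathrm{in}})<\mathbb{M}[g]-t$. The asymmetry is that the certificate is one-sided (it certifies $g$ from below under agreement), so for each $Y\in A$ I must now invoke the certificate of $Y$ rather than of $\mathcal{C}_{\mathrm{in}}$: the spliced codebook agreeing with $Y$ on $J(Y)$ and with $\mathcal{C}_{\mathrm{in}}$ off $J(Y)$ forces at least $t/b$ disagreements inside $J(Y)$, with $|J(Y)|\leq c\,g(Y)\leq c(\mathbb{M}[g])$ after one further restriction of $A$ to $\{g\in[\mathbb{M}[g],\mathbb{M}[g]+\tau]\}$ and sending $\tau\downarrow 0$ (or, equivalently, invoking the convex-hull characterization $d_T(X,A)=\inf_{\nu\in\mathrm{conv}(V_A(X))}|\nu|_2$ where the witnessing $\alpha$ is allowed to depend on $Y$). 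This $Y$-dependent certificate leaves the denominator as $4b^2c\,\mathbb{M}[g]$ instead of $4b^2c(\mathbb{M}[g]+t)$, matching the stated lower-tail bound.

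The main obstacle is the lower-tail direction, specifically making rigorous the use of $Y$-dependent certificates within Talagrand's sup-over-$\alpha$ formulation; the clean fix is to work with the convex-hull dual representation of $d_T$, which lets each extreme point of $\mathrm{conv}(V_A(X))$ be tested against its own certificate-induced weighting. All remaining steps—optimizing the Markov exponent and collecting constants—are routine.
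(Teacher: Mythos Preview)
The paper does not prove this lemma: it is quoted verbatim from Dubhashi--Panconesi and used as a black box. Your proposal to derive it from Talagrand's convex distance inequality is exactly the standard route taken in that reference (and in Molloy--Reed), so in that sense you are reproducing the cited proof.

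Your upper-tail argument is correct in outline. One point deserves to be made explicit: on the event $\{g(\mathcal{C}_{\mathrm{in}})=\mathbb{M}[g]+t'\}$ with $t'\geq t$, the certificate bound gives $d_T\geq t'/(b\sqrt{c(\mathbb{M}[g]+t')})$, and this quantity is increasing in $t'$, so its minimum over $t'\geq t$ is attained at $t'=t$. That is why the denominator comes out as $4b^2c(\mathbb{M}[g]+t)$.

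For the lower tail you have overcomplicated things, and the ``obstacle'' you flag is self-inflicted. Instead of taking $A=\{g\geq\mathbb{M}[g]\}$ and then wrestling with $Y$-dependent certificates, take $A=\{g<\mathbb{M}[g]-t\}$ (the event you want to bound) and argue on the complementary side. For any $\mathcal{C}_{\mathrm{in}}$ with $g(\mathcal{C}_{\mathrm{in}})\geq\mathbb{M}[g]$, use the certificate $J(\mathcal{C}_{\mathrm{in}})$ of $\mathcal{C}_{\mathrm{in}}$ (not of $Y$): for any $Y\in A$ the spliced codebook agreeing with $\mathcal{C}_{\mathrm{in}}$ on $J(\mathcal{C}_{\mathrm{in}})$ has $g$-value at least $g(\mathcal{C}_{\mathrm{in}})\geq\mathbb{M}[g]$, forcing at least $t/b$ disagreements inside $J(\mathcal{C}_{\mathrm{in}})$, with $|J(\mathcal{C}_{\mathrm{in}})|\leq c\,g(\mathcal{C}_{\mathrm{in}})$. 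The same monotonicity computation as above shows $d_T(\mathcal{C}_{\mathrm{in}},A)\geq t/(b\sqrt{c\,\mathbb{M}[g]})$ uniformly over $\{g\geq\mathbb{M}[g]\}$, a set of probability at least $1/2$. Talagrand's inequality then yields $\mathbb{P}(A)\leq 2\exp\{-t^2/(4b^2c\,\mathbb{M}[g])\}$ directly, with no need for the convex-hull dual or $Y$-dependent weights.
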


  \section{Proof of Lemma \ref{thm:LD_lb}} \label{sec:proof_LD_lb}
  For $\bm{y} \in \{0,1\}^n$, define $g_{\bm{y}}(\mathcal{C}_{\mathrm{in}}) = |\mathcal{C}_{\mathrm{in}} \cap \mathcal{B}_{pn}(\bm{y})|$. Our goal is to show that $g_{\bm{y}}$ is strongly concentrated around its expectation $\mathbb{E}_{\mathcal{C}_{\mathrm{in}}}[g_{\bm{y}}]$. Note the following: $g_{\bm{y}}$ is 1-Lipschitz and $J(\mathcal{C}_{\mathrm{in}}) = \{ k \in [2^{\rho n}]: \mathcal{C}_{\mathrm{in}}(k) \in \mathcal{B}_{pn}(\bm{y}) \}$ is a certificate of $g_{\bm{y}}(\mathcal{C}_{\mathrm{in}})$ where it follows that $g_{\bm{y}}$ is $1$-certifiable.
  
  Since $g_{\bm{y}}(\mathcal{C}_{\mathrm{in}})$ is a binomial random variable, the value $\text{floor}(\mathbb{E}_{\mathcal{C}_{\mathrm{in}}}[g_{\bm{y}}])$ is a median. Set $\mathbb{M}[g_{\bm{y}}] = \text{floor}(\mathbb{E}_{\mathcal{C}_{\mathrm{in}}}[g_{\bm{y}}])$. Note that $R < 1 - H(p)$ implies that $\mathbb{E}_{\mathcal{C}_{\mathrm{in}}}[g_{\bm{y}}]$ (which is equal to $\sum_{i = 1}^{2^{Rn}} \mathbb{P}_{\mathcal{C}_{\mathrm{in}}}( x_i \in \mathcal{B}_{pn}(\bm{y}) ) \leq 2^{(R - 1 + H(p))n}$) is going to zero in $n$. It follows that for large enough $n$, $\mathbb{M}[g_{\bm{y}}] = 0$.
  
  By Lemma \ref{thm:Talagrand}, for $\ell > 0$ the probability that $g_{\bm{y}} > \ell$ is at most $2^{- \log(e) \frac{\ell}{4}+1}$. In conclusion, $\mathbb{P}_{\mathcal{C}_{\mathrm{in}}}(\exists \bm{y} \in \{0,1\}^n \text{ s.t. } g_{\bm{y}}(\mathcal{C}_{\mathrm{in}}) > \ell) = \mathbb{P}_{\mathcal{C}_{\mathrm{in}}}(\cup_{\bm{y} \in \{0,1\}^n} \{g_{\bm{y}}(\mathcal{C}_{\mathrm{in}}) > \ell \}) < 2^n2^{- \log(e) \frac{\ell(n)}{4}+1}$.
 
  \section{Proof of Lemma \ref{thm:size_AandC}} \label{sec:proof_size_AandC}
  
  Define $g(\mathcal{C}_{\mathrm{in}}) = |\mathcal{A} \cap \mathcal{C}_n|$. Note the following: $g(\cdot)$ is 1-Lipshitz and $J(\mathcal{C}_{\mathrm{in}}) = \{m \in [2^{Rn}]: \mathcal{C}_{\mathrm{in}} \circ \mathcal{C}_{\mathrm{out}}(m) \in \mathcal{A} \}$ is a certificate of $g(\mathcal{C}_{\mathrm{in}})$ where it follows that $g(\mathcal{C}_{\mathrm{in}})$ is 1-certifiable.
  
  Since $g(\mathcal{C}_{\mathrm{in}})$ is a binomial random variable, the expected value $\mathbb{E}_{\mathcal{C}_{\mathrm{in}}}[g]$ is a median. Set $\mathbb{M}[g] = \mathbb{E}_{\mathcal{C}_{\mathrm{in}}}[g] = 2^{-(1-R)n}|\mathcal{A}|$. The desired result follows from Lemma \ref{thm:Talagrand}.
   
  \section{Proof of Lemma \ref{thm:circuit_cmplx}} \label{sec:proof_circuit_cmplx}
  
  Let $W$ be the number of functions of the form $g_n:\{0,1\}^n \rightarrow \{0,1\}$ that can be computed by a Boolean circuit (of $n$ inputs and $1$ output) of size $cn^s$. We first show that $W < 2^{2(s+1)n^{s+1}}$. 
  
  Note that each gate can compute one of $16$ unique functions from $\{0,1\}^2$ to $\{0,1\}$. Furthermore, for a given gate, the number of ways to choose $2$ gate inputs from $n$ circuit inputs, $cn^s-1$ gate outputs, and 2 constant inputs (i.e., $0$ and $1$) is bounded above by $(n+cn^s+1)^2$. It follows that $W$ is bounded above by $(16 (n+cn^s+1)^2)^{cn^s}$ which in turn, for large enough $n$, is bounded above by $(n^{s+1})^{c2n^s} = 2^{2c(s+1)n^s \log n}$. Done. 
  
  We now prove Lemma \ref{thm:circuit_cmplx}. Any function in $\mathcal{F}_{n,r}$ that is computable by a Boolean circuit (of $n$ inputs and $rn$ outputs) of size $c n^s$ can be computed by some $r n$ Boolean circuits (of $n$ inputs and $1$ output) each of size $c n^s$. Hence, the number of functions in $\mathcal{F}_{n,r}$ that can be computed by a Boolean circuit (of $n$ inputs and $rn$ outputs) of size $cn^s$ is bounded above by $W^{rn}$. We finish the proof by observing that $W^{rn}$ is smaller than $2^{n^{s+2}}$ for large enough $n$.

  \rvv{\section*{Acknowledgement}

  We would like to thank the anonymous reviewers and the associate editor for the various constructive feedback during the review process, including the alternative proof of Theorem~\ref{thm:main_result} in Appendix~\ref{sec:alt_proof}, the ideas of the proof in Section \ref{sec:conc} confirming $C(p,r,\infty,\infty)<C(p,r,c,s)$ for some range of $r< 1-H(p)$, and ideas of the stochastic code construction in Section \ref{sec:main_result} for achieving rates above the GV bound for some range of $r > 1-H(p)$.}

\bibliographystyle{IEEEtran}
\bibliography{refs}

\IEEEpeerreviewmaketitle

\end{document}